\newtheorem{theorem}{Theorem}[section]
\newtheorem{LMA}[theorem]{Lemma}
\def\squarebox#1{\hbox to #1{\hfill\vbox to #1{\vfill}}}
\def\qed{\hspace*{\fill}%
        \vbox{\hrule\hbox{\vrule\squarebox{.667em}\vrule}\hrule}\smallskip}
\newenvironment{proof}{\begin{trivlist}
\item[\hspace{\labelsep}{\bf\em\noindent Proof.~}]}{\qed\end{trivlist}}
\begin{document}

\title{
	Improved Bounds for Online Dominating Sets of Trees
}

\author{
	Koji M. Kobayashi
}

\date{}

\setlength{\baselineskip}{4.85mm}
\maketitle

\begin{abstract}
	\ifnum \count10 > 0
	%
	%
	%
	%
	$V$
	$E$
	%
	$G$
	$\{ u, v \} \in E$
	%
	%
	%
	%
	%
	%
	%
	%
	%
	Eidenbenz (Technical report, Institute of Theoretical Computer Science, ETH Z\"{u}rich 2002) and 
	Boyar
	%
	
	%
	Boyar
	%
	
	%
	\fi
	\ifnum \count11 > 0
	%
	%
	The online dominating set problem is an online variant of the minimum dominating set problem, 
	which is one of the most important NP-hard problems on graphs. 
	This problem is defined as follows: 
	Given an undirected graph $G = (V, E)$, 
	in which $V$ is a set of vertices and 
	$E$ is a set of edges. 
	We say that 
	a set $D \subseteq V$ of vertices is a {\em dominating set} of $G$ 
	if for each $v \in V \setminus D$, 
	there exists a vertex $u \in D$ such that $\{ u, v \} \in E$. 
	The vertices are revealed to an online algorithm one by one over time. 
	When a vertex is revealed, 
	edges between the vertex and vertices revealed in the past are also revealed. 
	A revealed subtree is connected at any time. 
	Immediately after the revelation of each vertex, 
	an online algorithm can irrevocably choose vertices which were already revealed 
	and must maintain a dominating set of a graph revealed so far. 
	The cost of an algorithm on a given tree is the number of vertices chosen by it, 
	and its objective is to minimize the cost. 
	Eidenbenz (Technical report, Institute of Theoretical Computer Science, ETH Z\"{u}rich, 2002) and 
	Boyar et~al.\ (SWAT 2016) studied the case 
	in which given graphs are trees. 
	They designed a deterministic online algorithm whose competitive ratio is at most three, 
	and proved that 
	a lower bound on the competitive ratio of any deterministic algorithm is two. 
	In this paper, 
	we also focus on trees.
	We establish a matching lower bound for any deterministic algorithm. 
	Moreover, 
	we design a randomized online algorithm whose competitive ratio is exactly $5/2 = 2.5$, 
	and show that the competitive ratio of any randomized algorithm is at least $4/3 \approx 1.333$. 
	\fi
\end{abstract}
%


\section{Introduction} \label{Intro}
\ifnum \count10 > 0
%
%
%
$V$
$E$
%
$\{ u, v \} \in E$
%
%
%
%
%

%
%
%
%
%
%
%
$U$
%
%
{\em 
%
$\sigma$
%
%
$C_{ON}(\sigma) \leq c C_{OPT}(\sigma)$
$ON$
%
%
%
$ON$
$ON$
\fi
\ifnum \count11 > 0
%
%
The dominating set problem is one of the most important NP-hard problems on graphs.
This problem is defined as follows: 
Given an undirected graph $G = (V, E)$, 
in which $V$ is a set of vertices and $E$ is a set of edges. 
We say that a set $D \subseteq V$ of vertices is a {\em dominating set} of $G$ 
if for each vertex $v \in V \setminus D$, there exists a vertex $u \in D$ such that $\{ u, v \} \in E$. 
The objective of the problem is to construct a minimum dominating set. 
This problem has been extensively studied for many applications, 
such as communication in ad-hoc networks (see e.g., \cite{YP2005}) and  facility location on networks (e.g., \cite{FP2001}).
The dominating set problem has also been studied in online settings \cite{KT1997,ES2002,BEFKL2016}. 
In one of the settings \cite{ES2002,BEFKL2016}, 
vertices are revealed to an online algorithm one by one, and 
edges between a revealed vertex and vertices revealed in the past are also revealed. 
The {\em input} of this setting is an undirected graph and a sequence consisting of all the vertices of the graph. 
(This sequence represents an order of the vertices revealed to an online algorithm.) 
An online algorithm holds the empty set $U$ at the beginning. 
When a new vertex is revealed, 
the algorithm can add vertices revealed so far to $U$, 
which means that an added vertex is not necessarily the newly revealed one. 
The algorithm must not remove a vertex from $U$. 
The total number of vertices is not known to an online algorithm before the final vertex is revealed. 
Thus, 
$U$ must be a dominating set immediately after the revelation of each vertex. 
The performance of online algorithms is evaluated using {\em competitive analysis} \cite{AB98,DS85}. 
The {\em cost} of an algorithm $ALG$ for an input $\sigma$ is the size of a dominating set constructed by $ALG$ for $\sigma$,  
which is denoted as $C_{ALG}(\sigma)$. 
We say that the (strict) competitive ratio of an online algorithm $ON$ is at most $c$ or 
$ON$ is {\em $c$-competitive} 
if for any input $\sigma$, 
$C_{ON}(\sigma) \leq c C_{OPT}(\sigma)$, 
in which $OPT$ is an optimal offline algorithm for $\sigma$. 
If $ON$ uses randomization, 
the expected cost of $ON$ is used. 
\fi
\ifnum \count10 > 0
%
%
\noindent
{\bf 
Eidenbenz\cite{ES2002}
Boyar
%

%
%
(i)
%
(ii)
%
%
%
(iii)
%
%
(i)
%

%
\fi
\ifnum \count11 > 0
%
%
\noindent
{\bf Previous Results and Our Results.}~
For trees, 
Eidenbenz \cite{ES2002} and Boyar et~al.\ \cite{BEFKL2016} designed a 3-competitive deterministic algorithm, 
and proved that the competitive ratio of any deterministic online algorithm is at least two 
(Boyar et~al.\ showed their results in terms of asymptotic competitive ratios, 
but the results can hold for strict competitive ratios as well). 
In this paper, 
we show the following three results for trees: 
(i) 
We prove that a lower bound on the competitive ratio of any deterministic algorithm is three. 
This bound matches the above upper bound. 
(ii)
We establish a randomized online algorithm whose competitive ratio is exactly $5/2 = 2.5$. 
This algorithm is the first non-trivial randomized algorithm for the online dominating set problem for any graph class. 
%
(iii)
We show that the competitive ratio of any randomized algorithm is at least $4/3 \approx 1.333$. 
The above results are shown with respect to the {\em strict} competitive ratio. 
However, 
it is easy to see that 
the same results for the {\em asymptotic} competitive ratios as (i) and (iii) can be shown in a quite similar way to their proofs. 
(Note that any upper bound on the strict competitive ratio is an upper bound on that on the asymptotic competitive ratio.
That is, (ii) holds for the asymptotic competitive ratio.)
\fi
\ifnum \count10 > 0
%
%
\noindent
{\bf 
Eidenbenz\cite{ES2002}
Boyar
%
%
%
%
Boyar
%
%
%

%
King
1
%
%
%

%
%
%
%
%
%
%

%
%

%

%
\fi
\ifnum \count11 > 0
%
%
\noindent
{\bf Related Results.}~
For several graph classes, 
Eidenbenz~\cite{ES2002} and Boyar et~al.\ \cite{BEFKL2016} studied online algorithms of a few variants of dominating sets, 
namely, connected dominating sets, total dominating sets and independent dominating sets. 
Their results are summarized 
in the table in Sec.~6 of \cite{ES2002} and Table~2 in Sec.~1 of \cite{BEFKL2016}. 
For example, 
they proved that the optimal competitive ratios on a bipartite graph and a planar graph are $n-1$, 
in which $n$ is the number of given vertices. 
Boyar et~al.\ \cite{BEFKL2016} defined an {\em incremental} algorithm as an algorithm 
which maintains a dominating set immediately after a new vertex is revealed. 
An online algorithm is incremental, 
but an optimal incremental algorithm knows the whole input and can perform better than any online algorithm. 
They measured the performance of online algorithms compared with an optimal {\em incremental} algorithm in addition to an optimal offline algorithm. 
Moreover, 
they compared the performance of an optimal incremental algorithm with that of an optimal offline algorithm for several graph classes, 
which is also summarized in Table~1 in Sec.~1 of \cite{BEFKL2016}. 
King and Tzeng~\cite{KT1997} studied two different variants of online dominating sets on general graphs. 
One variant is the same as the one studied in this paper, 
except that immediately after a new vertex is revealed, 
an online algorithm can choose the new one 
but cannot choose vertices revealed previously. 
In this setting, they designed a deterministic algorithm whose competitive ratio is at most $n-1$, 
and proved that the algorithm is the best possible. 
In the other variant,  
an online algorithm knows all vertices in advance, 
and at a time $i$, all the edges between the $i$-th vertex $v_{i}$ and the other vertices are revealed. 
They showed an upper bound of $3 \sqrt{n}/2$ and a lower bound of of $\sqrt{n}$ for this variant. 
For the offline setting, 
the minimum dominating set problem is one of the most significant $NP$-hard problems on graphs and has been widely studied. 
One of the most important open problems is to develop exact (exponential) algorithms 
(see, e.g. \cite{FKW2004,FGK2005,GF2006,SI2008,RB2011,IY2011}). 
The current fastest algorithm solves this problem in $O(1.4864^n)$ time and polynomial space \cite{IY2011}. 
Moreover, 
many variants have been proposed by putting additional constraints on the original dominating set problem and 
have been extensively studied: 
for example, 
connected domination, independent domination and total domination
(see, .e.g. \cite{DW2013},\cite{GH2013} and \cite{HY2013}, respectively). 
\fi
%

\section{Preliminaries} \label{sec:2}
\subsection{Model Description} \label{sec:model}
\ifnum \count10 > 0
%
%
%
%
$i$
$v_i$
%
%
%
$V$
$E$
$S$
%
%
%
%
%
%
%
%
%
%
$ALG$
$C_{ON}(\sigma) \leq c C_{OPT}(\sigma)$
(${\mathbb E}[C_{ON}(\sigma)] \leq c C_{OPT}(\sigma)$)
%

%
%

%
\fi
\ifnum \count11 > 0
%
%
We are given an undirected tree and 
its vertices are revealed to an online algorithm one by one over time. 
The total number of the vertices is not known to the online algorithm up to the end of the input. 
When the $i$-th vertex $v_i$ is revealed to the online algorithm, 
all the edges between $v_i$ and $v_j$ such that $j < i$ are also revealed. 
Except for the first revealed vertex, 
a newly revealed vertex has exactly one edge to a vertex revealed previously. 
That is, 
a revealed subtree is connected at any time. 
An {\em input} of the problem is a three-tuple of the form $(V, E, S)$, 
in which 
$V$ is the set of all the vertices of a given tree, 
$E$ is the set of all the undirected edges of the tree, and 
$S$ is a sequence consisting of all the vertices in $V$. 
$S$ represents an order of the vertices revealed to an online algorithm. 
An algorithm has the empty set $U$ before the first vertex is revealed. 
The algorithm can add vertices into $U$ immediately after the revelation of each vertex, 
and it is necessary for $U$ to be a dominating set of the given tree at the end of the input.
If the algorithm is online, 
it does not know when the input has ended, 
and thus 
$U$ must be a dominating set immediately after each vertex is revealed. 
Once a vertex is added into $U$, 
it must not be removed from $U$ later. 
The {\em cost} of the algorithm for an input $\sigma$ is 
the number of vertices in $U$ at the end of $\sigma$, 
and the objective of the problem is to minimize the cost. 
We evaluate the performance of an online algorithm using competitive analysis. 
We say that 
the {\em competitive ratio} of a deterministic online algorithm $ON$ is at most $c$ 
if for any input $\sigma$, 
$C_{ON}(\sigma) \leq c C_{OPT}(\sigma)$. 
If $ON$ is a randomized online algorithm, 
then the expected cost of $ON$ is used, 
which is denoted by ${\mathbb E}[C_{ON}(\sigma)]$. 
If for any input $\sigma$, 
${\mathbb E}[C_{ON}(\sigma)] \leq c C_{OPT}(\sigma)$, 
then we say that the competitive ratio of a randomized online algorithm $ON$ is at most $c$ against any oblivious adversary. 
If the number of vertices in a given tree is one, 
the cost ratio of any algorithm is clearly one. 
Thus, we assume that this number is at least two. 
\fi
%

\subsection{Notation and Definitions}\label{sec:notation}

\ifnum \count10 > 0
%
%
%
%
%
%
%
%
$\{ v, u \} \in E$%
2
%
%
$v$
$v$
%
%
$D_{ALG}(v)$
$v$
$ALG$
%
%
%
$D(\sigma)_{ALG}$
%
$\sigma$
%
%
$A$
%
%
${deg}_{u}(v)$
%
%
${deg}(v)$
%
%
$u$
$u$
%
%
$U$
%
%
%
{\em $U$
\fi
\ifnum \count11 > 0
%
%
In this section, 
we give some definitions and notation used throughout this paper. 
For any $i (= 1, 2, \ldots)$, 
we use $v_{i}$ to denote the $i$-th revealed vertex to an online algorithm 
(the first revealed vertex $v_{1}$ appears frequently in this paper). 
We say that vertices $v$ and $u$ are {\em adjacent} 
if $\{ v, u \} \in E$, 
in which $E$ is the set of all the edges of a given graph. 
When a vertex $v$ is revealed such that 
$v$ is adjacent to a vertex $u$ which was revealed before $v$, 
then we say that $v$ {\em arrives} at $u$. 
For any vertex $v$ and any online algorithm $ON$, 
$D_{ON}(v)$ denotes a dominating set constructed by $ON$ of a revealed graph up to the time of the revelation of $v$. 
We will omit $ON$ from the notation when it is clear from the context. 
For an algorithm $ALG$ including an offline algorithm, 
$D_{ALG}(\sigma)$ denotes a dominating set constructed by $ALG$ after the end of the input $\sigma$. 
We will omit $\sigma$ from the notation when it is clear from the context. 
For a vertex $v$, 
we say that $ALG$ {\em selects} $v$ 
if $v \in D_{ALG}$. 
For vertices $u$ and $v$ such that $u$ is revealed after $v$, 
${deg}_{u}(v)$ denotes the degree of $v$ immediately after $u$ is revealed. 
${deg}(v)$ denotes the degree of $v$ after the end of the input. 
For a vertex $v$ and a vertex $u$ revealed after $v$, 
we say that $u$ is a {\em descendant} of $v$ 
if any vertex on the simple path from $v$ to $u$ is revealed after $v$. 
The {\em cost} of a deterministic algorithm $ALG$ for a vertex set $U$ is the number of vertices selected by $ALG$ in $U$. 
That is, it is the number of vertices in $U \cap D_{ALG}$. 
Moreover, if $U$ contains only one vertex, 
then we simply say the {\em cost for the vertex}. 
In the same way, 
we use the term ``the expected cost of $ALG$ for $U$ (or a vertex)''
if $ALG$ is a randomized algorithm. 
\fi
%

\section{Deterministic Lower Bound} \label{sec:det}
\subsection{Overview of Proof} 
\ifnum \count10 > 0
%
%
%
%
%
%
%
%
%
%
%
%
$T$-set
$T$-set
%
%
2
%
%
$OPT$
$OPT$
%
$ON$
%

%
%
%
$\ell$ modulo $3 = i$
%
%
%
(1)
2
(2)
1
(3)
1
(4)
%
%
(1)
(3)
$ON$
(5) 
$T_{1}$-set
$T_{2}$-set
%
%
%
(1)
%
%
%
%
%
%
$u$
%
%
4
%
%
%

%
\fi
\ifnum \count11 > 0
%
%
We first outline an input to obtain our lower bound. 
The tree of the input is constructed according to two routines. 
The tree can be divided into several subtrees satisfying some properties and 
we evaluate the competitive ratio for each set of some subtrees. 
One of the routines appoints a vertex as the root to construct a subtree, 
which is called a {\em base vertex}. 
The other routine constructs several subtrees with at most two leaves, 
each of which arises from the base vertex. 
The set of all the vertices excluding the root in each of the subtrees is called a {\em $T$-set}. 
It depends on the behavior of an online algorithm $ON$ 
how many $T$-sets are constructed and 
how many leaves and inner vertices composing $T$-sets are. 
If a $T$-set contains two leaves, 
the leaves share the adjacent vertex. 
For each $T$-set, 
$OPT$ selects one vertex for every consecutive three vertices starting with the parent of a leaf in it. 
If the degree of a vertex selected by $OPT$ is two, 
$ON$ selects the vertex and the two adjacent vertices. 
Otherwise, that is its degree is at least three,
$ON$ selects at least three vertices from the vertex and all the adjacent vertices. 
Let us explain the proof more in detail. 
If a $T$-set contains sufficiently many inner vertices, 
it is called a {\em $T_{3}$-set}. 
Otherwise, 
a $T$-set such that $\ell$ modulo $3 = i$ is called a {\em $T_{i}$-set}, 
in which $\ell$ is the length from the base vertex to a leaf in the $T$-set. 
One of the routines tries to force $ON$ to construct one of the following four sets of $T$-sets from a base vertex (Fig.~\ref{fig:tsetex}): 
(1)
a set of two $T_{1}$-sets and at least zero $T_{0}$-set, 
(2)
a set of one $T_{2}$-set and at least zero $T_{0}$-set, 
(3)
a set of one $T_{3}$-set, at most one $T_{1}$-set and at least zero $T_{0}$-set, 
and 
(4)
a set of sufficiently many $T_{0}$-sets and at most one $T_{1}$-set. 
The cost ratios of these $T$-sets are three for (1) or (2) and approximately three for (3) or (4), respectively. 
$ON$ can construct none of these sets. 
Namely, 
(5) 
$ON$ constructs one $T_{1}$-set and then 
does one $T_{2}$-set 
(Further, 
$ON$ may also construct $T_{0}$-sets). 
In this case, 
the routine partitions the $T_{2}$-set into a vertex $u$, a $T_{1}$-set and a $T_{0}$-set 
(Fig.~\ref{fig:dlow}). 
This $T_{0}$-set and all the $T$-sets in (5) except for the partitioned $T_{2}$-set compose a set of $T$-sets of (1). 
Then, 
the routine finishes constructing a subtree from the current base vertex, 
whose cost ratio is three, 
and appoints $u$ as a new base vertex. 
One $T_{0}$-set, which is constructed from the above partition of the $T_{2}$-set, 
belongs to the new base vertex $u$. 
Since the set of $T$-sets of $u$ is not classified into any of the above four categories, 
the routine continues to construct subtrees for $u$. 
This is how the routine tries to construct one of the four sets of $T$-sets for all base vertices 
and to achieve a lower bound of (approximately) three. 
Therefore, 
we have the following theorem: 
\fi
\begin{theorem}\label{thm:det_low}
	\ifnum \count10 > 0
	%
	%
	%
	
	%
	\fi
	\ifnum \count11 > 0
	%
	%
	For any $\varepsilon > 0$, 
	the competitive ratio of any deterministic online algorithm is at least $3 - \varepsilon$. 
	\fi
\end{theorem}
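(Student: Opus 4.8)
The plan is to build an adaptive adversarial input that reveals the tree vertex-by-vertex, maintaining an invariant that for the portion of the tree already committed, $C_{ON}$ is at least $(3-\varepsilon)$ times $C_{OPT}$, and then finishing the construction so this ratio holds for the whole input. Following the overview, I would organize the construction around \emph{base vertices}: starting from $v_1$ as the first base vertex, the adversary runs the subtree-building routine from a base vertex, which grows several paths (the future $T$-sets) emanating from it. The key design choice is that the adversary extends a path one vertex at a time and watches whether $ON$ selects the newly revealed vertex; this lets the adversary steer the structure. Whenever $ON$ commits to a vertex $w$ of degree $2$, the adversary can exploit that $ON$ must also dominate $w$'s two neighbours (forcing, in a length-3 stretch, essentially $3$ selections against $OPT$'s $1$); whenever $ON$ commits a vertex of degree $\ge 3$, $ON$ is spending $\ge 3$ on a local gadget where $OPT$ spends $1$; and whenever $ON$ is reluctant to select, the adversary can keep pushing the path and then reveal a leaf, so $ON$ is forced to have selected roughly one-in-three vertices anyway — but the adversary arranges the remainders modulo $3$ (the $T_i$-set classification) so that $OPT$ does strictly better by the right ratio.

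The technical heart is the case analysis at a base vertex, matching the four target configurations in the overview: (1) two $T_1$-sets plus some $T_0$-sets, (2) one $T_2$-set plus some $T_0$-sets, (3) one $T_3$-set, at most one $T_1$-set, plus some $T_0$-sets, (4) many $T_0$-sets and at most one $T_1$-set. For each of these I would compute $C_{ON}$ and $C_{OPT}$ restricted to the $T$-sets of that base vertex and check the ratio is $3$ (cases (1),(2)) or tends to $3$ as the number of $T_0$-sets grows, or as the $T_3$-set lengthens (cases (3),(4)); the role of the $T_0$-sets is precisely to be "cheap padding" whose per-set ratio is exactly $3$, so appending enough of them washes out any additive slack. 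The delicate point is that $ON$ may refuse to land in any of these four buckets, exhibiting instead the escape behaviour (5): it builds a $T_1$-set and then a $T_2$-set. Here I would use the partition trick from the overview: cut the offending $T_2$-set into a single vertex $u$, a $T_1$-set, and a $T_0$-set; the $T_1$-set and $T_0$-set, together with the already-finished $T$-sets of the current base vertex, form a configuration of type (1), so the current base vertex is "closed out" with ratio $3$; then promote $u$ to a new base vertex which already owns one $T_0$-set, and recurse. I must check that this recursion terminates — i.e., the adversary does not loop forever creating base vertices without ever finishing — which follows because each time we either close a base vertex with a bounded-size, good-ratio subtree or we are in a situation where $ON$'s selections are already dense enough to stop; I would cap the total size of the instance and argue the accumulated ratio over all closed base vertices is $\ge 3-\varepsilon$ by choosing the $T_0$-padding parameter large in terms of $\varepsilon$.

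Concretely the steps, in order, are: (a) fix $\varepsilon>0$ and a padding parameter $N = N(\varepsilon)$; (b) describe the two routines precisely — \textsc{AppointBase} and \textsc{BuildSubtree} — including the exact rule for how far a path is extended before a leaf is forced, keyed to whether $ON$ has selected the growing end; (c) prove the local lemma that each closed base vertex contributes a sub-instance with $C_{ON} \ge (3-\varepsilon)\,C_{OPT}$ on its $T$-sets, by the four-case computation plus the escape-case reduction to case (1); (d) prove termination and that the global cost is the sum of the per-base contributions (the $T$-sets of distinct base vertices are vertex-disjoint except for the shared base/promoted vertex, which must be accounted for — assign each shared vertex's $OPT$-cost once and note $ON$ pays at least as much); (e) conclude $C_{ON}(\sigma) \ge (3-\varepsilon)\,C_{OPT}(\sigma)$ for the resulting $\sigma$, which proves the theorem. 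I expect the main obstacle to be (c)–(d): getting the bookkeeping exactly right at the seams between base vertices — in particular ensuring that the vertex $u$ promoted from a partitioned $T_2$-set is not double-counted and that $OPT$'s choice of "one-in-three starting from the parent of a leaf" really is simultaneously optimal (or near-optimal) across all the glued $T$-sets, so that $C_{OPT}$ is not secretly smaller than the sum of the local optima. Handling the degree-$\ge 3$ branch vertices (where $T$-sets meet the base vertex) cleanly within the "one-in-three" $OPT$ pattern is where I would spend the most care.
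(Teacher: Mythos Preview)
Your proposal is correct and follows essentially the same approach as the paper: the paper defines exactly the two routines you describe ({\sc SubtreeRoutine} grows a path from a base vertex until $ON$ selects the tip, then attaches one extra leaf; {\sc TreeRoutine} classifies the resulting $T$-set by length mod $3$, accumulates $T_0$- and $T_1$-sets, and in the escape case partitions the $T_2$-set and promotes $u_2$ to a new base vertex), and the four target configurations, the escape case, and the per-base-vertex cost computations you outline match the paper's Cases~3.1--3.4 almost verbatim. One small correction to your final worry: for a lower bound you only need an \emph{upper} bound on $C_{OPT}$, which the explicit one-in-three offline algorithm $OFF$ provides; if the true $C_{OPT}$ were ``secretly smaller'' than $C_{OFF}$ that would only help you, so the concern you flag about global optimality of the one-in-three pattern is in the wrong direction and need not be addressed.
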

\ifnum \count12 > 0
\begin{figure*}
	 \begin{center}
	  \includegraphics[width=160mm]{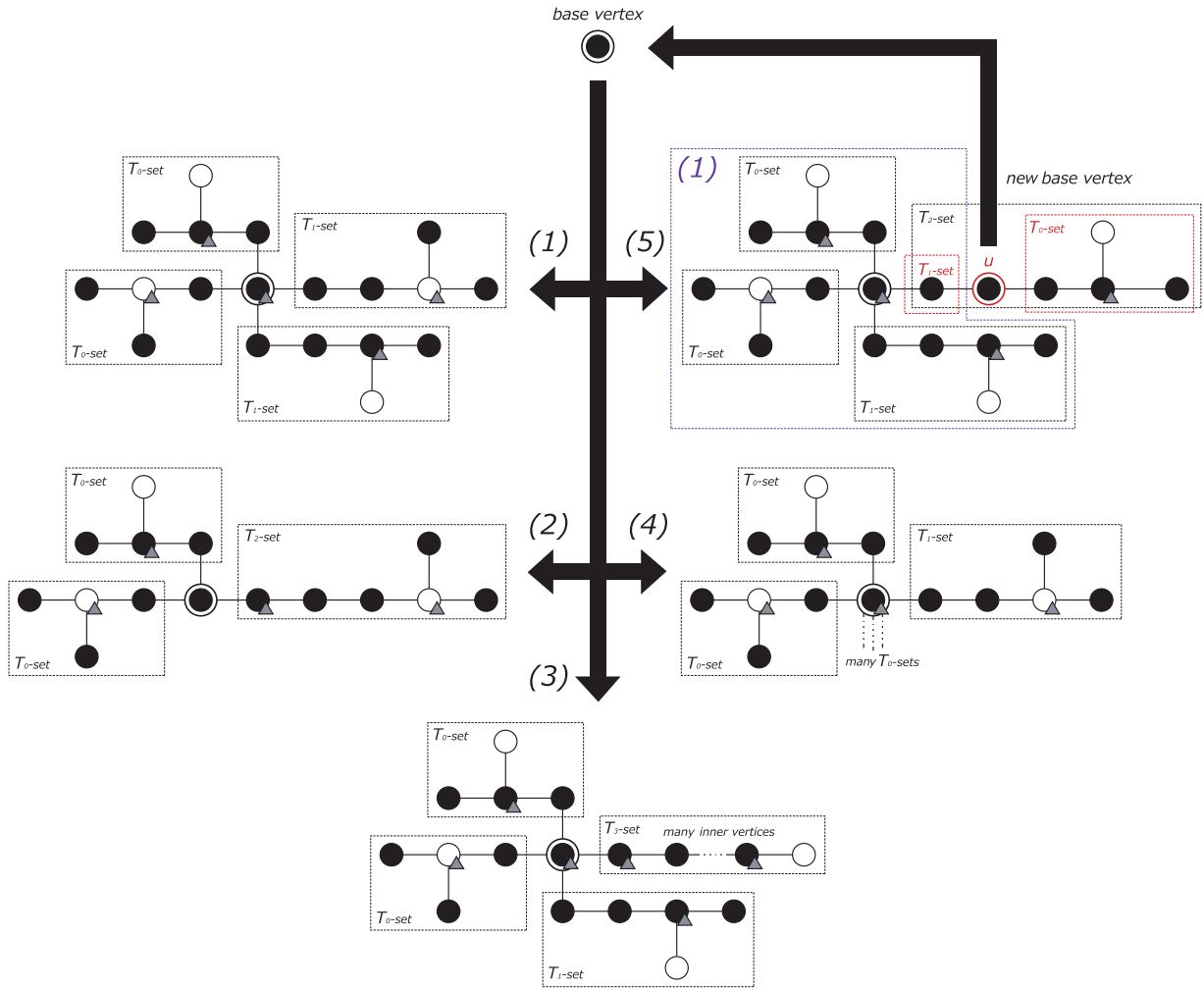}
	 \end{center}
	 \caption{
\ifnum \count10 > 0
%
%
(1)
1
%
(5)
$u$
$u$
\fi
\ifnum \count11 > 0
%
%
An example of the five sets of $T$-sets from (1) to (5). 
Highlighted vertices denote base vertices. 
Black vertices denote vertices selected by $ON$. 
Vertices with a gray triangle denote vertices selected by $OPT$. 
One of the five sets of $T$-sets is constructed for a base vertex. 
If (5) is constructed, 
the $T_{2}$-set in the set is partitioned into a new base vertex $u$, 
a $T_{1}$-set and a $T_{0}$-set. 
After that, 
the routines force $ON$ to construct one of the five sets of $T$-sets for $u$ recursively. 
$u$ is not dominated by $OPT$ yet, but is dominated later. 
\fi
			}
	\label{fig:tsetex}
\end{figure*}
\fi

\subsection{Proof of Theorem~\ref{thm:det_low}} 
\ifnum \count10 > 0
%
%
%
%
$ON$
%
$ON$
%
%
%
%
%
%
%
%
%
$ON$
%
%
%

%
\fi
\ifnum \count11 > 0
%
%
Consider a deterministic online algorithm $ON$ for the input $\sigma$, 
which will be defined below. 
Without loss of generality, 
we assume that 
$ON$ does not select any vertex 
when a vertex arrives at a selected vertex. 
Moreover, 
we assume that 
when a vertex arrives at a vertex which is not selected, 
$ON$ selects only one of the two vertices. 
First we define a routine to construct a subinput. 
When this routine is executed, 
it takes a vertex which has already been revealed to $ON$ as a parameter. 
We call such a vertex a {\em base vertex}
(whose formal definition will be given later). 
Let us outline this routine. 
The routine constructs a subtree with at most two leaves from a base vertex, 
which is the root of this subtree. 
The routine continues increasing inner vertices of the subtree by the time $ON$ selects a newly revealed vertex. 
When $ON$ selects it, 
the routine sets it to be a leaf and reveals another leaf. 
{\tt MaxLength} is a sufficiently large positive integer 
such that {\tt MaxLength} modulo $3 = 2$, 
which is used in the routine. 
\fi
\ifnum \count10 > 0
%
%
\noindent\vspace{-1mm}\rule{\textwidth}{0.5mm} 
\vspace{-3mm}
{{\sc SubtreeRoutine}(
\rule{\textwidth}{0.1mm}
%
	%
	%
	{\bf\boldmath Step~T1:} 
		%
		$u_{0} := v, j := 0$. \\
	{\bf\boldmath Step~T2:} 
		%
		$j =$ {\tt MaxLength}
		$j := j + 1$. \\
%
	{\bf\boldmath Step~T3:} 
		%
		$u_{j-1}$
%
%
\hspace*{2mm}
		{\bf\boldmath Case~T3.1 ($ON$
		%
%
				$u_{j-1}$
\hspace*{2mm}
		{\bf\boldmath Case~T3.2 ($ON$
		%
%
			Step~T2
\noindent\vspace{-1mm}\rule{\textwidth}{0.5mm} 
\fi
\ifnum \count11 > 0
%
%
%
\noindent\vspace{-1mm}\rule{\textwidth}{0.5mm} 
\vspace{-3mm}
{{\sc SubtreeRoutine}(vertex $v$)}\\
\rule{\textwidth}{0.1mm}
%
	%
	%
	{\bf\boldmath Step~T1:} 
		%
		$u_{0} := v$ and $j := 0$. \\
	{\bf\boldmath Step~T2:} 
		%
		If $j =$ {\tt MaxLength}, then finish. 
		$j := j + 1$. 
%
	{\bf\boldmath Step~T3:} 
		%
		A vertex $u_{j}$ arrives at a vertex $u_{j-1}$. 
		Execute one of the following two cases. \\
\hspace*{2mm}
		{\bf\boldmath Case~T3.1 ($ON$ selects $u_{j}$):} 
		%
%
				A vertex $u'_{j-1}$ arrives at $u_{j-1}$, and finish. \\
\hspace*{2mm}
		{\bf\boldmath Case~T3.2 ($ON$ does not select $u_{j}$):} 
		%
%
			Go to Step~T2. \\
\noindent\vspace{-1mm}\rule{\textwidth}{0.5mm} 
\fi
\ifnum \count10 > 0
%
%
$ON$
Case~T3.1
%
Case~T3.2
$ON$
%
%
%
$v$
$u_{i}$
$u_{t-1}$
(i) 
${g}(v, a) = \{ u_{1}, u_{2}, \ldots, u_{t}, u'_{t-1} \}$
%
(ii) 
$u_{x}$
$x \leq t-2$
${g}(v, a) = \{ u_{1}, u_{2}, \ldots, u_{x-1} \}$
$x \geq t-1$
${g}(v, a) = \{ u_{1}, u_{2}, \ldots, u_{t-1}, u'_{t-1} \}$
%
%
$u'_{t-1}$
${g}(v, a)$
%
(ii)
(ii)
%
%
{\em $v$
%
%
$\ell(v, a) = \max\{ i \mid u_{i} \in {g}(v, a) \}$
$\ell(v, a)$
$g(v, a)$
(i) $\ell(v, a) = {\tt MaxLength}$
(ii) $\ell(v, a) < {\tt MaxLength}$
(iii) $\ell(v, a) < {\tt MaxLength}$
(vi) $\ell(v, a) < {\tt MaxLength}$
%
%
${g}(v, a)$
%

%
\noindent
{\bf $T_{3}$-set
$\ell(v, a) = {\tt MaxLength}$
$u_{i} \hspace*{1mm} (i = 1, 2, \ldots, \ell(v, a)-2)$
$u_{\ell(v, a)-1}$
$D_{ON}(u_{\ell(v, a)})$
%

%
\noindent
{\bf $T_{0}$-set, $T_{1}$-set, $T_{2}$-set
$u_{i} \hspace*{1mm} (i = 1, 2, \ldots, \ell(v, a)-2, \ell(v, a))$
%

%
%
$OFF$
%
$T$-set
$OFF$
%

%
\noindent
{\bf $T_{0}$-set
$OFF$
\noindent
{\bf $T_{1}$-set
$OFF$
\noindent
{\bf $T_{2}$-set
$OFF$
\noindent
{\bf $T_{3}$-set
$OFF$
%

%
%
%
$OFF$%
$T_{1}$-set
$T_{0}$-set
%
%
%
%
%
%
{\sc TreeRoutine}
$v_{1}$
{\sc SubtreeRoutine}
%
Case~3.3.2
$T$-set
%
%
%
%
%
(1)$T_{1}$-set
(2)$T_{2}$-set
(3)$T_{3}$-set
(4)
%
%
$OFF$
%

%
%
{\tt MaxT}$_{0}$
\fi
\ifnum \count11 > 0
%
%
Note that 
$ON$ selects either $u_{j-1}$ or $u'_{j-1}$ in Case~T3.1 
because any online algorithm must select at least one of two consecutive vertices.
Also, 
note that $ON$ selects $u_{j-1}$ 
if it is not selected immediately before the execution of Case~T3.2 
because $D_{ON}(u_{j})$ must contain either $u_{j-1}$ or $u_{j}$. 
($u_{0}$ is already selected at the beginning of the routine, which is shown later.)
Suppose that the routine is executed with a vertex $v$ as the parameter 
and $j = t$ holds when the routine finishes. 
Also, suppose that 
the routine gives vertices $u_{i} \hspace*{1mm} (i \in [1, t])$ and $u'_{t-1}$ (if any) 
such that $u_{1}$ is the $a$-th vertex arriving at $v$, 
$u_{i+1}$ arrives at $u_{i}$, 
and $u'_{t-1}$ arrives at $u_{t-1}$, 
during the execution of the routine. 
Then, 
let us define the vertex set ${g}(v, a)$ as follows: 
(i) 
If for all $i \in [1, t]$, $u_{i}$ is not a base vertex, 
then 
${g}(v, a) = \{ u_{1}, u_{2}, \ldots, u_{t}, u'_{t-1} \}$. 
(ii) 
Suppose that for any $i \in [1, x-1]$, $u_{i}$ is not a base vertex
but $u_{x}$ is a base vertex. 
If $x \leq t-2$, 
${g}(v, a) = \{ u_{1}, u_{2}, \ldots, u_{x-1} \}$. 
Otherwise, that is, 
if $x \geq t-1$, 
${g}(v, a) = \{ u_{1}, u_{2}, \ldots, u_{t-1}, u'_{t-1} \}$. 
($u'_{t-1}$ is not a base vertex.
Thus, 
${g}(v, a)$ does not contain any base vertex. 
Immediately after the execution of the routine, 
(i) always holds. 
$u_{2}$ in (i) can be set to be a base vertex later 
and then (ii) can hold. 
Namely, 
both $x = 2$ and $t$ modulo $3 = 2$ in (ii). 
These facts will be shown later.) 
%
%
We call this vertex set ${g}(v, a)$ the $a$-th {\em $T$-set} of $v$, and 
say that {\em $v$ has the $T$-set}. 
Also, 
we define $\ell(v, a) = \max\{ i \mid u_{i} \in {g}(v, a) \}$, 
which is called the {\em length} of ${g}(v, a)$. 
Let us classify ${g}(v, a)$ into the following four categories: 
(i) ${g}(v, a)$ such that $\ell(v, a) = {\tt MaxLength}$ is called a {\em $T_{3}$-set}, 
(ii) ${g}(v, a)$ such that $\ell(v, a) < {\tt MaxLength}$ and $\ell(v, a)$ modulo $3 = 0$ is called a {\em $T_{0}$-set},
(iii) ${g}(v, a)$ such that $\ell(v, a) < {\tt MaxLength}$ and $\ell(v, a)$ modulo $3 = 1$  is called a {\em $T_{1}$-set}, and 
(iv) ${g}(v, a)$ such that $\ell(v, a) < {\tt MaxLength}$ and $\ell(v, a)$ modulo $3 = 2$ is called a {\em $T_{2}$-set}. 
Next, 
we evaluate the cost of $ON$ for ${g}(v, a)$, 
that is, 
the number of vertices in $D_{ON} \cap {g}(v, a)$. 
\noindent
{\bf Cost for a $T_{3}$-set:}
$\ell(v, a) = {\tt MaxLength}$ and 
$D_{ON}(u_{t})$ (namely, $D_{ON}$) contains 
$u_{i} \hspace*{1mm} (i = 1, 2, \ldots, \ell(v, a)-2)$ 
and 
either $u_{\ell(v, a)-1}$ or $u_{\ell(v, a)}$. 
Thus, 
the number of vertices selected by $ON$ in ${g}(v, a)$ is ${\tt MaxLength} - 1$. 
\noindent
{\bf Costs for a $T_{0}$-set, a $T_{1}$-set, and a $T_{2}$-set:}
$D_{ON}$ contains $u_{i} \hspace*{1mm} (i = 1, 2, \ldots, \ell(v, a)-2, \ell(v, a))$. 
$D_{ON}$ also contains either $u_{\ell(v, a)-1}$ or $u'_{\ell(v, a)-1}$. 
Thus, 
the number of vertices selected by $ON$ in ${g}(v, a)$ is $\ell(v, a)$. 
We define the following offline algorithm $OFF$ to give an upper bound on the cost of $OPT$. 
Roughly speaking, 
$OFF$ selects one vertex every consecutive three vertices 
starting with $u_{\ell(v, a)-1}$ to deal with $u'_{\ell(v, a)-1}$. 
Moreover, 
a $T$-set does not contain a base vertex $v (= u_{0})$, 
but $OFF$ selects $v$ to cover $u_{1}$ for a $T_{1}$-set. 
$OFF$'s cost for a $T_{1}$-set includes the cost for $v$.   
\noindent
{\bf Cost for a $T_{0}$-set:}
$OFF$ selects $\ell(v, a)/3$ vertices of $u_{2}, u_{5}, u_{8}, \ldots, u_{\ell(v, a)-1}$. 
\noindent
{\bf Cost for a $T_{1}$-set:}
$OFF$ selects $(\ell(v, a)+2)/3$ vertices of $u_{0}, u_{3}, u_{6}, \ldots, u_{\ell(v, a)-1}$. 
\noindent
{\bf Cost for a $T_{2}$-set:}
$OFF$ selects $(\ell(v, a)+1)/3$ vertices of $u_{1}, u_{4}, \ldots, u_{\ell(v, a)-1}$. 
\noindent
{\bf Cost for a $T_{3}$-set:}
$OFF$ selects $(\ell(v, a)+1)/3$ vertices of $u_{1}, u_{4}, \cdots, u_{\ell(v, a)-1}$
($OFF$'s selection begins with $u_{1}$ for $u_{0}$, and $\ell(v, a) = {\tt MaxLength}$ such that ${\tt MaxLength}$ modulo 3 = 2 by definition). 
$ON$ selects all base vertices, 
which is shown later, 
while $OFF$ selects a base vertex 
if its $T$-sets contain a $T_{1}$-set or are all $T_{0}$-sets. 
Then, 
we introduce another routine {\sc TreeRoutine} to construct the input $\sigma$. 
This routine uses {\sc SubtreeRoutine} as a subroutine. 
Now we formally define a base vertex. 
A {\em base vertex} denotes a vertex set to be the value {\tt BaseVtx} in the routine. 
(This is because this definition also applies to a vertex $u_{2}$ when the routine finishes in Case~3.3.2.)
At the beginning, 
the routine sets $v_{1}$ to be the first base vertex and 
calls {\sc SubtreeRoutine}. 
Obtaining a $T$-set by the call, 
the routine sets a vertex in the $T$-set to be {\tt BaseVtx} 
if it executes Case~3.3.2. 
Thus, 
more than one vertices can be base vertices. 
After the definition of the routine, 
we will prove that 
all the base vertices except at most one are classified into the following four categories: 
(1) a base vertex with two $T_{1}$-sets and at least zero $T_{0}$-sets, 
(2) a base vertex with one $T_{2}$-set and at least zero $T_{0}$-sets,
(3) a base vertex with one $T_{3}$-set, at most one $T_{1}$-set and at least zero $T_{0}$-sets, and
(4) a base vertex with sufficiently many $T_{0}$-sets and at most one $T_{1}$-set. 
Moreover, 
we will evaluate the cost for each base vertex with its $T$-sets 
and prove that each cost ratio of $ON$ to $OFF$ is at least three. 
Now we define the routine, 
which uses {\tt MaxT}$_{0}$ and {\tt MaxT}$_{1}$ as sufficiently large positive integers. 
\fi
\ifnum \count10 > 0
%
%
\noindent\vspace{-1mm}\rule{\textwidth}{0.5mm} 
\vspace{-3mm}
{{\sc TreeRoutine}}\\
\rule{\textwidth}{0.1mm}
%
	%
	%
	{\bf\boldmath 
		%
		{\tt Count} $:= 1$
		$v_{1}$
		{\tt BaseVtx}$ := v_{1}$
	{\bf\boldmath Step~1:} 
		%
		{\tt CntT$_{0}$} $:= 0,$ {\tt CntT$_{1}$} $:= 0$
	{\bf\boldmath Step~2:} 
		%
		{\sc SubtreeRoutine}({\tt BaseVtx})
		\\
	{\bf\boldmath Step~3:} 
		%
		%
%
\hspace*{2mm}
		{\bf\boldmath Case~3.1 (${g}(v, a)$
		%
		{\tt CntT}$_{0}$ $:=$ {\tt CntT}$_{0}$ $+ 1$
\hspace*{4mm}
			{\bf\boldmath Case~3.1.1 ({\tt CntT}$_{0}$ $<$ {\tt MaxT}$_{0}$
		%
%
				Step~2
\hspace*{4mm}
			{\bf\boldmath Case~3.1.2 ({\tt CntT}$_{0}$ $=$ {\tt MaxT}$_{0}$
		%
%
		%
%
\hspace*{2mm}
		{\bf\boldmath Case~3.2 (${g}(v, a)$
		%
		{\tt CntT}$_{0}$ $:=$ {\tt CntT}$_{0}$ $+ 1$
\hspace*{4mm}
			{\bf\boldmath Case~3.2.1 ({\tt CntT}$_{1}$ $= 0$
		%
%
				{\tt CntT}$_{1}$ $:= 1$
\hspace*{4mm}
			{\bf\boldmath Case~3.2.2 ({\tt CntT}$_{1}$ $= 1$
		%
%
		%
%
\hspace*{2mm}
		{\bf\boldmath Case~3.3 (${g}(v, a)$
\hspace*{4mm}
			{\bf\boldmath Case~3.3.1 ({\tt CntT}$_{1}$ $= 0$
		%
%
		%
%
\hspace*{4mm}
			{\bf\boldmath Case~3.3.2 ({\tt CntT}$_{1}$ $= 1$
\hspace*{6mm}
				$u_{1}$
				${g}(v, a)$
				%
				{\tt BaseVtx} $:= u_{2}$
				%
				{\tt Count} $:=$ {\tt Count}$+1$
				Step~1
				\\
\hspace*{2mm}
		{\bf\boldmath Case~3.4 (${g}(v, a)$
		%
		%
%
\noindent\vspace{-1mm}\rule{\textwidth}{0.5mm} 
\fi
\ifnum \count11 > 0
%
%
\noindent\vspace{-1mm}\rule{\textwidth}{0.5mm} 
\vspace{-3mm}
{{\sc TreeRoutine}}\\
\rule{\textwidth}{0.1mm}
%
	%
	%
	{\bf\boldmath Initialize:} 
		%
		{\tt Count} $:= 1$. 
		$v_{1}$ is revealed and {\tt BaseVtx} $:= v_{1}$. \\
	{\bf\boldmath Step~1:}
		%
		{\tt CntT$_{0}$} $:= 0$ and {\tt CntT$_{1}$} $:= 0$. \\
	{\bf\boldmath Step~2:}\\
\hspace*{2mm}
		Call {\sc SubtreeRoutine}({\tt BaseVtx}). 
		Suppose that it gives a $T$-set ${g}(v, a)$, 
		in which $v = $ {\tt BaseVtx}. 
		\\
	{\bf\boldmath Step~3:} 
		%
		Execute one of the following four cases. \\
\hspace*{2mm}
		{\bf\boldmath Case~3.1 (${g}(v, a)$ is a $T_{0}$-set):} 
		%
		{\tt CntT}$_{0}$ $:=$ {\tt CntT}$_{0}$ $+ 1$. \\
\hspace*{4mm}
			{\bf\boldmath Case~3.1.1 ({\tt CntT}$_{0}$ $<$ {\tt MaxT}$_{0}$):} 
		%
%
				Go to Step~2. \\
\hspace*{4mm}
			{\bf\boldmath Case~3.1.2 ({\tt CntT}$_{0}$ $=$ {\tt MaxT}$_{0}$):} 
		%
%
				Finish. \\
\hspace*{2mm}
		{\bf\boldmath Case~3.2 (${g}(v, a)$ is a $T_{1}$-set):} 
		%
		{\tt CntT}$_{0}$ $:=$ {\tt CntT}$_{0}$ $+ 1$. \\
\hspace*{4mm}
			{\bf\boldmath Case~3.2.1 ({\tt CntT}$_{1}$ $= 0$):} 
		%
%
				{\tt CntT}$_{1}$ $:= 1$ and go to Step~2. \\
\hspace*{4mm}
			{\bf\boldmath Case~3.2.2 ({\tt CntT}$_{1}$ $= 1$):} 
		%
%
				Finish. \\
\hspace*{2mm}
		{\bf\boldmath Case~3.3 (${g}(v, a)$ is a $T_{2}$-set):} \\
\hspace*{4mm}
			{\bf\boldmath Case~3.3.1 ({\tt CntT}$_{1}$ $= 0$):} 
		%
%
				Finish. \\
\hspace*{4mm}
			{\bf\boldmath Case~3.3.2 ({\tt CntT}$_{1}$ $= 1$):} \\
\hspace*{6mm}
				Let $u_{1} \in {g}(v, a)$ be the vertex arriving at {\tt BaseVtx} and 
				let $u_{2} \in {g}(v, a)$ be the vertex which first\\ 
\hspace*{6mm}	arrived at $u_{1}$. 
				${g}(v, a)$ is a $T_{2}$-set and hence $u_{2}$ exists (Fig.~\ref{fig:dlow}). 
				{\tt BaseVtx} $:= u_{2}$. 
				If {\tt Count} $=$\\
\hspace*{6mm}	
				{\tt MaxT}$_{1}$, 
				then finish. 
				Otherwise, 
				{\tt Count} $:=$ {\tt Count}$+1$ and go to Step~1. 
				\\
\hspace*{2mm}
		{\bf\boldmath Case~3.4 (${g}(v, a)$ is a $T_{3}$-set):} 
		%
			Finish. \\
\noindent\vspace{-1mm}\rule{\textwidth}{0.5mm} 
\fi
\ifnum \count12 > 0
\begin{figure*}[ht]
	 \begin{center}
	  \includegraphics[width=130mm]{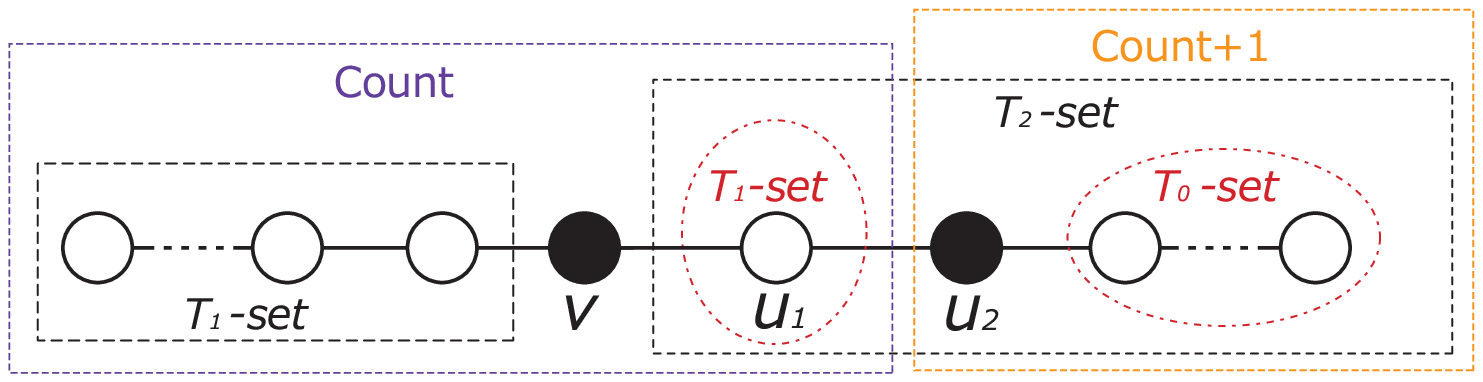}
	 \end{center}
	 \caption{
\ifnum \count10 > 0
%
%
Case~3.3.2
{\sc SubtreeRoutine}
$T_{2}$-set ${g}(v, a)$
Case~3.3.2
$T_{0}$-set ${g}(u_{2}, 1)$
%
$v$
$u_{2}$
$u_{2}$
\fi
\ifnum \count11 > 0
%
%
Execution example of Case~3.2.2. 
$v$ is a base vertex when Case~3.2.2 is executed. 
{\sc SubtreeRoutine} is executed with $v$ and 
constructs a $T_{2}$-set ${g}(v, a)$. 
Since $u_{2}$ is set to be a new base vertex in Case~3.3.2, 
the original $T_{2}$-set is divided into one $T_{1}$-set ${g}(v, a)$ whose length is one and 
one $T_{0}$-set ${g}(u_{2}, 1)$. 
Thus, 
if $v$ is the {\tt Count}-th base vertex, 
then $u_{2}$ is the {\tt Count}$+1$-st base vertex. 
Also, 
$u_{1}$ becomes a $T$-set of $v$ and 
all the descendants of $u_{2}$ become a $T_{0}$-set of $u_{2}$. 
\fi
			}
	\label{fig:dlow}
\end{figure*}
\fi
\ifnum \count10 > 0
%
%
$v_{1}$
${\tt BaseVtx} = v_{1}$
%
%
%
%
%
%
%
Case~3.1.2, 3.2.2, 3.3.1, 3.4
$y < ${\tt MaxT}$_{1}$
$y$
%
Case~3.3.2
$y = ${\tt MaxT}$_{1}$
$y+1$
%
%
$z$
\noindent
{\bf\boldmath (1):} 
%
%
%
Case~3.3.2
Case~3.1.1
%
%
%
Case~3.3.2
${g}(v, a)$
$u_{1}$
$u_{1},u_{2},u'_{1}$
%
%
%
%
$ON$
$u_{2}$
%

%
$v$
Case~3.1.1
1
%
Case~3.2.1
$v$
Case~3.3.2
$v$
$T_{1}$-set
%
%
$v$
$x-2 (\geq 0)$
%
%
%
%
2
$x-2$
$\sum_{i = 1}^{x} t_{i} + 1$
1
%
2
($v$
$T_{0}$-set
%
$(\sum_{i = 1}^{x} t_{i} + 1)/3$
%

%
\noindent
{\bf\boldmath (2):} 
%
%
$z = y$
Case~3.1.2, 3.2.2, 3.3.1, 3.3.2, 3.4
\noindent
{\bf (2-1) Case~3.2.2
%
$v$
Case~3.2.2
$v$
%
Case~3.1.1
%
(1)
$z \geq 2$
$z = y-1$
$v$
%
%
$v$
%
%
%

%
\noindent
{\bf (2-2) Case~3.3.1
$v$
%
%
$T_{2}$-set
$x-1 (\geq 0)$
$\sum_{i = 1}^{x} t_{i} + 1$
%
%
$v$
$T_{0}$-set
%
$(\sum_{i = 1}^{x} t_{i} + 1)/3$
%

%
\noindent
{\bf (2-3) Case~3.4
%
$v$
$x-1 (\geq 0)$
$v$
({\tt CntT}$_{1}$
%
%
%
$T_{3}$-set
$T_{0}$-set
$1 + t_{1} - 1 + \sum_{i = 2}^{x} t_{i} = \sum_{i = 1}^{x} t_{i}$
%
$(t_{1} + 1)/3 + \sum_{i = 2}^{x} t_{i}/3$
$+1$
%
%
%
$T_{1}$
%
$\sum_{i = 0}^{x} t_{i}$
%
$(t_{1} + 1)/3 + \sum_{i = 2}^{x} t_{i}/3 + (t_{0}-1)/3$
%

%
\noindent
{\bf (2-4) Case~3.1.2
$v$
1
%
%
$T_{0}$-set
$\sum_{i = 1}^{x} t_{i} + 1$
%
$\sum_{i = 1}^{x} t_{i}/3 + 1$
%
%
$T_{1}$-set
%
$\sum_{i = 0}^{x} t_{i} + 1$
%
$\sum_{i = 1}^{x} t_{i}/3 + 1 + (t_{0}-1)/3$
%

%
\noindent
{\bf (2-5) Case~3.3.2
(1)
%
%
(1)
$z +1 = y + 1$
%
(2-4)
%
%
$u_{2}$
%
$T_{1}$-set
%
%

%
%
%
Case~3.1.2, 3.2.2, 3.3.1, 3.3.2, 3.4
Case~3.1.2, 3.2.2, 3.3.1, 3.4
$y$
Case~3.3.2
$y+1$
%
%
%

%
\noindent
{\bf (2-1) Case~3.2.2
$C_{ON}(\sigma) / C_{OFF}(\sigma) = 3$
%

%
\noindent
{\bf (2-2) Case~3.3.1
$C_{ON}(\sigma) / C_{OFF}(\sigma) = 3$
%

%
\noindent
{\bf (2-3) Case~3.4
$C_{ON}(\sigma) / C_{OFF}(\sigma) = 3 - \epsilon$
%
$\epsilon > 0$%
0
\noindent
{\bf (2-4) Case~3.1.2
$C_{ON}(\sigma) / C_{OFF}(\sigma) = 3 - \epsilon'$
%
$\epsilon' > 0$%
0
\noindent
{\bf (2-5) Case~3.3.2
$C_{ON}(\sigma) / C_{OFF}(\sigma) = 3 - \epsilon''$
%
$\epsilon'' > 0$%
0
%

%
%
%

%
\fi
\ifnum \count11 > 0
%
%
$ON$ must select $v_{1}$. 
Thus, 
when {\sc SubtreeRoutine} is called with ${\tt BaseVtx} = v_{1}$ in Step~2, 
this subroutine executes Case~T3.2 for the first execution of Step~T3. 
Now we analyze the routine. 
In Case~3.3.2, 
a new base vertex is set and the value of {\tt Count} increases. 
Let $y$ be the value of {\tt Count} when the routine finishes. 
Then, 
if the routine finishes in Case~3.1.2, 3.2.2, 3.3.1, or 3.4, 
then $y < ${\tt MaxT}$_{1}$ and 
there exist $y$ base vertices. 
Otherwise, that is, if it finishes in Case~3.3.2, 
then $y = ${\tt MaxT}$_{1}$ and 
there exist $y+1$ base vertices. 
Let us evaluate the costs for the $z$-th base vertex $v$ and $T$-sets of $v$. 
\noindent
{\bf\boldmath (1):}
First, 
suppose that $z \in [1, y-1]$. 
Vertices composing $T$-sets of $v$ are as follows: 
at most two vertices in a $T$-set ${g}(v, a)$, which is a $T_{2}$-set, 
constructed immediately before the execution of Case~3.3.2, 
and 
all the vertices in $T$-sets constructed at the executions of Cases 3.1.1 and 3.2.1 
before the execution of Case~3.3.2. 
Cases 3.1.1 and 3.2.1 are executed at least zero times and once, respectively. 
In Case~3.3.2, 
the routine sets a vertex $u_{2}$ in ${g}(v, a)$ to be a new base vertex. 
The original $T$-set ${g}(v, a)$ is divided into two $T$-sets: 
a $T_{1}$-set ${g}(v, a)$ of $u_{1}$ ($u'_{1}$, if any) and 
a $T_{0}$-set ${g}(u_{2}, 1)$ of the rest of the vertices in the original ${g}(v, a)$ 
(see Fig.~\ref{fig:dlow}).
$ON$ must select the first revealed vertex $v_{1}$ and 
also selects $u_{2}$ by the definition of $T_{2}$-sets. 
Hence, 
$ON$ selects all base vertices. 
By organizing the above argument, 
$T$-sets of $v$ are as follows: 
one $T_{0}$-set is constructed every time Case~3.1.1 is executed, 
one $T_{1}$-set is constructed when Case~3.2.1 is executed, 
and 
one $T_{1}$-set is constructed when Case~3.3.2 is executed. 
That is, 
$v$ has two $T_{1}$-sets and $x-2 (\geq 0)$ $T_{0}$-sets. 
Thus, 
by the argument about the costs of $T$-sets, 
which were discussed before the routine, 
the total cost for these $T$-sets and $v$ is as follows: 
Suppose that 
the lengths of the two $T_{1}$-sets are $t_{1}$ and $t_{2}$, respectively and 
those of the $x-2$ $T_{0}$-sets are $t_{3}, t_{4}, \ldots, t_{x}$, respectively. 
Then, 
the cost of $ON$ is $\sum_{i = 1}^{x} t_{i} + 1$, 
in which the second term follows from the base vertex $v$. 
On the other hand, 
the total cost of $OFF$ for the two $T_{1}$-sets and $v$ is $(t_{1}+2)/3 + (t_{2}-1)/3$
($v$ is selected only once and the former term is greater than the latter by one).
The cost for $x$ $T_{0}$-sets is $\sum_{i = 3}^{x} t_{i}/3$. 
Hence, 
the total cost of $OFF$ for these $T$-sets is $(\sum_{i = 1}^{x} t_{i} + 1)/3$. 
\noindent
{\bf\boldmath (2):} 
Next, 
suppose that $z = y$. 
We consider the cases finishing the routine: Cases 3.1.2, 3.2.2, 3.3.1, 3.3.2, and 3.4. 
\noindent
{\bf (2-1) Case~3.2.2:} 
{\tt CntT$_{1}$} used in the routine denotes the number of $T_{1}$-sets of $v$, 
and thus 
$v$ has two $T_{1}$-sets 
when the routine finished in Case~3.2.2. 
The routine can execute Case~3.1.1 several times before the finish. 
Moreover, 
as mentioned in (1), 
if $z \geq 2$, 
the routine executes Case~3.2.2 when $z = y-1$, and 
one $T_{0}$-set of $v$, which is called $u_{2}$ in Case~3.2.2, 
is constructed. 
By the above argument, 
$v$ has two $T_{1}$-sets and 
can have several $T_{0}$-sets. 
Therefore, 
the cost for theses $T$-sets and $v$ is obtained in the same way as that in (1). 
\noindent
{\bf (2-2) Case~3.3.1:} 
$v$ has one $T_{2}$-set and several $T_{0}$-sets similarly to (2-1). 
Suppose that the length of the $T_{2}$-set is $t_{1}$ and 
those of the $x-1 (\geq 0)$ $T_{0}$-sets are $t_{2}, \ldots, t_{x}$, respectively. 
Then, 
the cost of $ON$ is $\sum_{i = 1}^{x} t_{i} + 1$. 
On the other hand, 
the cost of $OFF$ for $v$ and the $T_{2}$-set is $(t_{1}+1)/3$, 
and 
that for the $x-1$ $T_{0}$-sets is $\sum_{i = 3}^{x} t_{i}/3$. 
Hence, 
the cost of $OFF$ is $(\sum_{i = 1}^{x} t_{i} + 1)/3$. 
\noindent
{\bf (2-3) Case~3.4:} 
$v$ has two $T_{3}$-sets and $x-1 (\geq 0)$ $T_{0}$-sets. 
In addition, 
$v$ has one $T_{1}$-set 
when {\tt CntT}$_{1}$ $= 1$ at the finish of the routine
({\tt CntT}$_{1}$ is set to be zero or one).
First, 
we consider the case in which there does not exist such $T_{1}$-set, 
that is, 
{\tt CntT}$_{1}$ $= 0$ at the finish of the routine. 
Suppose that the length of the $T_{3}$-set is $t_{1} (= {\tt MaxLength})$ and 
those of the $x-1$ $T_{0}$-sets are $t_{2}, \ldots, t_{x}$, respectively. 
Then, 
the cost of $ON$ is $1 + t_{1} - 1 + \sum_{i = 2}^{x} t_{i} = \sum_{i = 1}^{x} t_{i}$. 
The cost of $OFF$ is at most $(t_{1} + 1)/3 + \sum_{i = 2}^{x} t_{i}/3$, 
in which $+1$ follows from $v$. 
Next, 
we consider the case in which there exists one $T_{1}$-set. 
Suppose that the length of the $T_{1}$-set is $t_{0}$. 
Then, 
the costs of $ON$ and $OFF$ are $\sum_{i = 0}^{x} t_{i}$ and 
at most $(t_{1} + 1)/3 + \sum_{i = 2}^{x} t_{i}/3 + (t_{0}-1)/3$, respectively. 
\noindent
{\bf (2-4) Case~3.1.2:} 
$v$ has {\tt MaxT}$_{0}$ $T_{0}$-sets. 
Also, 
$v$ has one $T_{1}$-set 
if {\tt CntT}$_{1}$ $= 1$ at the finish of the routine. 
We consider the case in which {\tt CntT}$_{1}$ $= 0$ at the finish. 
Suppose that $x = \mbox{{\tt MaxT}}_{0}$ and the lengths of the $x$ $T_{0}$-sets are $t_{1}, \ldots, t_{x}$, respectively. 
Then, 
the costs of $ON$ and $OFF$ are $\sum_{i = 1}^{x} t_{i} + 1$ and at most $\sum_{i = 1}^{x} t_{i}/3 + 1$, respectively. 
We next consider the case in which {\tt CntT}$_{1}$ $= 1$ at the finish. 
Suppose that the length of the $T_{1}$-set is $t_{0}$. 
Then, 
the costs of $ON$ and $OFF$ are $\sum_{i = 0}^{x} t_{i} + 1$ and at most $\sum_{i = 1}^{x} t_{i}/3 + 1 + (t_{0}-1)/3$, respectively. 
\noindent
{\bf (2-5) Case~3.3.2:} 
As mentioned in (1), 
$v$ has two $T_{1}$-sets and can have several $T_{0}$-sets. 
The costs of $ON$ and $OFF$ for this case are obtained in the same way as those in (1) and (2-1). 
Moreover, 
we must consider 
the $z + 1 = y + 1$-st base vertex $u_{2}$ and its $T$-sets. 
$u_{2}$ has one $T_{0}$-set. 
Then, 
the costs of $ON$ and $OFF$ are obtained in the same way as those in (2-4). 
However, 
$u_{2}$ has just one $T_{0}$-set. 
The cost ratio of $ON$ to $OFF$ for $u_{2}$ and its $T$-sets can be quite smaller than three.
However, 
note that there exist {\tt MaxT}$_{1}$ base vertices with two $T_{1}$-sets, and 
{\tt MaxT}$_{1}$ is a sufficiently large integer. 
Now, we are ready to evaluate the competitive ratio of $ON$. 
As mentioned above, 
the routine executes Case~3.3.2 $y-1$ times and 
finishes in Case~3.1.2, 3.2.2, 3.3.1, 3.3.2 or 3.4. 
If it finishes in Case~3.1.2, 3.2.2, 3.3.1 or 3.4, 
then there exist $y$ base vertices. 
Otherwise, that is, 
if it finished in Case~3.3.2, 
then there exist $y+1$ base vertices. 
By the above argument, 
we obtain the following results. 
\noindent
{\bf (2-1) Case~3.2.2:} 
$C_{ON}(\sigma) / C_{OFF}(\sigma) = 3$. 
\noindent
{\bf (2-2) Case~3.3.1:} 
$C_{ON}(\sigma) / C_{OFF}(\sigma) = 3$. 
\noindent
{\bf (2-3) Case~3.4:} 
$C_{ON}(\sigma) / C_{OFF}(\sigma) = 3 - \epsilon$, 
in which 
$\epsilon > 0$ approaches zero 
as the value {\tt MaxLength} in the subroutine becomes sufficiently large. 
\noindent
{\bf (2-4) Case~3.1.2:} 
$C_{ON}(\sigma) / C_{OFF}(\sigma) = 3 - \epsilon'$, 
in which 
$\epsilon' > 0$ approaches zero as the value {\tt MaxT}$_{0}$ in the routine becomes sufficiently large. 
\noindent
{\bf (2-5) Case~3.3.2:} 
$C_{ON}(\sigma) / C_{OFF}(\sigma) = 3 - \epsilon''$, 
in which 
$\epsilon'' > 0$ approaches zero 
as the value {\tt MaxT}$_{1}$ in the routine becomes sufficiently large. 
We have shown that the statement of the theorem is true. 
\fi
%

\section{Randomized Upper Bound} \label{sec:rand_up}
\ifnum \count10 > 0
%
%

%

%
\fi
\ifnum \count11 > 0
%
%

%
\fi
%
\subsection{Algorithm}\label{sec:algorithm}
\ifnum \count10 > 0
%
%
%
$RA$
$1/2$
%
%
$p(v)$
%
%
$A$
$A$
$B$
%
${deg}(v) \leq 2$
$v \notin D_{A} \cap D_{B}$
%

%
$A$($B$)
%
%
%

%
\fi
\ifnum \count11 > 0
%
%
First, we define our algorithm $RA$. 
Before the first vertex is revealed, 
$RA$ chooses to start running one of two deterministic online algorithms $A$ and $B$, 
which are defined later, with the probability of $1/2$ and thereafter keeps running it up to the end of the input. 
For a vertex $v$, 
$p(v)$ denotes the length of the simple path from $v_{1}$ to $v$. 
Roughly speaking, 
the difference between $A$ and $B$ is that for a vertex $v$, 
$A$ selects $v$ if $p(v)$ is odd, and 
$B$ selects $v$ if $p(v)$ is even.
Then, 
$A$ and $B$ try to establish the property that for any vertex $u$ of degree at most two, 
$u \notin D_{A} \cap D_{B}$. 
$A$ ($B$) can select a vertex which $A$ ($B$) selected previously in the following definition. 
It means that $A$ ($B$) does nothing at that time. 
First, 
we give the definition of $A$ as follows. 
\fi
\ifnum \count10 > 0
%
%
\noindent\vspace{-1mm}\rule{\textwidth}{0.5mm} 
\vspace{-3mm}
{\bf Algorithm $A$}\\
\rule{\textwidth}{0.1mm}
	$i$
	{\bf\boldmath Case~1 ($i = 1$
		%
		$v_1$
	{\bf\boldmath Case~2 ($i \geq 2$
		$v_{i}$
\hspace*{2mm}
		{\bf\boldmath Case~2.1 (${deg}_{v_{i}}(u) \geq 3$
			$u$
\hspace*{2mm}
		{\bf\boldmath Case~2.2 (${deg}_{v_{i}}(u) \leq 2$
\hspace*{4mm}
			{\bf\boldmath Case~2.2.1 ($p(v_{i})$ modulo $2 = 0$
				$u$
\hspace*{4mm}
			{\bf\boldmath Case~2.2.2 ($p(v_{i})$ modulo $2 = 1$
				$v_{i}$
\noindent\vspace{-1mm}\rule{\textwidth}{0.5mm} 
\fi
\ifnum \count11 > 0
%
%
\noindent\vspace{-1mm}\rule{\textwidth}{0.5mm} 
\vspace{-3mm}
{\bf Algorithm $A$}\\
\rule{\textwidth}{0.1mm}
	Suppose that the $i$-th vertex $v_{i}$ is revealed. \\
	{\bf\boldmath Case~1 ($i = 1$):}
		%
		Select $v_1$. \\
	{\bf\boldmath Case~2 ($i \geq 2$):}
		Suppose that $v_{i}$ arrives at a vertex $u$. \\
\hspace*{2mm}
		{\bf\boldmath Case~2.1 (${deg}_{v_{i}}(u) \geq 3$):}
			Select $u$. \\
\hspace*{2mm}
		{\bf\boldmath Case~2.2 (${deg}_{v_{i}}(u) \leq 2$):}\\
\hspace*{4mm}
			{\bf\boldmath Case~2.2.1 ($p(v_{i})$ modulo $2 = 0$):}
				Select $u$. \\
\hspace*{4mm}
			{\bf\boldmath Case~2.2.2 ($p(v_{i})$ modulo $2 = 1$):}
				Select $v_{i}$. \\
\noindent\vspace{-1mm}\rule{\textwidth}{0.5mm} 
\fi
\ifnum \count10 > 0
%
%
$A$
$A$
$B$
$B$
%
%
%

%
\fi
\ifnum \count11 > 0
%
%
Since $A$ selects either a revealed vertex $v_{i}$ or the vertex adjacent to $v_{i}$, 
the set of vertices selected by $A$ is a dominating set of a revealed graph 
immediately after each of $A$'s selections. 
The definition of $B$ is quite the same as that of $A$ except for Case~2.2. 
The process of $B$ in Case~2.2.1 (2.2.2) is the same as that of $A$ in Case~2.2.2 (2.2.1). 
Thus, 
the set of vertices selected by $B$ is also a dominating set at any time. 
\fi
\ifnum \count10 > 0
%
%
\noindent\vspace{-1mm}\rule{\textwidth}{0.5mm} 
\vspace{-3mm}
{\bf Algorithm $B$}\\
\rule{\textwidth}{0.1mm}
	$i$
	{\bf\boldmath Case~1 ($i = 1$
		%
		$v_1$
	{\bf\boldmath Case~2 ($i \geq 2$
		$v_i$
\hspace*{2mm}
		{\bf\boldmath Case~2.1 (${deg}_{v_i}(u) \geq 3$
			$u$
\hspace*{2mm}
		{\bf\boldmath Case~2.2 (${deg}_{v_i}(u) \leq 2$
\hspace*{4mm}
			{\bf\boldmath Case~2.2.1 ($p(v_{i})$ modulo $2 = 0$
				$v_{i}$
\hspace*{4mm}
			{\bf\boldmath Case~2.2.2 ($p(v_{i})$ modulo $2 = 1$
				$u$
\noindent\vspace{-1mm}\rule{\textwidth}{0.5mm} 
\fi
\ifnum \count11 > 0
%
%
\noindent\vspace{-1mm}\rule{\textwidth}{0.5mm} 
\vspace{-3mm}
{\bf Algorithm $B$}\\
\rule{\textwidth}{0.1mm}
	Suppose that the $i$-th vertex $v_{i}$ is revealed. \\
	{\bf\boldmath Case~1 ($i = 1$):}
		%
		Select $v_1$. \\
	{\bf\boldmath Case~2 ($i \geq 2$):}
		Suppose that $v_{i}$ arrives at a vertex $u$. \\
\hspace*{2mm}
		{\bf\boldmath Case~2.1 (${deg}_{v_i}(u) \geq 3$):}
			Select $u$. \\
\hspace*{2mm}
		{\bf\boldmath Case~2.2 (${deg}_{v_i}(u) \leq 2$):}\\
\hspace*{4mm}
			{\bf\boldmath Case~2.2.1 ($p(v_{i})$ modulo $2 = 0$):}
				Select $v_{i}$. \\
\hspace*{4mm}
			{\bf\boldmath Case~2.2.2 ($p(v_{i})$ modulo $2 = 1$):}
				Select $u$. \\
\noindent\vspace{-1mm}\rule{\textwidth}{0.5mm} 
\fi
%

\subsection{Basic Properties of $RA$}\label{sec:properties}
\ifnum \count10 > 0
%
%
$A$
\fi
\ifnum \count11 > 0
%
%
In this section, 
we show several basic properties of dominating sets by $A$ and $B$. 
\fi
\begin{LMA} \label{LMA:rand_up.ppab}
	\ifnum \count10 > 0
	%
	%
	%
%
	%
	\hspace*{1mm}
	(1) 
	$v = v_{1}$
	$v \in D_{A}$
	%
%
	%
	%
%
	%
	\hspace*{1mm}
	(2) 
	${deg}(v) \geq 3$
	$v \in D_{A}$
	\hspace*{1mm}
	(3)
	${deg}(v) = 2$
	\hspace*{2mm}
	(3-e)
	$p(v)$ modulo $2 = 0$
	$v \notin D_{A}$
	\hspace*{2mm}
	(3-o)
	$p(v)$ modulo $2 = 1$
	$v \in D_{A}$
	\hspace*{1mm}
	(4)
	${deg}(v) = 1$
	$v$
	\hspace*{2mm}
	(4-1)
	${deg}(\tilde{u}) \geq 3$
	${deg}_{v}(\tilde{u}) \leq 2$
	\hspace*{3mm}
	(4-1-e)	
	$p(v)$ modulo $2 = 0$
	$v \notin D_{A}$
	\hspace*{3mm}
	(4-1-o)	
	$p(v)$ modulo $2 = 1$
	$v \in D_{A}$
	\hspace*{2mm}
	(4-2)
	${deg}(\tilde{u}) \geq 3$
	${deg}_{v}(\tilde{u}) \geq 3$
	$v \notin D_{A}$
	\hspace*{2mm}
	(4-3)
	${deg}(\tilde{u}) \leq 2$
	\hspace*{3mm}
	(4-3-e)	
	$p(v)$ modulo $2 = 0$
	$v \notin D_{A}$
	\hspace*{3mm}
	(4-3-o)	
	$p(v)$ modulo $2 = 1$
	$v \in D_{A}$
	\fi
	\ifnum \count11 > 0
	%
	%
	The following properties hold for a vertex $v$: 
	\\
	\hspace*{1mm}
	(1) 
	If $v = v_{1}$, 
	$v \in D_{A}$ and $v \in D_{B}$. \\
	Suppose that $v \ne v_{1}$. \\
	\hspace*{1mm}
	(2) 
	If ${deg}(v) \geq 3$, 
	$v \in D_{A}$ and $v \in D_{B}$. \\
	\hspace*{1mm}
	(3)
	Suppose that ${deg}(v) = 2$. \\
	\hspace*{2mm}
	(3-e)
	If $p(v)$ modulo $2 = 0$, 
	$v \notin D_{A}$ and $v \in D_{B}$. \\
	\hspace*{2mm}
	(3-o)
	If $p(v)$ modulo $2 = 1$, 
	$v \in D_{A}$ and $v \notin D_{B}$. \\
	\hspace*{1mm}
	(4)
	Suppose that ${deg}(v) = 1$ and 
	let $\tilde{u}$ be the vertex adjacent to $v$. \\
	\hspace*{2mm}
	(4-1)
	Suppose that ${deg}(\tilde{u}) \geq 3$ and 
	${deg}_{v}(\tilde{u}) \leq 2$. \\
	\hspace*{3mm}
	(4-1-e)	
	If $p(v)$ modulo $2 = 0$, 
	$v \notin D_{A}$ and $v \in D_{B}$. \\
	\hspace*{3mm}
	(4-1-o)	
	If $p(v)$ modulo $2 = 1$, 
	$v \in D_{A}$ and $v \notin D_{B}$. \\
	\hspace*{2mm}
	(4-2)
	If ${deg}(\tilde{u}) \geq 3$ and 
	${deg}_{v}(\tilde{u}) \geq 3$, 
	then $v \notin D_{A}$ and $v \notin D_{B}$. \\
	\hspace*{2mm}
	(4-3)
	Suppose that ${deg}(\tilde{u}) \leq 2$. \\
	\hspace*{3mm}
	(4-3-e)	
	If $p(v)$ modulo $2 = 0$, 
	$v \notin D_{A}$ and $v \in D_{B}$. \\
	\hspace*{3mm}
	(4-3-o)	
	If $p(v)$ modulo $2 = 1$, 
	$v \in D_{A}$ and $v \notin D_{B}$. 
	\fi
\end{LMA}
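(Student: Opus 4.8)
The plan is a direct case analysis mirroring the statement. Fix a vertex $v$ and first record when each algorithm can put $v$ into its dominating set: algorithm $A$ adds $v$ to $D_{A}$ only at the reveal of $v$ (via Case~1 if $v=v_{1}$, via Case~2.2.2 if $v\ne v_{1}$ and $v$ arrives at its neighbor with current degree at most two) or at the reveal of some later vertex $w$ that arrives at $v$ (via Case~2.1 if ${deg}_{w}(v)\ge 3$, via Case~2.2.1 if ${deg}_{w}(v)\le 2$); for $B$ the corresponding possibilities are Case~1 at $v$'s reveal, Case~2.2.1 at $v$'s reveal, and Case~2.1 or Case~2.2.2 at the reveal of a later $w$ arriving at $v$. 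Two structural facts are used repeatedly: in the online tree every vertex other than $v_{1}$ arrives at a unique previously revealed neighbor (its parent), and every vertex revealed after $v$ that arrives at $v$ is a child of $v$, so $v$ has ${deg}(v)-1$ such children when $v\ne v_{1}$; moreover $p(w)=p(v)+1$ whenever $w$ arrives at $v$, so the parity tested in Case~2.2 flips between a vertex and each of its children. Finally, since degrees only increase, the edge that first raises ${deg}(v)$ to three is contributed by a child of $v$ arriving when $v$ has current degree exactly three, which triggers Case~2.1.

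Part~(1) is immediate, since both $A$ and $B$ run Case~1 on $v_{1}$. For part~(2), ${deg}(v)\ge 3$ with $v\ne v_{1}$ means $v$ has at least two children, and when its second child arrives $v$ already has degree at least three, so Case~2.1 fires for both algorithms and $v\in D_{A}\cap D_{B}$; the reveal of $v$ itself need not be examined.

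For part~(3), $v\ne v_{1}$ has exactly one parent $u$ and one child $w$, and $p(w)=p(v)+1$, so only two events can affect $v$: the reveal of $v$ (arriving at $u$; Case~2.1 if ${deg}_{v}(u)\ge 3$, otherwise Case~2.2) and the reveal of $w$ (arriving at $v$ with ${deg}_{w}(v)=2$, since $v\ne v_{1}$ already carries its parent edge, hence Case~2.2). If $p(v)$ is even: at $v$'s reveal $A$ selects $u$ in both Case~2.1 and Case~2.2.1, and at $w$'s reveal $p(w)$ is odd so Case~2.2.2 makes $A$ select $w$, whence $v\notin D_{A}$; while $B$ selects $v$ at $v$'s reveal when that event is Case~2.2.1, and otherwise (Case~2.1) selects $v=u$ at $w$'s reveal via Case~2.2.2, whence $v\in D_{B}$. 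This is (3-e), and $p(v)$ odd is the same argument with the roles of $A$ and $B$ exchanged, giving (3-o).

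For part~(4), $v$ is a leaf, so the only event that can select $v$ is its own reveal, arriving at $\tilde u$. In (4-2) we have ${deg}_{v}(\tilde u)\ge 3$, so Case~2.1 fires and both algorithms select $\tilde u$, never $v$. In (4-1) and (4-3) we have ${deg}_{v}(\tilde u)\le 2$ (by hypothesis in (4-1), and because $\tilde u$'s degree never exceeds two in (4-3)), so Case~2.2 applies: when $p(v)$ is even, Case~2.2.1 makes $A$ select $\tilde u$ and $B$ select $v$; when $p(v)$ is odd, Case~2.2.2 makes $A$ select $v$ and $B$ select $\tilde u$. This gives the four leaf subcases. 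Throughout, the only real care is the bookkeeping of which vertex plays the role of the newly revealed vertex and which plays the vertex arrived at in each invocation of Cases~2.1--2.2; the one genuinely non-obvious point is in part~(3), where one must verify that the conclusion does not depend on whether $v$'s own reveal was a Case~2.1 or a Case~2.2 event, which is precisely where the re-decision of $v$'s membership at the arrival of its unique child $w$ is used.
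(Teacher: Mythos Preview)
Your proof is correct. The approach differs from the paper's in organization: the paper proves a time-indexed strengthening (replacing ${deg}(\cdot)$ by ${deg}_{v_i}(\cdot)$ and $D_A$ by $D_A(v_i)$) by induction on the number $i$ of revealed vertices, and in the inductive step analyzes how the arrival of $v_m$ at a vertex $u$ can change the status of $u$, $v_m$, and the degree-one neighbours of $u$. You instead fix the target vertex $v$, enumerate exactly the reveal events that can place $v$ into $D_A$ or $D_B$ (its own reveal and the reveals of its children), and check each against the parity and degree constraints. Your route is shorter because the key observation---that $v$ can only be selected at its own reveal or at the reveal of a vertex arriving at $v$---lets you avoid tracking the full time-indexed invariant; the paper's induction, while heavier, makes the monotone-in-time behaviour explicit and would be the natural route if later lemmas needed the intermediate $D_A(v_i)$ information (they do not). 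One small presentational point: in part~(3) your phrase ``selects $v=u$'' momentarily overloads the symbol $u$; it would read more cleanly to say that at $w$'s reveal the role of the arrived-at vertex in the algorithm is played by $v$, so Case~2.2.2 for $B$ selects $v$.
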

\begin{proof}
	\ifnum \count10 > 0
	%
	%
	%
	%
	$i$
	$v_{i}$
	%
	
	%
	\noindent
	\hspace*{1mm}
	(1) 
	$j = 1$
	$v_{1} \in D_{A}(v_{i})$
	%
%
	%
	%
%
	%
	\hspace*{1mm}
	(2) 
	${deg}_{v_{i}}(v_{j}) \geq 3$
	$v_{j} \in D_{A}(v_{i})$
	\hspace*{1mm}
	(3)
	${deg}_{v_{i}}(v_{j}) = 2$
	\hspace*{2mm}
	(3-e)
	$p(v_{j})$ modulo $2 = 0$
	$v_{j} \notin D_{A}(v_{i})$
	\hspace*{2mm}
	(3-o)
	$p(v_{j})$ modulo $2 = 1$
	$v_{j} \in D_{A}(v_{i})$
	\hspace*{1mm}
	(4)
	${deg}_{v_{i}}(v_{j}) = 1$
	$v_{j}$
	\hspace*{2mm}
	(4-1)
	${deg}_{v_{i}}(\hat{u}) \geq 3$
	${deg}_{v_{j}}(\hat{u}) \leq 2$
	\hspace*{3mm}
	(4-1-e)	
	$p(v_{j})$ modulo $2 = 0$
	$v_{j} \notin D_{A}(v_{i})$
	\hspace*{3mm}
	(4-1-o)	
	$p(v_{j})$ modulo $2 = 1$
	$v_{j} \in D_{A}(v_{i})$
	\hspace*{2mm}
	(4-2)
	${deg}_{v_{i}}(\hat{u}) \geq 3$
	${deg}_{v_{j}}(\hat{u}) \geq 3$
	$v_{j} \notin D_{A}(v_{i})$
	\hspace*{2mm}
	(4-3)
	${deg}_{v_{i}}(\hat{u}) \leq 2$
	\hspace*{3mm}
	(4-3-e)	
	$p(v_{j})$ modulo $2 = 0$
	$v_{j} \notin D_{A}(v_{i})$
	\hspace*{3mm}
	(4-3-o)	
	$p(v_{j})$ modulo $2 = 1$
	$v_{j} \in D_{A}(v_{i})$
	$i = 1$
	$v_{1}$
	$A$
	$v_{1}$
	$v_{1}$
	$i \leq m-1 \hspace*{1mm} (m \geq 2)$
	$v_{m-1}$
	$i = m$
	%
	%
	%
	$v_{m}$
	%
	$u$
	$i = m$
	%
	%
	%
	%
	$u$
	$u'' \ne v_{1}$
	$u''$
	%
	%
	$i = m$
	%
	%
	$u \ne v_{1}$
	%
	${deg}_{v_{m}}(u) \geq 2$
	%
	
	%
	${deg}_{v_{m}}(u) \geq 4$
	$v_{m}$
	$A$
	%
	$u \in D_{A}(v_{m})$
	%
	${deg}_{v_{m}}(u) \geq 3$
	$u$
	%
	$v_{m}$
	$v_{m} \notin D_{A}(v_{m})$
	%
	${deg}_{v_{m}}(v_{m}) = 1$
	$v_{m}$
	%
	%
	$u''$
	$v_{m}$
	(4-)
	$v_{m}$
	$A$
	$v_{m}$
	(4-1)
	${deg}_{v_{m}}(u) = 3$
	$A$
	$u \in D_{A}(v_{m})$
	${deg}_{v_{m}}(u) \geq 4$
	$u$
	$v_{m}$
	%
	%
	$u''$
	$u''$
	%
	${deg}_{u''}(u) \leq 2$
	$p(v_{m})$ modulo $2 = 0$
	$u''$
	$p(u'')$ modulo $2 = 0$
	%
	$u'' \notin D_{A}(v_{m-1})$
	$u'' \in D_{B}(v_{m-1})$
	%
	$v_{m}$
	$A$
	$u'' \notin D_{A}(v_{m})$
	$u'' \in D_{B}(v_{m})$
	%
	${deg}_{v_{m}}(u) = 3$
	${deg}_{v_{m}}(u'') = 1$
	%
	%
	$p(v_{m})$ modulo $2 = 1$
	%
	
	%
	${deg}_{v_{m}}(u) = 2$
	%
	${deg}_{v_{m-1}}(u) = 1$
	$u''$
	$u$
	$u''$
	%
	$u''$
	%
	%
	%
	$u$
	%
	%
	${deg}_{v_{m}}(u') \leq 2$
	$p(v_{m})$ modulo $2 = 0$
	%
	%
	$v_{m}$
	$A$
	%
	%
	$A$
	%
	$v_{m} \notin D_{A}(v_{m})$
	$u \in D_{A}(v_{m})$
	%
	$p(v_{m})$ modulo $2 = 0$
	$p(u)$ modulo $2 = 1$
	$u \notin D_{B}(v_{m})$
	%
	$u \notin D_{B}(v_{m})$
	%
	%
	${deg}_{v_{m}}(v_{m}) = 1$
	$v_{m}$
	%
	${deg}_{v_{m}}(u) = 2$
	$u$
	${deg}_{v_{m}}(u') \leq 2$
	$p(v_{m})$ modulo $2 = 1$
	%
	$A$
	$v_{m}$
	$u$
	%
	
	%
	${deg}_{v_{m}}(u) = 2$
	${deg}_{v_{m}}(u') \geq 3$
	${deg}_{u}(u') \geq 3$
	%
	%
	$p(v_{m})$ modulo $2 = 0$
	$v_{m}$
	$A$
	$A$
	%
	$v_{m} \notin D_{A}(v_{m})$
	$u \in D_{A}(v_{m})$
	%
	%
	${deg}_{u}(u') \geq 3$
	$u \notin D_{B}(v_{m-1})$
	$u \notin D_{B}(v_{m})$
	%
	%
	${deg}_{v_{m}}(v_{m}) = 1$
	$v_{m}$
	%
	${deg}_{v_{m}}(u) = 2$
	$u$
	$p(v_{m})$ modulo $2 = 1$
	%
	
	%
	${deg}_{v_{m}}(u) = 2$
	${deg}_{v_{m}}(u') \geq 3$
	${deg}_{u}(u') \leq 2$
	%
	%
	$v_{m}$
	$A$
	%
	$p(v_{m})$ modulo $2 = 0$
	$u \notin D_{B}(v_{m-1})$
	%
	%
	${deg}_{u}(u') \geq 3$
	$p(v_{m})$ modulo $2 = 0$
	%
	%
	$p(v_{m})$ modulo $2 = 1$
	${deg}_{u}(u') \geq 3$
	$p(v_{m})$ modulo $2 = 1$
	$u = v_{1}$
	$u$
	$u \in D_{A}(v_{m-1})$
	$u \in D_{B}(v_{m-1})$
	%
	$u \in D_{A}(v_{m})$
	$u \in D_{B}(v_{m})$
	(1)
	$v_{m}$
	$u \ne v_{m}$
	$u''$
	${deg}_{v_{m}}(u) = 2$
	$u \ne v_{m}$
	${deg}_{v_{m}}(u) = 2$
	%
	${deg}_{v_{m-1}}(u) = 1$
	$u''$
	$v_{m}$
	(4-3-o)
	$u'' \in D_{A}(v_{m-1})$
	$u'' \notin D_{B}(v_{m-1})$
	%
	$v_{m}$
	$A$
	$u''$
	%
	$u'' \in D_{A}(v_{m})$
	$u'' \notin D_{B}(v_{m})$
	%
	%
	$u''$
	\fi
	\ifnum \count11 > 0
	%
	To show each property in the statement of this lemma, 
	we prove the following properties for each $j \leq i$ by induction on the number $i$ of revealed vertices. 
	%
	Note that 
	if we have proven them, 
	then each property in the statement is satisfied 
	when $i$ is the number of the vertices of the tree in an input, 
	that is, 
	$v_{i}$ is the final revealed vertex. 
	\noindent
	\hspace*{1mm}
	(1) 
	If $j = 1$, 
	$v_{1} \in D_{A}(v_{i})$ and $v_{1} \in D_{B}(v_{i})$. \\
	Suppose that $j \geq 2$. \\
	\hspace*{1mm}
	(2) 
	If ${deg}_{v_{i}}(v_{j}) \geq 3$, 
	$v_{j} \in D_{A}(v_{i})$ and $v_{j} \in D_{B}(v_{i})$. \\
	\hspace*{1mm}
	(3)
	Suppose that ${deg}_{v_{i}}(v_{j}) = 2$. \\
	\hspace*{2mm}
	(3-e)
	If $p(v_{j})$ modulo $2 = 0$, 
	$v_{j} \notin D_{A}(v_{i})$ and $v_{j} \in D_{B}(v_{i})$. \\
	\hspace*{2mm}
	(3-o)
	If $p(v_{j})$ modulo $2 = 1$, 
	$v_{j} \in D_{A}(v_{i})$ and $v_{j} \notin D_{B}(v_{i})$. \\
	\hspace*{1mm}
	(4)
	Suppose that ${deg}_{v_{i}}(v_{j}) = 1$. 
	Let $v_{j}$ be the vertex adjacent to $\hat{u}$. \\
	\hspace*{2mm}
	(4-1)
	Suppose that 
	${deg}_{v_{i}}(\hat{u}) \geq 3$ and 
	${deg}_{v_{j}}(\hat{u}) \leq 2$. \\
	\hspace*{3mm}
	(4-1-e)	
	If $p(v_{j})$ modulo $2 = 0$, 
	$v_{j} \notin D_{A}(v_{i})$ and $v_{j} \in D_{B}(v_{i})$. \\
	\hspace*{3mm}
	(4-1-o)	
	If $p(v_{j})$ modulo $2 = 1$, 
	$v_{j} \in D_{A}(v_{i})$ and $v_{j} \notin D_{B}(v_{i})$. \\
	\hspace*{2mm}
	(4-2)
	If 
	${deg}_{v_{i}}(\hat{u}) \geq 3$ and 
	${deg}_{v_{j}}(\hat{u}) \geq 3$,  
	$v_{j} \notin D_{A}(v_{i})$ and $v_{j} \notin D_{B}(v_{i})$. \\
	\hspace*{2mm}
	(4-3)
	Suppose that ${deg}_{v_{i}}(\hat{u}) \leq 2$. \\
	\hspace*{3mm}
	(4-3-e)	
	If $p(v_{j})$ modulo $2 = 0$, 
	$v_{j} \notin D_{A}(v_{i})$ and $v_{j} \in D_{B}(v_{i})$. \\
	\hspace*{3mm}
	(4-3-o)	
	If $p(v_{j})$ modulo $2 = 1$, 
	$v_{j} \in D_{A}(v_{i})$ and $v_{j} \notin D_{B}(v_{i})$. 
	If $i = 1$, 
	$v_{1}$ is revealed. 
	Both $A$ and $B$ execute Case~1 and select $v_{1}$. 
	Thus, 
	$v_{1}$ satisfies (1). 
	We assume that 
	the above properties are satisfied when $i \leq m-1 \hspace*{1mm} (m \geq 2)$, 
	that is, 
	before $v_{m}$ is revealed, 
	and we show that 
	they are also satisfied when $i = m$, 
	that is, 
	immediately after $v_{m}$ is revealed. 
	Each of the above properties depends on the degrees of revealed vertices. 
	When $v_{m}$ is revealed, 
	only the degree of an adjacent vertex $u$ to $v_{m}$ changes.  
	Hence, 
	all the revealed vertices except for $u$ and vertices adjacent to $u$ also satisfy the above properties 
	when $i = m$ by the induction hypothesis. 
	Then, 
	we in what follows show that 
	$u$ and a vertex adjacent to $u$ including $v_{m}$ satisfy the properties. 
	Note that the degree of $u$ may affect an adjacent vertex $u'' (\ne v_{m})$ at only the case (4) of the above properties, 
	that is, 
	the case in which $u'' \ne v_{1}$ and the degree of $u''$ is one. 
	Hence, 
	$u''$ also satisfies the above properties 
	when $i = m$ in the other cases by the induction hypothesis. 
	%
	%
	First, 
	we consider the case in which $u \ne v_{1}$. 
	Note that 
	${deg}_{v_{m}}(u) \geq 2$.  
	Suppose that ${deg}_{v_{m}}(u) \geq 4$. 
	When $v_{m}$ is revealed, 
	both $A$ and $B$ execute Case~2.1 and select $u$. 
	That is, 
	$u \in D_{A}(v_{m})$ and $u \in D_{B}(v_{m})$. 
	Thus, 
	$u$ satisfies the property (2)
	because ${deg}_{v_{m}}(u) \geq 3$. 
	On the other hand, 
	$A$ and $B$ do nothing for $v_{m}$, 
	and 
	thus 
	$v_{m} \notin D_{A}(v_{m})$ and $v_{m} \notin D_{B}(v_{m})$. 
	Then, 
	${deg}_{v_{m}}(v_{m}) = 1$ and 
	$v_{m}$ satisfies (4-2). 
	Since the degree of $u''$ is one as mentioned above, 
	$u''$ satisfies one of the properties (4-1) and (4-2)
	before $v_{m}$ is revealed by the induction hypothesis. 
	Analogously to $v_{m}$, 
	$A$ and $B$ do nothing $u''$, and 
	$v_{m}$ keeps (4-1) or (4-2). 
	Suppose that ${deg}_{v_{m}}(u) = 3$. 
	In this case, $A$ and $B$ execute Case~2.1, and 
	$u \in D_{A}(v_{m})$ and $u \in D_{B}(v_{m})$. 
	Similarly to the case in which ${deg}_{v_{m}}(u) \geq 4$, 
	$u$ and $v_{m}$ satisfy (2) and (4-2), respectively. 
	Let us consider $u''$ next. 
	Since the degree of $u''$ is one and 
	$u''$ is not $v_{1}$, 
	${deg}_{u''}(u) \leq 2$. 
	Suppose that $p(v_{m})$ modulo $2 = 0$. 
	$u'' \notin D_{A}(v_{m-1})$
	and 
	$u'' \in D_{B}(v_{m-1})$ by (4-3-e) in the induction hypothesis. 
	When $v_{m}$ is revealed, 
	$A$ and $B$ do nothing for $u''$. 
	Thus,  
	$u'' \notin D_{A}(v_{m})$
	and 
	$u'' \in D_{B}(v_{m})$, 
	which means that 
	$u''$ satisfies (4-1-e)
	since ${deg}_{v_{m}}(u) = 3$ and ${deg}_{v_{m}}(u'') = 1$. 
	If $p(v_{m})$ modulo $2 = 1$, 
	we can show that $u''$ satisfies (4-1-o) similarly. 
	Suppose that ${deg}_{v_{m}}(u) = 2$. 
	Thus, 
	${deg}_{v_{m-1}}(u) = 1$. 
	We discuss $u''$ in advance. 
	$u''$ is revealed before $u$ by the condition of this case 
	because $u$ is not $v_{1}$. 
	However, 
	the degree of $u''$ is one, 
	which contradicts that $u''$ is not $v_{1}$. 
	Hence, 
	there does not exist $u''$ in this case. 
	Let $u' (\ne v_{m})$ be a vertex adjacent to $u$. 
	Suppose that ${deg}_{v_{m}}(u') \leq 2$
	and 
	$p(v_{m})$ modulo $2 = 0$. 
	$A$ and $B$ execute Case~2.2.1 by definition 
	when $v_{m}$ is revealed. 
	Hence, 
	$A$ and $B$ select $u$ and $v_{m}$, respectively, 
	which indicates that 
	$v_{m} \notin D_{A}(v_{m})$, 
	$v_{m} \in D_{B}(v_{m})$ and 
	$u \in D_{A}(v_{m})$. 
	Also, 
	$p(u)$ modulo $2 = 1$ because 
	$p(v_{m})$ modulo $2 = 0$. 
	Hence, 
	$u \notin D_{B}(v_{m})$
	by (4-3-o) in the induction hypothesis. 
	Thus, 
	$u \notin D_{B}(v_{m})$. 
	Then, 
	$v_{m}$ satisfies (4-3-e) 
	since ${deg}_{v_{m}}(v_{m}) = 1$. 
	In addition, 
	$u$ satisfies (3-o) 
	since ${deg}_{v_{m}}(u) = 2$. 
	The case in which 
	${deg}_{v_{m}}(u') \leq 2$
	and 
	$p(v_{m})$ modulo $2 = 1$ can be shown similarly.  
	Roughly speaking, 
	$A$ and $B$ execute Case~2.2.2 in this case, 
	and $v_{m}$ and $u$ satisfy (4-3-o) and (3-e), respectively. 
	Suppose that 
	${deg}_{v_{m}}(u) = 2$, 
	${deg}_{v_{m}}(u') \geq 3$ and 
	${deg}_{u}(u') \geq 3$. 
	Also, 
	suppose that $p(v_{m})$ modulo $2 = 0$. 
	Both $A$ and $B$ execute Case~2.2.1 and 
	$A$ and $B$ select $u$ and $v_{m}$, respectively, 
	when $v_{m}$ is revealed. 
	Thus, 
	$v_{m} \notin D_{A}(v_{m})$, 
	$v_{m} \in D_{B}(v_{m})$ and 
	$u \in D_{A}(v_{m})$. 
	Since ${deg}_{u}(u') \geq 3$, 
	$u \notin D_{B}(v_{m-1})$ by (4-2) in the induction hypothesis, 
	which directly implies that 
	$u \notin D_{B}(v_{m})$. 
	Hence, 
	$v_{m}$ satisfies (4-3-e) 
	because ${deg}_{v_{m}}(v_{m}) = 1$. 
	$u$ satisfies (3-o)
	because ${deg}_{v_{m}}(u) = 2$. 
	We can prove the case in which $p(v_{m})$ modulo $2 = 1$, similarly, 
	and hence we omit it. 
	Suppose that ${deg}_{v_{m}}(u) = 2$, 
	${deg}_{v_{m}}(u') \geq 3$ and 
	${deg}_{u}(u') \leq 2$. 
	Then, 
	$A$ and $B$ execute Case~2.2.1. 
	Moreover, 
	$u \notin D_{B}(v_{m-1})$ by (4-1-o) in the induction hypothesis 
	if $p(v_{m})$ modulo $2 = 0$. 
	We omit the proof of this case 
	because the situation in this case is quite the same as that in the case in which 
	${deg}_{u}(u') \geq 3$
	and 
	$p(v_{m})$ modulo $2 = 0$. 
	In addition, 
	the case in which $p(v_{m})$ modulo $2 = 1$ can be proven 
	in the same way as the case in which 
	${deg}_{u}(u') \geq 3$
	and 
	$p(v_{m})$ modulo $2 = 1$. 
	Suppose that $u = v_{1}$. 
	$u \in D_{A}(v_{m-1})$ 
	and 
	$u \in D_{B}(v_{m-1})$ 
	by (1) in the induction hypothesis. 
	Thus, 
	$u \in D_{A}(v_{m})$
	and 
	$u \in D_{B}(v_{m})$, 
	which means that $u$ satisfies (1). 
	We omit the proof with respect to $v_{m}$
	because it can be shown in the same way as the case of $u \ne v_{m}$. 
	Finally, let us discuss $u''$. 
	If ${deg}_{v_{m}}(u) \ne 2$,	
	we can show the proof in the same way as that in the case in which $u \ne v_{m}$. 
	Then, we consider the case in which ${deg}_{v_{m}}(u) = 2$. 
	That is, 
	${deg}_{v_{m-1}}(u) = 1$. 
	Since the degree of $u''$ is one, 
	both 
	$u'' \in D_{A}(v_{m-1})$
	and 
	$u'' \notin D_{B}(v_{m-1})$
	by (4-3-o) in the induction hypothesis. 
	Also, 
	$A$ and $B$ execute Case~2.2.2 when $v_{m}$ is revealed, 
	and they do not select $u''$. 
	Thus, 
	$u'' \in D_{A}(v_{m})$
	and 
	$u'' \notin D_{B}(v_{m})$. 
	Therefore, 
	$u''$ satisfies (4-3-o)
	immediately after the revelation of $v_{m}$. 
	\fi
\end{proof}
\ifnum \count10 > 0
%
%

%

%
\fi
\ifnum \count11 > 0
%
%

%
\fi
%
\begin{LMA} \label{LMA:rand_up.vtx_cost}
	\ifnum \count10 > 0
	%
	%
	$RA$
	\hspace*{1mm}
	(1) 
	$v = v_{1}$
	%
%
	%
	%
%
	%
	\hspace*{1mm}
	(2) 
	${deg}(v) \geq 3$
	$1$
	\hspace*{1mm}
	(3)
	${deg}(v) = 2$
	$1/2$
	\hspace*{1mm}
	(4)
	${deg}(v) = 1$
	$v$
	\hspace*{2mm}
	(4-1)
	${deg}(u) \geq 3$
	${deg}_{v}(u) \leq 2$
	$1/2$
	\hspace*{2mm}
	(4-2)
	${deg}(u) \geq 3$
	${deg}_{v}(u) \geq 3$
	$0$
	\hspace*{2mm}
	(4-3)
	${deg}(u) \leq 2$
	$1/2$
	\fi
	\ifnum \count11 > 0
	%
	%
	The expected cost of $RA$ for $v$ is as follows: \\
	\hspace*{1mm}
	(1) 
	If $v = v_{1}$, it is one. \\
	Suppose that $v \ne v_{1}$. \\
	\hspace*{1mm}
	(2) 
	If ${deg}(v) \geq 3$, 
	it is one. \\
	\hspace*{1mm}
	(3)
	If ${deg}(v) = 2$, 
	it is $1/2$. \\
	\hspace*{1mm}
	(4)
	Suppose that ${deg}(v) = 1$ and 
	$v$ is adjacent to a vertex $u$. \\
	\hspace*{2mm}
	(4-1)
	If ${deg}(u) \geq 3$ and 
	${deg}_{v}(u) \leq 2$, 
	it is $1/2$. \\
	\hspace*{2mm}
	(4-2)
	If ${deg}(u) \geq 3$ and 
	${deg}_{v}(u) \geq 3$, 
	it is zero. \\
	\hspace*{2mm}
	(4-3)
	If ${deg}(u) \leq 2$, 
	it is $1/2$. \\
	\fi
\end{LMA}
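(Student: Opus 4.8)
The plan is to reduce the claim directly to Lemma~\ref{LMA:rand_up.ppab}. By definition $RA$ runs $A$ with probability $1/2$ and $B$ with probability $1/2$ and then never deviates, and the cost of a deterministic algorithm for the single vertex $v$ is $1$ if $v$ is selected and $0$ otherwise. Hence the expected cost of $RA$ for $v$ equals $\tfrac12\big(\mathbf{1}[v\in D_A]+\mathbf{1}[v\in D_B]\big)$, so it suffices to determine, in each of the cases (1)--(4) of the statement, how many of $D_A$, $D_B$ contain $v$ at the end of the input.

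I would then invoke the corresponding case of Lemma~\ref{LMA:rand_up.ppab}, applied with $i$ equal to the total number of revealed vertices so that $v_i$ is the final one (the remark at the start of that lemma's proof). For (1) and (2) the lemma gives $v\in D_A$ and $v\in D_B$, so the expected cost is $\tfrac12(1+1)=1$. For (3), whichever parity $p(v)$ has, exactly one of (3-e), (3-o) applies, and in both subcases exactly one of $D_A, D_B$ contains $v$; hence the expected cost is $\tfrac12$. The same parity dichotomy handles (4-1) via (4-1-e)/(4-1-o) and (4-3) via (4-3-e)/(4-3-o), each giving $\tfrac12$; and (4-2) gives $v\notin D_A$ and $v\notin D_B$, so the expected cost is $0$.

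The only points needing a word of care are that the degree hypotheses in Lemma~\ref{LMA:rand_up.vtx_cost} (final degrees, and $\mathrm{deg}_v(u)$) match verbatim those of the corresponding cases of Lemma~\ref{LMA:rand_up.ppab}, and that the listed cases are exhaustive and mutually exclusive for $v\neq v_1$ — checked by splitting on $\mathrm{deg}(v)\in\{1,2,\ge 3\}$, and when $\mathrm{deg}(v)=1$ on whether the unique neighbour $u$ satisfies $\mathrm{deg}(u)\le 2$ or $\mathrm{deg}(u)\ge 3$, and in the latter case on $\mathrm{deg}_v(u)$. There is essentially no genuine obstacle here: all the work was done in Lemma~\ref{LMA:rand_up.ppab}, and this statement is a bookkeeping corollary that translates ``membership in $D_A$ and $D_B$'' into ``expected cost under the $1/2$--$1/2$ mixture''.
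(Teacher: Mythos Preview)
Your proposal is correct and matches the paper's own proof essentially verbatim: the paper also observes that the expected cost for $v$ is $\tfrac12\big(\mathbf{1}[v\in D_A]+\mathbf{1}[v\in D_B]\big)$ and then reads off each case directly from Lemma~\ref{LMA:rand_up.ppab}. If anything, your write-up is slightly more explicit about the parity dichotomy and the exhaustiveness of the cases than the paper's one-line appeal to that lemma.
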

\begin{proof}
	\ifnum \count10 > 0
	%
	%
	$RA$
	%
	$v \notin D_{A}$
	$v \in D_{A}$
	$v$
	%
	%
	$v \in D_{A}$
	$v$
	%
	$v \notin D_{A}$
	%
	%
	%
	
	%
	\fi
	\ifnum \count11 > 0
	%
	%
	$RA$ chooses $A$ ($B$) with the probability $1/2$ at the beginning. 
	Thus, 
	the expected cost of $RA$ for a vertex $v$ is $1/2$ 
	if either $v \notin D_{A}$ and $v \in D_{B}$ or 
	$v \in D_{A}$ and $v \notin D_{B}$. 
	It is easy to see that  
	the expected cost for $v$ is one 
	if both $v \in D_{A}$ and $v \in D_{B}$. 
	Also, 
	the expected cost for $v$ is zero 
	if both $v \notin D_{A}$ and $v \notin D_{B}$. 
	The above argument with each property in Lemma~\ref{LMA:rand_up.ppab} gives each cost in the statement of this lemma. 
	\fi
\end{proof}
\ifnum \count10 > 0
%
%
$v$
%
$v$
%
%
$v$
%
%
$u$
%
$u$
$v$
%
%
%
{\em 
(i) $v \ne v_{1}$, 
(ii) ${deg}(u) \geq 3$, 
(iii)
${deg}(v) = 3$%
${deg}(v) = 2$
${deg}(v') \geq 3$
${deg}_{v}(v') \geq 3$
%
%
3
3
(i) $u_{1}$
(ii) ${deg}(u_{1}) = {deg}(u_{2}) = {deg}(u_{3}) = 3$
(iii) $OPT$
%

%
%

%
%
\ifnum \count11 > 0
%
%
We say that a vertex $v$ {\em dominates} vertices adjacent to $v$ if $OPT$ selects $v$.  
We also say that $v$ {\em dominates} $v$ itself. 
If a vertex $u$ arrives at a vertex $v$, 
$(v, u)$ denotes the edge between $v$ and $u$. 
Suppose that a vertex $u$ arrives at a vertex $v$. 
Also, 
suppose that $u$ is dominated by a vertex in $U$ and 
$v$ is dominated by a vertex not in $U$, 
in which $U$ is the set of $u$ and all the descendants of  $u$. 
%
Then, 
we say that the edge $(v, u)$ is {\em free}.
We say that a free edge $(v, u)$ is {\em fixed}
if this edge satisfies the following three conditions: 
(i) $v \ne v_{1}$, 
(ii) ${deg}(u) \geq 3$, and 
(iii)
either 
${deg}(v) = 3$ or 
${deg}(v) = 2$,  
${deg}(v') \geq 3$ and 
${deg}_{v}(v') \geq 3$, 
in which 
$v'(\ne u)$ is the vertex at which $v$ arrives. 
%
%
We say that a vertex triplet $( u_{1}, u_{2}, u_{3} )$ is {\em good} 
if the vertices $u_{1}, u_{2}$ and $u_{3}$ satisfy the following three conditions: 
(i) both $u_{1}$ and $u_{3}$ are adjacent to $u_{2}$,
(ii) ${deg}(u_{1}) = {deg}(u_{2}) = {deg}(u_{3}) = 3$, 
and 
(iii) $OPT$ selects $u_{1}$ and $u_{3}$. 
In the rest of this section, 
we will show the following lemma. 
\fi
\begin{LMA} \label{LMA:rand_up.basic_p1_7}
	\ifnum \count10 > 0
	%
	%
	$\frac{{\mathbb E}[C_{RA}(\sigma)]}{C_{OPT}(\sigma)}$
	\begin{description}
	\itemsep=-2.0pt
	\setlength{\leftskip}{10pt}
	\item[(P1)]
	%
	\item[(P2)]
	%
	\item[(P3)]
	$OPT$
	%
	\item[(P4)]
	$OPT$
	%
	\item[(P5)]
	%
	\item[(P6)]
	$v$
	%
	\item[(P7)]
	%
	\end{description}
	\fi
	\ifnum \count11 > 0
	%
	%
	There exists an input $\sigma$ which maximizes $\frac{{\mathbb E}[C_{RA}(\sigma)]}{C_{OPT}(\sigma)}$ and satisfies the following seven properties.  
	\begin{description}
	\itemsep=-2.0pt
	\setlength{\leftskip}{10pt}
	\item[(P1)]
	Any free edge is fixed
	(Lemmas~\ref{LMA:rand_up.p1} and \ref{LMA:rand_up.p1r}). 
	\item[(P2)]
	The degree of any vertex is at most three
	(Lemma~\ref{LMA:rand_up.p2}). 
	\item[(P3)]
	The degree of any vertex selected by $OPT$ is three
	(Lemma~\ref{LMA:rand_up.p3}). 
	\item[(P4)]
	For any free edge $(v, u)$, 
	$OPT$ does not select $v$
	(Lemma~\ref{LMA:rand_up.p4}). 
	\item[(P5)]
	Good vertex triplets are not contained
	(Lemma~\ref{LMA:rand_up.p5}). 
	\item[(P6)]
	For any free edge $(v, u)$, 
	the degree of $v$ is not two
	(Lemma~\ref{LMA:rand_up.p6}). 
	\item[(P7)]
	The degree of any vertex is either one or three
	(Lemma~\ref{LMA:rand_up.degnot2}).
	\end{description}
	\fi
\end{LMA}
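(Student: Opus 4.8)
The lemma asserts the existence of a single "worst-case" input simultaneously satisfying (P1)–(P7). The natural strategy is to establish each property as a separate auxiliary lemma (exactly as the cross-references indicate), and then chain them together: start from an arbitrary input $\sigma$ maximizing the ratio $\mathbb{E}[C_{RA}(\sigma)]/C_{OPT}(\sigma)$ (which exists, since on any fixed-size tree the ratio is bounded and by Lemma~\ref{LMA:rand_up.vtx_cost} the per-vertex expected costs are drawn from the finite set $\{0,1/2,1\}$, so a maximizer can be taken among trees of bounded size — or one argues via a supremum and an $\varepsilon$-approximate maximizer), and repeatedly apply local surgery operations that preserve optimality of the ratio while enforcing one property at a time. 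The key structural point is that each transformation only ever deletes or redirects subtrees in a way that does not decrease $\mathbb{E}[C_{RA}]$ per unit of $C_{OPT}$, using the exact per-vertex cost accounting of Lemma~\ref{LMA:rand_up.vtx_cost} together with the behavioral characterization of $D_A$, $D_B$ in Lemma~\ref{LMA:rand_up.ppab}.

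**Order of steps.** I would prove the seven sub-lemmas essentially in the stated order, because later properties are cleanest to establish in the presence of earlier ones. First, (P1): show any non-fixed free edge $(v,u)$ can be modified — e.g.\ by pruning or augmenting the subtree hanging below $u$, or by adjusting $\deg(v)$ — so that either the edge ceases to be free or becomes fixed, without hurting the ratio; this is split across Lemmas~\ref{LMA:rand_up.p1} and \ref{LMA:rand_up.p1r} (presumably one direction each, e.g.\ handling condition (ii)/(iii) separately). Next (P2): any vertex of degree $\geq 4$ can have an adjacent leaf or subtree removed; by Lemma~\ref{LMA:rand_up.ppab}(2) the expected $RA$-cost of such a high-degree vertex is already $1$ and removing a neighbor does not lower it, while $C_{OPT}$ can only drop, so the ratio does not decrease — care is needed that removal doesn't destroy domination, which is where one uses that $OPT$ and $RA$ still dominate the trimmed tree. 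Then (P3): if $OPT$ selects a vertex of degree $\le 2$, reroute so that all its private neighbors are covered more cheaply or merge it into a degree-three configuration; (P4), (P5), (P6) follow similar local-replacement arguments, each leveraging (P1)–(P3). Finally (P7) combines (P2) (degree $\le 3$) with (P6) (free edges have no degree-two endpoint $v$) and a further argument eliminating remaining degree-two vertices, to conclude every degree is $1$ or $3$.

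**Main obstacle.** The hard part is making the surgery arguments mutually consistent — i.e.\ verifying that enforcing a later property does not reintroduce a violation of an earlier one. Each transformation must be shown to preserve all previously established properties, which requires a careful invariant-based bookkeeping rather than seven isolated arguments; in particular the interplay between "free edge" (a condition about which vertex dominates which, hence about $OPT$'s choices and the descendant structure) and the parity-driven behavior of $A$ and $B$ is delicate, since changing $p(v)$ parity by altering a path length flips membership in $D_A$ versus $D_B$ via Lemma~\ref{LMA:rand_up.ppab}(3),(4). I would therefore state each sub-lemma as "among inputs maximizing the ratio and satisfying (P1),\ldots,(P$k{-}1$), there is one also satisfying (P$k$)", so that the composition is immediate and the final input satisfies all seven; the burden then shifts into each sub-lemma's proof, where the ratio-preservation inequality is checked by summing the per-vertex costs from Lemma~\ref{LMA:rand_up.vtx_cost} over the affected vertices before and after surgery.
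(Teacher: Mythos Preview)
Your approach is essentially the paper's: each of (P1)--(P7) is established by a separate surgery lemma, chained so that the $k$-th assumes the earlier properties and outputs an input also satisfying (P$k$) with ratio at least as large; the paper's surgeries are the explicit divide/connect operations $f_1, f_2, f_3$ on inputs, with cost-accounting via Lemma~\ref{LMA:rand_up.vtx_cost} exactly as you describe.

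One correction on the role of Lemma~\ref{LMA:rand_up.p1r}: it is not that \ref{LMA:rand_up.p1} and \ref{LMA:rand_up.p1r} handle different sub-conditions of ``fixed''. Rather, \ref{LMA:rand_up.p1} first enforces (P1) unconditionally; then \ref{LMA:rand_up.p2} enforces (P2) assuming (P1), but its transformation (moving a subtree from a high-degree vertex to a degree-three vertex elsewhere) is \emph{not} shown to preserve (P1). Lemma~\ref{LMA:rand_up.p1r} then re-runs the (P1) argument under the additional hypothesis (P2), noting that the divide operations $f_1,f_2$ only decrease degrees and hence now preserve (P2). This back-and-forth is precisely the ``mutual consistency'' obstacle you flagged, and the paper resolves it by this one extra pass rather than by proving (P2)'s surgery preserves (P1) directly. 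From (P3) onward each lemma does explicitly verify all earlier properties survive, and (P7) is deduced from (P1)--(P6) without further surgery (Lemma~\ref{LMA:rand_up.degnot2}).
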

%
%
$RA$
(P1)
(P7)
(P2)
%
%
%
%
%
$V_{1} = V \setminus U$
$E_{1} = \{ \{ u, u' \} \in E \mid u, u' \in V_{1} \}$%
$S_{1}$
%
%
$E_{2} = \{ \{ u, u' \} \in E \mid u, u' \in U \}$%
$S_{2}$
%

%
%
%
$V_{3} = V' \cup V''$
$E_{3} = E' \cup E'' \cup \{ \{ v', u''_{1} \} \}$%
$n''$
$u''_{i}$
$S_{3} = (u'_{1}, \ldots, u'_{n'}, u''_{1}, \ldots, u''_{n''})$
$n'$
$u'_{i}$
\fi
\ifnum \count11 > 0
%
%
This lemma shows that 
we only have to consider an input satisfying the properties from (P1) to (P7) 
to evaluate the competitive ratio of $RA$. 
It is easy to see that 
if (P7) holds, both (P2) and (P6) clearly hold. 
However, 
we must prove some lemmas including ones about the both properties before showing (P7).
To prove the above lemma and the following lemmas, 
we give a few definitions about transformations of an input. 
First, 
we ``divide'' an input into two inputs.  
For an input $\sigma = (V, E, S)$ and a vertex $v \in V$, 
we define the input $f_{1}(\sigma, v) = (V_{1}, E_{1}, S_{1})$ such that 
$V_{1} = V \setminus U$,  
in which 
$U$ is the set of $v$ and all the descendants of $v$, 
$E_{1} = \{ \{ u, u' \} \in E \mid u, u' \in V_{1} \}$. 
That is, 
$(V_{1}, E_{1})$ is the subgraph of $(V, E)$ induced by $V_{1}$, and 
$S_{1}$ is the subsequence of $S$ consisting of all the vertices of $V_{1}$
(Fig.~\ref{fig:fig_inputfix}). 
Also, 
we define the input $f_{2}(\sigma, v) = (U, E_{2}, S_{2})$ such that 
$E_{2} = \{ \{ u, u' \} \in E \mid u, u' \in U \}$, 
that is, 
$(U, E_{2})$ is the subgraph of  $(V, E)$ induced by $U$, and 
$S_{2}$ is the subsequence of $S$  consisting of all the vertices of $U$. 
Moreover, 
we ``connect'' two inputs.  
For an input $\sigma' = (V', E', S')$, a vertex $v' \in V'$ and 
an input $\sigma'' = (V'', E'', S'')$, 
we define $f_{3}(\sigma', v', \sigma'') = (V_{3}, E_{3}, S_{3})$ such that 
$V_{3} = V' \cup V''$, 
$E_{3} = E' \cup E'' \cup \{ \{ v', u''_{1} \} \}$, 
in which 
$u''_{i}$ is the $i (\in [1, n''])$-th vertex in $S''$ and 
$n''$ is the number of vertices in $S''$, 
and 
$S_{3} = (u'_{1}, \ldots, u'_{n'}, u''_{1}, \ldots, u''_{n''})$, 
in which 
$n'$ is the number of vertices in $S'$ and 
$u'_{i}$ is the $i (\in [1, n'])$-th vertex in $S'$. 
\fi
\ifnum \count12 > 0
\begin{figure*}[ht]
	 \begin{center}
	  \includegraphics[width=130mm]{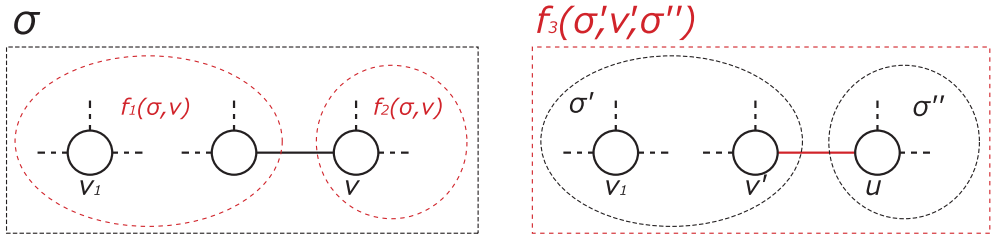}
	 \end{center}
	 \caption{
\ifnum \count10 > 0
%
%
$v$
%
$u$
\fi
\ifnum \count11 > 0
%
%
``Dividing'' and ``connecting'' inputs. 
The left figure shows that 
we construct the two inputs $f_{1}(\sigma, v)$ and $f_{2}(\sigma, v)$ in terms of a vertex $v$ from an input $\sigma$. 
The right figure shows that 
we obtain the input $f_{3}(\sigma', v', \sigma'')$ by adding an input $\sigma''$ to an input $\sigma'$ in terms of a vertex $v'$ in $\sigma'$. 
$u$ was the first revealed vertex $v_{1}$ in $\sigma''$. 
\fi
			}
	\label{fig:fig_inputfix}
\end{figure*}
\fi

\begin{LMA} \label{LMA:rand_up.input_div_opt}
	\ifnum \count10 > 0
	%
	%
	%
	$D_{OPT}(f_{1}(\sigma, u)) \cup D_{OPT}(f_{2}(\sigma, u)) = D_{OPT}(\sigma)$
	%
	
	%
	\fi
	\ifnum \count11 > 0
	%
	%
	Suppose that 
	the vertex set of an input $\sigma$ contains two vertices $v$ and $u$
	such that the edge $(v, u)$ is free. 
	Then, 
	there exists $OPT$ such that 
	$D_{OPT}(f_{1}(\sigma, u)) \cup D_{OPT}(f_{2}(\sigma, u)) = D_{OPT}(\sigma)$. 
	\fi
\end{LMA}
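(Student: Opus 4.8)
The plan is to show that dividing an input at a free edge $(v,u)$ does not change the optimal cost, by exhibiting a single $OPT$ whose selections split exactly along the cut. First I would recall what "free" gives us: $U$ (the vertex $u$ together with all its descendants) is dominated by some vertex in $U$, and $v$ is dominated by some vertex outside $U$. Since the revealed tree is always connected and $u$ arrives at $v$, the edge $(v,u)$ is the unique edge between $U$ and $V\setminus U$; in particular $U$ and $V_1 = V\setminus U$ partition $V$, and the induced subgraphs $(U,E_2)$ and $(V_1,E_1)$ together with the single edge $\{v,u\}$ reconstitute the original tree. So the combinatorial picture is: $f_2(\sigma,u)$ is the subtree hanging below (and including) $u$, and $f_1(\sigma,u)$ is everything else.

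Next I would take an arbitrary optimal dominating set $D^\star$ of $\sigma$ and argue we may assume it behaves "locally" at the cut. The key observation is that freeness already tells us $D^\star$ can be chosen so that (a) some vertex of $U$ dominates $u$ and (b) some vertex of $V_1$ dominates $v$ — more precisely, the definition of free says this holds for \emph{the} relevant dominators, and any dominating set of $\sigma$ restricted to $U$ must dominate all of $U$ except possibly $u$, which is handled by the inside dominator. I would then define $D_1 = D^\star \cap V_1$ and $D_2 = D^\star \cap U$, and check two things: that $D_1$ is a dominating set of $f_1(\sigma,u)$ and $D_2$ is a dominating set of $f_2(\sigma,u)$. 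For $D_2$: every vertex of $U$ other than $u$ has all its neighbors inside $U$ (the only cross edge is at $u$), so it is dominated within $U$ by $D^\star$, hence by $D_2$; and $u$ itself is dominated from inside by freeness, so we may assume $D^\star$ contains a witness in $U$, giving $u \in N[D_2]$. For $D_1$: symmetrically, every vertex of $V_1$ other than $v$ has all neighbors in $V_1$, and $v$ is dominated from outside $U$ by freeness, so $D_1$ dominates it. Hence $D_1 \cup D_2$ is a dominating set of $\sigma$ of size $|D^\star|$, and since $D_1, D_2$ are disjoint, $|D_1| + |D_2| = |D^\star|$; as each $D_i$ is feasible for its subinput, $C_{OPT}(f_1(\sigma,u)) + C_{OPT}(f_2(\sigma,u)) \le |D^\star| = C_{OPT}(\sigma)$. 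Conversely, gluing any optimal $OPT$-solutions on the two subinputs yields a dominating set of $\sigma$ (each side dominates its own vertices; the only subtlety is again $u$ and $v$, both already covered from their respective sides), so the reverse inequality holds and we get equality throughout. Choosing $OPT$ on the two subinputs to be exactly these restrictions $D_1$ and $D_2$ gives the stated identity $D_{OPT}(f_1(\sigma,u)) \cup D_{OPT}(f_2(\sigma,u)) = D_{OPT}(\sigma)$.

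The main obstacle I anticipate is the bookkeeping around the boundary vertices $u$ and $v$: one has to be careful that an arbitrary optimal $D^\star$ of $\sigma$ need not \emph{a priori} contain an inside-witness for $u$ or an outside-witness for $v$ — it could dominate $u$ from $v$'s side or vice versa. This is precisely where the hypothesis "the edge is free" is used, and I would spell out that freeness is exactly the guarantee that lets us \emph{choose} an optimal $D^\star$ with the right local structure (e.g. if $D^\star$ dominates $u$ only via $v$, then since freeness says $u$ is also dominated from inside $U$ in \emph{some} optimal configuration, we swap; one should verify such a swap does not increase cost, which follows because we are comparing two optimal sets). A secondary minor point is confirming that $(v,u)$ really is the unique cross edge — this is immediate from the tree structure and the arrival rule stated in the model (each new vertex has exactly one edge to the past), so $U$ is a "branch" of the tree. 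Once those two points are nailed down, the rest is the routine disjoint-union counting sketched above.
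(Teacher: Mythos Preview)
Your proposal is correct and follows essentially the same approach as the paper: restrict $D_{OPT}(\sigma)$ to each side of the cut, check feasibility of the two restrictions using the free-edge hypothesis, and verify optimality by gluing. The paper phrases the optimality step as a contradiction (if the restrictions were not optimal on the subinputs, gluing better solutions would beat $OPT$ on $\sigma$), while you write out both inequalities, but the content is identical.

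One small simplification: the obstacle you anticipate does not actually arise. In this paper, ``free'' is defined \emph{relative to the $OPT$ under discussion} --- it says that in the current $D_{OPT}(\sigma)$, $u$ is dominated from inside $U$ and $v$ from outside $U$. So you may take $D^\star = D_{OPT}(\sigma)$ directly; no swapping or re-selection of an optimal set is needed, and the restrictions $D_1, D_2$ are automatically dominating sets of the two subinputs. Your more cautious reading (treating freeness as a property that holds for \emph{some} optimum and then arguing you can swap) is not wrong, just unnecessary here.
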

\begin{proof}
	\ifnum \count10 > 0
	%
	%
	%
	%
	$\sigma$
	%
	$D_{OFF}(f_{1}(\sigma, u)) \cup D_{OFF}(f_{2}(\sigma, u)) = D_{OPT}(\sigma)$
	\begin{equation} \label{LMA:rand_up.input_div_opt:eq.1}
		C_{OFF}(f_{1}(\sigma, u)) + C_{OFF}(f_{2}(\sigma, u)) = C_{OPT}(\sigma)
	\end{equation}
	%
	%
	%
	$OFF$
	\begin{equation} \label{LMA:rand_up.input_div_opt:eq.2}
		C_{OPT'}(f_{1}(\sigma, u)) + C_{OPT'}(f_{2}(\sigma, u)) 
		< C_{OFF}(f_{1}(\sigma, u)) + C_{OFF}(f_{2}(\sigma, u))
	\end{equation}
	%
	%
	$f_{1}(\sigma, u)$
	$D_{OPT'}(f_{1}(\sigma, u)) \cup D_{OPT'}(f_{2}(\sigma, u))$
	%
	$D_{OPT}(\sigma)$
	%
	%
	
	%
	\fi
	\ifnum \count11 > 0
	%
	%
	Suppose that the vertex set of an input $\sigma$ contains two vertices $v$ and $u$ 
	such that the edge $(v, u)$ is free. 
	We consider an offline algorithm $OFF$  for the input $f_{1}(\sigma, u)$ and input $f_{2}(\sigma, u)$ which selects the same vertices 
	as those selected by $OPT$ for $\sigma$. 
	That is, 
	$D_{OFF}(f_{1}(\sigma, u)) \cup D_{OFF}(f_{2}(\sigma, u)) = D_{OPT}(\sigma)$, and hence, 
	\begin{equation} \label{LMA:rand_up.input_div_opt:eq.1}
		C_{OFF}(f_{1}(\sigma, u)) + C_{OFF}(f_{2}(\sigma, u)) = C_{OPT}(\sigma). 
	\end{equation}
	We prove that $OFF$ is optimal by contradiction. 
	We assume that $OFF$ is not optimal, 
	that is, 
	there exists an offline algorithm $OPT'$ such that 
	\begin{equation} \label{LMA:rand_up.input_div_opt:eq.2}
		C_{OPT'}(f_{1}(\sigma, u)) + C_{OPT'}(f_{2}(\sigma, u)) 
		< C_{OFF}(f_{1}(\sigma, u)) + C_{OFF}(f_{2}(\sigma, u)). 
	\end{equation}
	By the definitions of $f_{1}(\sigma, u)$ and $f_{2}(\sigma, u)$, 
	$D_{OPT'}(f_{1}(\sigma, u)) \cup D_{OPT'}(f_{2}(\sigma, u))$ is a dominating set of the graph of $\sigma$. 
	However, 
	the size of this set is smaller than that of $D_{OPT}(\sigma)$  
	by Eqs.~(\ref{LMA:rand_up.input_div_opt:eq.1}) and (\ref{LMA:rand_up.input_div_opt:eq.2}), 
	which contradicts the optimality of $OPT$. 
	Therefore, 
	$OFF$ is optimal. 
	\fi
\end{proof}
\ifnum \count10 > 0
%
%

%

%
\fi
\ifnum \count11 > 0
%
%

%
\fi
\begin{LMA} \label{LMA:rand_up.input_cnct_opt}
	\ifnum \count10 > 0
	%
	%
	$OPT$
	%
	$D_{OPT}(f_{3}(\sigma, v, \hat{\sigma})) = D_{OPT}(\sigma)$
	$\hat{\sigma} = (\{ u \}, \varnothing, u)$
	$u$
	%
	
	%
	\fi
	\ifnum \count11 > 0
	%
	%
	Suppose that 
	the graph in an input $\sigma$ contains a vertex $v$ and $OPT$ selects $v$. 
	Then, 
	there exists $OPT$ such that 
	$D_{OPT}(f_{3}(\sigma, v, \hat{\sigma})) = D_{OPT}(\sigma)$, 
	in which 
	$\hat{\sigma} = (\{ u \}, \varnothing, u)$ and 
	$u$ is a vertex not in the vertex set of the graph in $\sigma$. 
	\fi
\end{LMA}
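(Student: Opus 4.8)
The plan is to reduce the statement to a purely graph-theoretic fact. Since $OPT$ is an \emph{offline} algorithm, the revelation order recorded in the third component of an input is irrelevant to it, and the only requirement on $D_{OPT}$ is that it be a minimum-size dominating set of the underlying graph. Write $G = (V, E)$ for the graph of $\sigma$ and $G_{3}$ for the graph of $f_{3}(\sigma, v, \hat{\sigma})$; by the definitions of $f_{3}$ and of $\hat{\sigma} = (\{u\}, \varnothing, u)$ we have $V(G_{3}) = V \cup \{u\}$ and $E(G_{3}) = E \cup \{\{v, u\}\}$, so $G_{3}$ is simply $G$ with a pendant vertex $u$ attached to $v$. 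I will show that the set $D_{OPT}(\sigma)$ provided by the hypothesis, which contains $v$, is a minimum dominating set of $G_{3}$; then one may take $OPT$ on $f_{3}(\sigma, v, \hat{\sigma})$ to output exactly this set, which is what the lemma asserts.

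First I would check that $D_{OPT}(\sigma)$ dominates $G_{3}$. Every vertex of $V$ with a dominator in $G$ keeps one in $G_{3}$ because $E \subseteq E(G_{3})$, and the new vertex $u$ is adjacent to $v \in D_{OPT}(\sigma)$ and hence dominated too. Therefore $C_{OPT}(f_{3}(\sigma, v, \hat{\sigma})) \le |D_{OPT}(\sigma)| = C_{OPT}(\sigma)$.

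For the matching lower bound I would argue by a short exchange. Let $D$ be any minimum dominating set of $G_{3}$. If $u \in D$, replace $u$ by $v$: the new set is no larger and still dominating, since the only neighbour of $u$ is $v$ and $u$ itself becomes dominated by $v$. So we may assume $u \notin D$, whence $D \subseteq V$. Then $D$ is a dominating set of $G$: for $w \in V \setminus D$ any dominator of $w$ in $G_{3}$ lies in $V$, because the only vertex adjacent to $u$ is $v$ and, in the case $w = v$, that dominator cannot be $u$ since $u \notin D$. Hence $|D| \ge C_{OPT}(\sigma)$, and together with the previous paragraph $C_{OPT}(f_{3}(\sigma, v, \hat{\sigma})) = C_{OPT}(\sigma)$, so $D_{OPT}(\sigma)$ is minimum for $G_{3}$.

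I do not expect a real obstacle here. The only delicate point is the case $w = v$ in the lower-bound step, where one must make sure $v$ is dominated \emph{within} $G$ and not merely through the new pendant $u$; this is exactly why the exchange step is needed, as it lets us assume $u \notin D$. Everything else is a routine manipulation of dominating sets, so the proof should be short.
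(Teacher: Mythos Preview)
Your proof is correct and follows essentially the same route as the paper: both show that $D_{OPT}(\sigma)$ dominates the enlarged graph because $v\in D_{OPT}(\sigma)$ covers the new leaf $u$, and both handle the converse bound via the exchange ``replace $u$ by $v$'' to turn any dominating set of $G_{3}$ into one of $G$ of no larger size. The paper phrases the second part as a proof by contradiction rather than a direct lower bound, but the content is identical.
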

\begin{proof}
	\ifnum \count10 > 0
	%
	%
	%
	%
	$OPT$
	%
	$\hat{\sigma} = (\{ u \}, \varnothing, u)$
	$u$
	%
	%
	%
	$\sigma$
	%
	%
	$D_{OFF}(f_{3}(\sigma, v, \hat{\sigma})) = D_{OPT}(\sigma)$
	\begin{equation} \label{LMA:rand_up.input_cnct_opt:eq.1}
		C_{OFF}(f_{3}(\sigma, v, \hat{\sigma})) = C_{OPT}(\sigma)
	\end{equation}
	%
	%
	%
	$OFF$
	\begin{equation} \label{LMA:rand_up.input_cnct_opt:eq.2}
		C_{OPT'}(f_{3}(\sigma, v, \hat{\sigma})) < C_{OFF}(f_{3}(\sigma, v, \hat{\sigma}))
	\end{equation}
	%
	
	%
	$f_{3}(\sigma, v, \hat{\sigma})$
	$D_{OPT'}(f_{3}(\sigma, v, \hat{\sigma}))$%
	%
	%
	$D_{OPT}(\sigma)$
	%
	%
	
	%
	$OPT'$
	$OPT'$
	%
	%
	%
	$D_{OPT}(\sigma)$
	%
	%
	
	%
	\fi
	\ifnum \count11 > 0
	%
	%
	We prove this lemma in a similar way to the proof of Lemma~\ref{LMA:rand_up.input_div_opt}. 
	Suppose that the graph in the vertex set of the graph in an input $\sigma$ contains a vertex $v$ and $OPT$ selects $v$. 
	Let $\hat{\sigma} = (\{ u \}, \varnothing, u)$, 
	in which 
	$u$ is a vertex not in the graph in $\sigma$. 
	We consider an offline algorithm $OFF$ for the input $f_{3}(\sigma, v, \hat{\sigma})$ which selects the same vertices 
	as those selected by $OPT$. 
	That is, 
	$D_{OFF}(f_{3}(\sigma, v, \hat{\sigma})) = D_{OPT}(\sigma)$, and hence, 
	\begin{equation} \label{LMA:rand_up.input_cnct_opt:eq.1}
		C_{OFF}(f_{3}(\sigma, v, \hat{\sigma})) = C_{OPT}(\sigma). 
	\end{equation}
	We prove that $OFF$  is optimal by contradiction. 
	We assume that $OFF$ is not optimal, 
	that is, 
	there exists an offline algorithm $OPT'$ such  that 
	\begin{equation} \label{LMA:rand_up.input_cnct_opt:eq.2}
		C_{OPT'}(f_{3}(\sigma, v, \hat{\sigma})) < C_{OFF}(f_{3}(\sigma, v, \hat{\sigma})). 
	\end{equation}
	If $OPT'$ does not select $u$ for $f_{3}(\sigma, v, \hat{\sigma})$, 
	$D_{OPT'}(f_{3}(\sigma, v, \hat{\sigma}))$ is a dominating set of the graph in $\sigma$. 
	The size of this set is smaller than that of $D_{OPT}(\sigma)$
	by Eqs.~(\ref{LMA:rand_up.input_cnct_opt:eq.1}) and (\ref{LMA:rand_up.input_cnct_opt:eq.2}), 
	which contradicts the optimality of $OPT$. 
	Next, we consider the case in which $OPT'$ selects $u$. 
	Suppose that $OPT'$ selects $v$ instead of $u$. 
	Then, 
	the vertex set is also a dominating set of the graph in $f_{3}(\sigma, v, \hat{\sigma})$.  
	Moreover, 
	it is also a dominating set of the graph in $\sigma$. 
	Hence, 
	the size of this set is smaller than that of $D_{OPT}(\sigma)$ by Eqs.~(\ref{LMA:rand_up.input_cnct_opt:eq.1}) and (\ref{LMA:rand_up.input_cnct_opt:eq.2}). 
	This fact contradicts the optimality of $OPT$, and thus 
	$OFF$ is optimal. 
	\fi
\end{proof}
\ifnum \count10 > 0
%
%

%

%
\fi
\ifnum \count11 > 0
%
%

%
\fi
%
\begin{LMA} \label{LMA:rand_up.p1}
	\ifnum \count10 > 0
	%
	%
	%
	%
	%
	(a) 
	$\sigma'$
	%
	(b)
	$\frac{ {\mathbb E}[C_{RA}(\sigma)] }{ C_{OPT}(\sigma) } 
	\leq \frac{ {\mathbb E}[C_{RA}(\sigma')] }{ C_{OPT}(\sigma') }$
	%
	
	%
	\fi
	\ifnum \count11 > 0
	%
	%
	Suppose that the graph of an input $\sigma$ contains at least one free edge which is not fixed. 
	Then, 
	there exits an input $\sigma'$ such that 
	(a) 
	any free edge in the graph of $\sigma'$ is fixed, 
	that is (P1) holds, and 
	(b)
	$\frac{ {\mathbb E}[C_{RA}(\sigma)] }{ C_{OPT}(\sigma) } 
	\leq \frac{ {\mathbb E}[C_{RA}(\sigma')] }{ C_{OPT}(\sigma') }$. 
	\fi
\end{LMA}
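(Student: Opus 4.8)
The plan is to take a free edge $(v,u)$ in $\sigma$ that is not fixed, and to apply the "divide and connect" operations $f_1, f_2, f_3$ to modify $\sigma$ locally around $u$ so that, after the change, the edge at the root of the descendant block of $u$ becomes fixed, while not creating new non-fixed free edges and not decreasing the ratio $\mathbb{E}[C_{RA}]/C_{OPT}$. First I would split $\sigma$ at $u$ into $\sigma_1 = f_1(\sigma, u)$ (the part not below $u$) and $\sigma_2 = f_2(\sigma, u)$ (the block rooted at $u$), invoking Lemma~\ref{LMA:rand_up.input_div_opt} so that $C_{OPT}(\sigma) = C_{OPT}(\sigma_1) + C_{OPT}(\sigma_2)$, and correspondingly $\mathbb{E}[C_{RA}(\sigma)]$ splits as a sum of the expected costs on the two pieces (this needs a short argument that $RA$'s behavior on each piece is independent of the other given the free-edge condition — $u$ is dominated from within $U$ and $v$ from outside, so the coin flip aside, the selections do not interact). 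Then I would replace $\sigma_2$ by a small canonical gadget $\hat{\sigma}_2$ that makes the connecting edge fixed (e.g. $\deg(u)\ge 3$ with $\deg(v)$ forced to $3$, or $\deg(v)=2$ with the appropriate branching at $v'$), and reconnect via $f_3(\sigma_1, v, \hat{\sigma}_2)$.

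**Choosing the gadget and checking the ratio.**
The heart of the argument is picking $\hat{\sigma}_2$ so that it is at least as "bad" for $RA$ relative to $OPT$ as whatever was below $u$, and so that the new edge $(v,u)$ is genuinely fixed. Using Lemma~\ref{LMA:rand_up.vtx_cost} I would compare, on the piece rooted at $u$: the expected $RA$-cost per unit of $OPT$-cost. Since a free edge means $u$ is already covered from below and $v$ is covered from above, $OPT$ gains nothing by selecting $v$, so I may assume $OPT$ does not select $v$ in $\sigma$ either; this is exactly what lets the $f_3$-reconnection preserve optimality via Lemma~\ref{LMA:rand_up.input_cnct_opt} once I arrange that the gadget's optimal solution does not rely on $v$. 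The gadget I expect to use is: attach to $v$ a path/branch realizing $\deg(u)=3$ (so $u$ has two leaf-children besides $v$), with each of those children of degree $3$ if needed, chosen so that $OPT$ restricted to the gadget selects $u$ and the ratio $\mathbb{E}[C_{RA}(\text{gadget})]/C_{OPT}(\text{gadget})$ is at least $5/2$ — matching the target bound — while $\deg(v)$ is simultaneously forced to $3$ (or to $2$ with a heavy sibling), making $(v,u)$ fixed by definition. One must also verify (ii) $\deg(u)\ge 3$ and condition (iii) on $\deg(v)$ hold, and that $v \ne v_1$ (if the original free edge had $v = v_1$, a separate small case hanging the whole thing off a new $v_1'$ is needed, which is presumably handled in the companion Lemma~\ref{LMA:rand_up.p1r}).

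**Induction / termination.**
After one such replacement the graph has strictly fewer non-fixed free edges below $u$ (the edge $(v,u)$ is now fixed, and the replacement gadget is built to contain no non-fixed free edges), but I must ensure no new non-fixed free edges are created elsewhere — this follows because $f_1$ only deletes a subtree and $f_3$ only appends, so edges outside the modified region keep their freeness/fixedness status. Iterating over all non-fixed free edges (finitely many, since the tree is finite) yields the desired $\sigma'$ with property (a); property (b) follows by chaining the inequalities $\mathbb{E}[C_{RA}(\sigma)]/C_{OPT}(\sigma) \le \mathbb{E}[C_{RA}(\sigma')]/C_{OPT}(\sigma')$ from each step, using the standard fact that if $a/b$ and $c/d$ are such that $a/b \le c/d$ then $(a+c)/(b+d)$ lies between them, applied to the divide/reconnect decomposition.

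**Main obstacle.**
The delicate part is not the bookkeeping of $f_1, f_2, f_3$ but verifying that the replacement does not \emph{decrease} the ratio: I need the canonical gadget's local ratio to dominate the local ratio of the arbitrary subtree that was below $u$, and the subtree below $u$ could a priori be very favorable to $RA$ (ratio close to $5/2$) or very unfavorable (ratio close to $1$). The resolution is that the free-edge hypothesis pins down the "interface" cost exactly — $v$ contributes nothing extra to $OPT$, and $u$'s status is fixed — so the comparison reduces to a bounded local computation via Lemma~\ref{LMA:rand_up.vtx_cost}, and one chooses the gadget to be worst-case among admissible local configurations. Getting this inequality to go the right way in every sub-case (the several branches of condition (iii), and the parity cases for $p(v)$ affecting whether $RA$ pays $1/2$ or $0$ at $v$) is where the real work lies, and I expect the proof to enumerate these cases explicitly.
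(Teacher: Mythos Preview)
Your high-level plan diverges from the paper in a way that creates a genuine gap. You propose to split at the non-fixed free edge $(v,u)$, then \emph{replace} $\sigma_2$ by a canonical gadget $\hat\sigma_2$ and reconnect. The paper does something much simpler: it splits into $\sigma_1 = f_1(\sigma,u)$ and $\sigma_2 = f_2(\sigma,u)$ and then \emph{keeps whichever piece has the larger ratio}, with no reconnection and no gadget. The crux is a vertex-by-vertex comparison (via Lemma~\ref{LMA:rand_up.vtx_cost}) of the expected $RA$-cost before and after the split: the cost for $v$ can drop by at most $1/2$, the cost for $u$ rises to $1$ (it becomes the first vertex of $\sigma_2$), and the only configuration in which the total strictly drops is precisely when $(v,u)$ is fixed. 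Hence for a non-fixed free edge one gets ${\mathbb E}[C_{RA}(\sigma)] \le {\mathbb E}[C_{RA}(\sigma_1)] + {\mathbb E}[C_{RA}(\sigma_2)]$, and together with $C_{OPT}(\sigma) = C_{OPT}(\sigma_1)+C_{OPT}(\sigma_2)$ the mediant inequality gives the ratio bound for one of the two pieces. Each piece has strictly fewer non-fixed free edges, and one recurses.

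Your gadget idea runs into a circularity that you yourself flag but do not resolve: to replace $\sigma_2$ by $\hat\sigma_2$ without decreasing the ratio, you would need the gadget's local ratio to be at least ${\mathbb E}[C_{RA}(\sigma_2)]/C_{OPT}(\sigma_2)$, but this quantity is not bounded by any local computation---$\sigma_2$ is an arbitrary subtree and its ratio can be anything up to the (as yet unknown) competitive ratio of $RA$. The ``interface'' information (status of $v$ and $u$) does not pin down the ratio of the whole subtree. Also, your side remark that $RA$'s behaviour on the two pieces is independent is not quite right: the parity $p(\cdot)$ shifts when $u$ becomes the new root of $\sigma_2$, which is exactly why the paper tracks the change in cost vertex-by-vertex rather than claiming independence. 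The fix is to drop the gadget entirely and argue as above; the case analysis you anticipated in your last paragraph is indeed there, but it serves to compare $\sigma$ with $\sigma_1\cup\sigma_2$, not with a reconnected input.
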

\begin{proof}
	\ifnum \count10 > 0
	%
	%
	%
	%
	%
	(1) 
	$\sigma''$
	$\sigma$
	(2) 
	$
		\frac{ {\mathbb E}[C_{RA}(\sigma)] }{ C_{OPT}(\sigma) } 
			\leq \frac{ {\mathbb E}[C_{RA}(\sigma'')] }{ C_{OPT}(\sigma'') }
	$
	%
	%
	%
	%
	%
	
	%
	$\sigma$
	%
	$\sigma_{1} = f_{1}(\sigma, u)$%
	$\sigma_{2} = f_{2}(\sigma, u)$
	%
	%
	\begin{equation} \label{LMA:rand_up.p1:eq.1}
		D_{OPT}(\sigma_{1}) \cup D_{OPT}(\sigma_{2}) = D_{OPT}(\sigma)
	\end{equation}
	%
	$\sigma_{1}$
	$\sigma$
	$\sigma_{1}$
	$\sigma$
	$\sigma_{2}$
	$\sigma_{1}, \sigma_{2}$
	$\sigma_{1}$
	$\sigma$
	%
	$\sigma_{1}$
	%
	%
	%
	%
	$\sigma_{1}$
	\begin{equation} \label{LMA:rand_up.p1:eq.2}
		C_{OPT}(\sigma_{1}) + C_{OPT}(\sigma_{2}) = C_{OPT}(\sigma)
	\end{equation}
	%
	
	%
	$RA$
	$RA$
	%
	$\sigma$
	$v$
	%
	%
	$v$
	$\sigma$
	${deg}(v) \geq 3$
	$v''$
	$\sigma_{1}$
	${deg}_{u}(v) = 2$
	$v''$
	%
	$v''$
	%
	$v''$
	%
	%
	$u$
	$v''$
	%
	
	%
	$v$
	$v = v_{1}$
	$\sigma$
	%
	%
	$v \ne v_{1}$
	$\sigma$
	$v' (\ne u)$
	$\sigma$
	$\sigma_{1}$
	$\sigma$
	$\sigma$
	${deg}(v') \geq 3$
	${deg}_{v}(v') \leq 2$
	$\sigma_{1}$
	${deg}(v') \geq 3$
	${deg}_{v}(v') \geq 3$
	$\sigma_{1}$
	%
	
	%
	$\sigma$
	$\sigma$
	$v$
	$\sigma_{1}$
	${deg}(v) = 2$
	$v$
	$1/2$
	$\sigma$
	$\sigma$
	$v$
	$\sigma_{1}$
	${deg}(v) \geq 3$
	$v$
	%
	
	%
	$u$
	$u$
	$\sigma_{2}$
	%
	%
	$\sigma$
	$\sigma_{2}$
	$\sigma$
	$\sigma_{2}$
	%

	%
	\[
		{\mathbb E}[C_{RA}(\sigma)] 
			> {\mathbb E}[C_{RA}(\sigma_{1})] + {\mathbb E}[C_{RA}(\sigma_{2})]
	\]
	$v$
	$u$
	%
	$(v, u)$
	%
	$(v, u)$
	\[
		{\mathbb E}[C_{RA}(\sigma)] 
			\leq {\mathbb E}[C_{RA}(\sigma_{1})] + {\mathbb E}[C_{RA}(\sigma_{2})]
	\]
	%
	\begin{eqnarray*}
		\frac{ {\mathbb E}[C_{RA}(\sigma)] }{ C_{OPT}(\sigma) } 
			&\leq& \frac{ {\mathbb E}[C_{RA}(\sigma_{1})] + {\mathbb E}[C_{RA}(\sigma_{2})] }{C_{OPT}(\sigma_{1}) + C_{OPT}(\sigma_{2}) }\\
			&\leq& \max \left \{ \frac{ {\mathbb E}[C_{RA}(\sigma_{1})] }{ C_{OPT}(\sigma_{1}) }, \frac{ {\mathbb E}[C_{RA}(\sigma_{2})] }{ C_{OPT}(\sigma_{2}) } \right \}
	\end{eqnarray*} 
	$\sigma_{1}$
	\fi
	\ifnum \count11 > 0
	%
	%
	Suppose that the graph in an input $\sigma$ contains at least one free edge which is not fixed. 
	To prove the lemma, 
	we will prove that 
	we can construct an input $\sigma''$ satisfying the following two properties from $\sigma$: 
	(1) 
	the number of free edges which are not fixed in the graph of $\sigma''$ is less than that in the graph of $\sigma$, 
	and 
	(2)
	$	\frac{ {\mathbb E}[C_{RA}(\sigma)] }{ C_{OPT}(\sigma) } 
			\leq \frac{ {\mathbb E}[C_{RA}(\sigma'')] }{ C_{OPT}(\sigma'') }$. 
	Proving the existence of such $\sigma''$ means that 
	we can construct an input satisfying the conditions (a) and (b) in the statement of this lemma recursively. 
	Let $(v, u)$ be a free edge which is not fixed of the graph in $\sigma$. 
	Define 
	$\sigma_{1} = f_{1}(\sigma, u)$ and 
	$\sigma_{2} = f_{2}(\sigma, u)$. 
	By Lemma~\ref{LMA:rand_up.input_div_opt}, 
	\begin{equation} \label{LMA:rand_up.p1:eq.1}
		D_{OPT}(\sigma_{1}) \cup D_{OPT}(\sigma_{2}) = D_{OPT}(\sigma). 
	\end{equation}
	Thus, 
	if an edge of the graph in $\sigma_{1}$ is free (not free), 
	then that in $\sigma$ is also free (not free). 
	Also, 
	if a free edge of the graph in $\sigma_{1}$ is fixed (not fixed), 
	then that in $\sigma$ is also fixed (not fixed). 
	This fact is also true with edges in $\sigma_{2}$. 
	We remove the edge $(v, u)$ when constructing $\sigma_{1}$ and $\sigma_{2}$, 
	and hence 
	the number of free edges which are not fixed in $\sigma_{1}$ ($\sigma_{2}$) is less than that in $\sigma$ by at least one, 
	which implies that  
	$\sigma_{1}$ and $\sigma_{2}$ satisfy the property (1) described above. 
	In what follows, 
	we prove the property (2). 
	By Eq.~(\ref{LMA:rand_up.p1:eq.1}), 
	with respect to the costs of $OPT$ for $\sigma_{1}$ and $\sigma_{2}$, 
	\begin{equation} \label{LMA:rand_up.p1:eq.2}
		C_{OPT}(\sigma_{1}) + C_{OPT}(\sigma_{2}) = C_{OPT}(\sigma). 
	\end{equation}
	Next, 
	we evaluate the expected cost of $RA$ using Lemma~\ref{LMA:rand_up.vtx_cost}. 
	The expected cost of $RA$ for a vertex depends on its degree and the degree of an adjacent vertex, 
	and thus 
	vertices for which the difference between the expected costs in $\sigma$ and $\sigma_{1}$($\sigma_{2}$) can arise are only $v$, $u$ and their adjacent vertices. 
	First, 
	we discuss the expected cost for a vertex $v'' (\ne u)$ adjacent to $v$. 
	In $\sigma$, 
	if ${deg}(v) \geq 3$, ${deg}(v'') = 1$, 
	${deg}_{v'}(v) = 3$ and ${deg}_{u}(v) = 2$, 
	then the expected cost for $v''$ in $\sigma$ is zero. 
	Also, 
	the expected cost for $v''$ in $\sigma_{1}$ is $1/2$ 
	because ${deg}_{u}(v) = 2$. 
	Otherwise, 
	the expected cost for $v''$ does not change. 
	Thus, 
	the expected cost for $v''$ does not decrease. 
	In a similar way to the expected cost for $v''$, 
	that for a vertex adjacent to $u$ in $\sigma_{2}$ does not decrease.
	Next, 
	we evaluate the cost for $v$. 
	If $v = v_{1}$, 
	the expected cost for $v$ is one in either $\sigma$ or $\sigma_{1}$. 
	That is, it does not change. 
	In the rest of the proof, 
	we consider the case in which $v \ne v_{1}$. 
	${deg}(v) \geq 2$ in $\sigma$ holds, 
	and then 
	let $v' (\ne u)$ be the vertex adjacent to $v$. 
	We consider the case in which ${deg}(v) = 2$ in $\sigma$, 
	that is, 
	the degree of $v$ in $\sigma_{1}$ turns one. 
	The expected cost for $v$ in $\sigma$ is $1/2$. 
	In $\sigma$, 
	if either ${deg}(v') \leq 2$ or 
	${deg}(v') \geq 3$ and 
	${deg}_{v}(v') \leq 2$, 
	then 
	the expected cost for $v$ in $\sigma_{1}$ remains $1/2$. 
	If ${deg}(v') \geq 3$ and 
	${deg}_{v}(v') \geq 3$, 
	then 
	the expected cost in $\sigma_{1}$ becomes zero, 
	that is, 
	it decreases by $1/2$. 
	In the case in which ${deg}(v) = 3$ in $\sigma$, 
	the expected cost for $v$ in $\sigma$ is one. 
	Since ${deg}(v) = 2$ in $\sigma_{1}$, 
	the expected cost for $v$ is $1/2$ and decreases by $1/2$. 
	In the case in which ${deg}(v) \geq 4$ in $\sigma$, 
	the expected cost for $v$ in $\sigma$ is one. 
	${deg}(v) \geq 3$ in $\sigma_{1}$ as well, 
	and the expected cost for $v$ remains one. 
	Next, 
	we evaluate the cost for $u$. 
	By definition, 
	$u$ is the first revealed vertex in $\sigma_{2}$, 
	and thus 
	the expected cost for $u$ in $\sigma_{2}$ is one. 
	We discuss the cost for $u$ in $\sigma$. 
	If ${deg}(u) \leq 2$ in $\sigma$, 
	then the expected cost is at most $1/2$, 
	which means that 
	the expected cost for $u$ increases by at least $1/2$ in $\sigma_{2}$. 
	If ${deg}(u) \geq 3$ in $\sigma$, 
	then the expected cost is one, 
	which means that 
	the expected cost for $u$ does not change. 
	By the above argument, 
	the case in which 
	\[
		{\mathbb E}[C_{RA}(\sigma)] 
			> {\mathbb E}[C_{RA}(\sigma_{1})] + {\mathbb E}[C_{RA}(\sigma_{2})]
	\]
	can hold is only the case in which 
	the expected cost for $v$ decreases by $1/2$ and 
	that for $u$ does not change, 
	%
	that is, 
	the case in which $(v, u)$ is fixed. 
	Thus, 
	if $(v, u)$ is not fixed, 
	\[
		{\mathbb E}[C_{RA}(\sigma)] 
			\leq {\mathbb E}[C_{RA}(\sigma_{1})] + {\mathbb E}[C_{RA}(\sigma_{2})]. 
	\]
	This inequality together with Eq.~(\ref{LMA:rand_up.p1:eq.2}) shows 
	\begin{eqnarray*}
		\frac{ {\mathbb E}[C_{RA}(\sigma)] }{ C_{OPT}(\sigma) } 
			&\leq& \frac{ {\mathbb E}[C_{RA}(\sigma_{1})] + {\mathbb E}[C_{RA}(\sigma_{2})] }{C_{OPT}(\sigma_{1}) + C_{OPT}(\sigma_{2}) }\\
			&\leq& \max \left \{ \frac{ {\mathbb E}[C_{RA}(\sigma_{1})] }{ C_{OPT}(\sigma_{1}) }, \frac{ {\mathbb E}[C_{RA}(\sigma_{2})] }{ C_{OPT}(\sigma_{2}) } \right \}, 
	\end{eqnarray*} 
	which proves that 
	either $\sigma_{1}$ or $\sigma_{2}$ satisfies the property (2) described above. 
	\fi
\end{proof}
\ifnum \count10 > 0
%
%

%

%
\fi
\ifnum \count11 > 0
%
%

%
\fi
%
\begin{LMA} \label{LMA:rand_up.p2}
	\ifnum \count10 > 0
	%
	%
	(i) 
	%
	(ii) 
	(P1)
	%
	
	%
	%
	(a) 
	(P2)
	(b) 
	$\frac{ {\mathbb E}[C_{RA}(\sigma)] }{ C_{OPT}(\sigma) } 
	\leq \frac{ {\mathbb E}[C_{RA}(\sigma')] }{ C_{OPT}(\sigma') }$
	%
	
	%
	\fi
	\ifnum \count11 > 0
	%
	%
	Suppose that an input $\sigma$ satisfies the following conditions: 
	(i) 
	the graph in $\sigma$ contains at least one vertex of degree at least four, 
	and 
	(ii) 
	(P1) holds. 
	Then, 
	there exists an input $\sigma'$ such that 
	(a) 
	the degree of any vertex of the graph in $\sigma'$ is at most three, 
	that is, (P2) holds, 
	and 
	(b) 
	$\frac{ {\mathbb E}[C_{RA}(\sigma)] }{ C_{OPT}(\sigma) } 
	\leq \frac{ {\mathbb E}[C_{RA}(\sigma')] }{ C_{OPT}(\sigma') }$. 
	\fi
\end{LMA}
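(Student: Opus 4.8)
The plan is to argue exactly as in the proof of Lemma~\ref{LMA:rand_up.p1}: it suffices to show that, starting from $\sigma$, one can build an input $\sigma''$ for which the potential $\sum_{w}\max\{0,{deg}(w)-3\}$ is strictly smaller than for $\sigma$ and for which $\frac{{\mathbb E}[C_{RA}(\sigma)]}{C_{OPT}(\sigma)}\le\frac{{\mathbb E}[C_{RA}(\sigma'')]}{C_{OPT}(\sigma'')}$. Iterating this, and re-applying Lemma~\ref{LMA:rand_up.p1} after each step to restore (P1) (the transformation of Lemma~\ref{LMA:rand_up.p1} only deletes edges, hence never raises a degree), produces an input that satisfies (P1) and (P2) simultaneously without decreasing the ratio. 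So fix a vertex $v$ with ${deg}(v)\ge 4$.

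The first step is to read structural information out of (P1). Since $v$ has degree at least four, the edge joining $v$ to any of its descendant-neighbours $u$ cannot be fixed, because in a fixed edge the parent endpoint has degree at most three and here $v$ is that endpoint; hence by (P1) no such edge is free. As $v$ is automatically dominated by a vertex outside $U_u=\{u\}\cup\mathrm{descendants}(u)$ unless $OPT$ selects $u$ itself, it follows that for every descendant-neighbour $u$ of $v$ that is not selected by $OPT$, the vertex $u$ is dominated only by $v$; since $v$ has at least three descendant-neighbours and at most one of them can serve as the $OPT$-dominator of $v$, at least one such $u$ exists (the degenerate case in which $OPT$ selects every descendant-neighbour of $v$ being handled separately and more easily). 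In particular $OPT$ selects $v$, and for each such $u$ the set $D_{OPT}\cap U_u$ is a minimum-size set dominating $U_u\setminus\{u\}$.

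The transformation I would then use detaches one such subtree $U_u$ from $v$ and reattaches it lower down. Concretely, split $\sigma$ at $u$ into $f_1(\sigma,u)$ and $f_2(\sigma,u)$, and then, via $f_3$, reattach $f_2(\sigma,u)$ as a child of a degree-at-most-two vertex of $f_1(\sigma,u)$ that $OPT$ selects, creating such a vertex first if necessary in the manner of Lemma~\ref{LMA:rand_up.input_cnct_opt}, so that $u$ is again dominated ``from outside''. This lowers ${deg}(v)$ by one and creates no vertex of degree at least four, so the potential drops. For the cost comparison, Lemmas~\ref{LMA:rand_up.input_div_opt} and~\ref{LMA:rand_up.input_cnct_opt} control $C_{OPT}$; on the $RA$ side the key observation is that, by Lemma~\ref{LMA:rand_up.vtx_cost}, the expected cost of $RA$ for a vertex depends only on its own degree, together with the degree of its unique neighbour when it is a leaf, and never on the parity of its distance from $v_1$, so relocating $U_u$ leaves the contribution of the interior of $U_u$ to ${\mathbb E}[C_{RA}]$ unchanged; only $u$, $v$, and their neighbours can change cost, and $v$ keeps expected cost one since its degree stays at least three.

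I expect the genuine difficulty to lie in the joint bookkeeping at the boundary: detaching $U_u$ forces $u$, which was covered ``for free'' by $v$, to be re-covered, so the reattachment point (and any fresh vertices added to compensate) must be chosen so that the net change in $C_{OPT}$ is never larger than the accounting on the $RA$ side can absorb — precisely the phenomenon that forces the notion of a ``fixed'' edge in Lemma~\ref{LMA:rand_up.p1}. Carrying this out cleanly will require a short case analysis according to whether $v=v_1$, whether $OPT$ dominates $v$ through its parent or through one of its children, and the degrees of $u$ and of the children of $u$; in each case the operations $f_1,f_2,f_3$ together with Lemmas~\ref{LMA:rand_up.input_div_opt} and~\ref{LMA:rand_up.input_cnct_opt} should close the argument in the same style as the proof of the previous lemma.
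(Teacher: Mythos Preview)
Your plan—detach a child subtree from a vertex $v$ of degree at least four and reattach it below an $OPT$-selected vertex of small degree, then iterate—is exactly the paper's approach, and your use of (P1) to force $OPT$ to select $v$ corresponds to the paper's elimination of its Cases~1 and~3.

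The genuine gap is the step ``creating such a vertex first if necessary in the manner of Lemma~\ref{LMA:rand_up.input_cnct_opt}''. That lemma \emph{adds} a leaf to an $OPT$-selected vertex, raising its degree; it cannot manufacture an attachment point of degree at most two. The paper's device for the case in which every available $OPT$-selected vertex already has degree three is different: among the descendants of $v$ outside $U_u$ it locates an $OPT$-selected $v'$ of degree exactly three that has a leaf child $u'$ arriving last at $v'$ (so ${deg}_{u'}(v')=3$); by Lemma~\ref{LMA:rand_up.vtx_cost}(4-2) this $u'$ has expected $RA$-cost zero, so deleting it costs nothing on either side, and now $v'$ has degree two and can receive $U_u$. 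Two smaller slips: first, Lemma~\ref{LMA:rand_up.input_div_opt} is not applicable to $(v,u)$, since you have just argued that $u$ is dominated only by $v$ and hence $(v,u)$ is \emph{not} free; the paper instead observes directly that $D_{OPT}(\sigma)$ is still a dominating set after relocation (the new parent of $u$ lies in $D_{OPT}$), giving $C_{OPT}(\sigma'')\le C_{OPT}(\sigma)$ without that lemma. Second, the paper takes $u$ to be the \emph{last} child of $v$ (${deg}_u(v)={deg}(v)$); this is what guarantees, via Lemma~\ref{LMA:rand_up.vtx_cost}(4-2), that the expected cost for $u$ itself does not drop when it is relocated, a point your sketch leaves open.
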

\begin{proof}
	\ifnum \count10 > 0
	%
	%
	%
	$\sigma$
	$\sigma$
	${deg}(v) \geq 4$
	${deg}_{u}(v) = {deg}(v)$
	$u$
	%
	%
	%
	(1) 
	$\sigma$
	$\sigma''$
	%
	$v$
	$T''$
	$T$
	%
	$T''$
	(2) 
	$	\frac{ {\mathbb E}[C_{RA}(\sigma)] }{ C_{OPT}(\sigma) } 
			\leq \frac{ {\mathbb E}[C_{RA}(\sigma'')] }{ C_{OPT}(\sigma'') }
	$
	%
	%
	%
	%
	%
	(Case~1) 
	$OPT$
	(Case~2) 
	$OPT$
	(Case~3)
	%
	
	%
	\noindent
	{\bf (Case~1):}
	$v$
	$OPT$
	${deg}(v) \geq 4$
	$v$
	%
	$u'$
	$u'$
	%
	%
	$v$
	${deg}(v) \geq 4$
	%
	$\sigma$
	%
	%
	
	%
	\noindent
	{\bf (Case~2):}
	$U$
	$U'$
	%
	%
	%
	(Case~2-1) 
	$OPT$
	(Case~2-2) 
	$OPT$
	$\sigma_{1} = f_{1}(\sigma, u)$
	$\sigma_{2} = f_{2}(\sigma, u)$
	%
	
	%
	\noindent
	{\bf (Case~2-1):}
	Case~2
	%
	${deg}(v') = 3$
	$u'$
	${deg}_{u'}(v') = 3$
	${deg}(u') = 1$
	$OPT$
	%
	%
	$\sigma_{1}$
	%
	$\sigma'_{1} = f_{1}(\sigma_{1}, u')$
	$\sigma_{3} = f_{3}(\sigma'_{1}, v', \sigma_{2})$
	%
	%
	$\sigma_{3}$
	(2)
	%
	$\sigma$
	$\sigma_{3}$
	%
	$D_{OFF}(\sigma_{3}) = D_{OPT}(\sigma)$
	%
	$\sigma_{3}$
	$D_{OPT}(\sigma)$
	%
	\begin{equation} \label{LMA:rand_up.p2:eq.2}
		C_{OFF}(\sigma_{3}) = C_{OPT}(\sigma)
	\end{equation}
	%

	%
	%
	$\sigma_{3}$
	$\sigma$
	%
	%
	%
	$v$
	%
	$\sigma$
	${deg}_{u}(v) = {deg}(v) \geq 3$
	$\sigma_{3}$
	${deg}_{u}(v') = {deg}(v') = 3$
	$u$
	%
	%
	$\sigma$
	${deg}_{u'}(v') = 3$
	$u'$
	%
	%
	\[
		{\mathbb E}[C_{RA}(\sigma_{3})] = {\mathbb E}[C_{RA}(\sigma)]
	\]
	%
	\[
		\frac{ {\mathbb E}[C_{RA}(\sigma)] }{ C_{OPT}(\sigma) } 
			= \frac{ {\mathbb E}[C_{RA}(\sigma_{3})] }{ C_{OFF}(\sigma_{3}) }
			\leq \frac{ {\mathbb E}[C_{RA}(\sigma_{3})] }{ C_{OPT}(\sigma_{3}) }
	\]
	$\sigma_{3}$
	\noindent
	{\bf (Case~2-2):}
	$v'' \in U'$
	%
	$\sigma'_{3} = f_{3}(\sigma_{1}, v'', \sigma_{2})$
	%
	$\sigma'_{3}$
	$\sigma'_{3}$
	Case~2-1
	$D_{OPT}(\sigma)$
	$\sigma'_{3}$
	$D_{OPT}(\sigma)$
	%
	\begin{equation} \label{LMA:rand_up.p2:eq.3}
		C_{OFF}(\sigma'_{3}) = C_{OPT}(\sigma)
	\end{equation}
	%
	
	%
	$RA$
	$v$
	Case~2-1
	$v''$
	$v''$
	$v''$
	%
	\[
		{\mathbb E}[C_{RA}(\sigma'_{3})] \geq {\mathbb E}[C_{RA}(\sigma_{3})]
	\]
	%
	\[
		\frac{ {\mathbb E}[C_{RA}(\sigma)] }{ C_{OPT}(\sigma) } 
			\leq \frac{ {\mathbb E}[C_{RA}(\sigma'_{3})] }{ C_{OFF}(\sigma'_{3}) }
			\leq \frac{ {\mathbb E}[C_{RA}(\sigma'_{3})] }{ C_{OPT}(\sigma'_{3}) }
	\]
	%
	$\sigma'_{3}$
	\noindent
	{\bf (Case~3):}
	$v$
	$u$
	%
	${deg}(v) \geq 4$
	$(v, u)$
	%
	
	%
	\fi
	\ifnum \count11 > 0
	%
	%
	We prove this lemma in a similar way to the proof of Lemma~\ref{LMA:rand_up.p1}. 
	Suppose that an input $\sigma$ satisfies the conditions (i) and (ii) in the statement. 
	Let $v$ and $u$ be vertices in $\sigma$ 
	such that $u$ arrives at $v$, 
	${deg}(v) \geq 4$, 
	${deg}_{u}(v) = {deg}(v)$ and 
	the degrees of all the descendants of $u$ are at most three. 
	To prove the lemma, 
	we will prove that 
	we can construct an input $\sigma''$ satisfying the following two properties from $\sigma$: 
	(1) 
	all the vertices of the graph $T''$ in $\sigma''$ are in 
	the vertices of the graph $T$ in $\sigma$, 
	the degree of each vertex except for $v$ of $T''$ is equal to that of $T$, 
	the degree of $v$ in $T''$ is less than that of $T$, 
	and 
	(2) 
	$	\frac{ {\mathbb E}[C_{RA}(\sigma)] }{ C_{OPT}(\sigma) } 
			\leq \frac{ {\mathbb E}[C_{RA}(\sigma'')] }{ C_{OPT}(\sigma'') }$. 
	We can construct an input satisfying the properties (a) and (b) in the statement recursively by proving the existence of $\sigma''$. 
	In what follows, 
	we discuss such input $\sigma''$ for each of the following three cases: 
	(Case~1) 
	$OPT$ selects $u$ and only $u$ dominates $v$, 
	(Case~2) 
	$OPT$ selects $v$ and only $v$ dominates $u$, 
	and 
	(Case~3)
	the other cases. 
	\noindent
	{\bf (Case~1):}
	$OPT$ does not select $v$ by the condition of this case. 
	Since ${deg}(v) \geq 4$, 
	there exists another vertex $u' (\ne u)$ arriving at $v$. 
	$u'$ does not dominate $v$ by the condition of this case, and 
	$OPT$ selects a descendant of $u'$ to dominate $u'$. 
	Thus, 
	the edge $(v, u')$ is free. 
	${deg}(v) \geq 4$ by the definition of $v$, and 
	hence $(v, u')$ is not fixed, 
	which contradicts that 
	$\sigma$ satisfies the property (P1) by the condition (ii) in the statement of the lemma. 
	Therefore, 
	this case does not happen. 
	\noindent
	{\bf (Case~2):}
	Let $U$ be the set of vertices consisting of $u$ and all the descendants of $u$, 
	and let $U'$ be the set of vertices of all the descendants of $v$ except for $U$. 
	%
	%
	We consider the following two cases: 
	(Case~2-1) 
	the degree of any vertex selected by $OPT$ in $U'$ is three, 
	and 
	(Case~2-2) 
	the other case, 
	that is, 
	there exists at least one vertex selected by $OPT$ in $U'$ 
	whose degree is at most two. 
	Define $\sigma_{1} = f_{1}(\sigma, u)$ and $\sigma_{2} = f_{2}(\sigma, u)$. 
	\noindent
	{\bf (Case~2-1):}
	There exist vertices $v', u' \in U'$ by the conditions of Cases 2 and 2-1 
	such that 
	${deg}(v') = 3$, 
	$u'$ arrives at $v'$, 
	${deg}_{u'}(v') = 3$, 
	${deg}(u') = 1$ and 
	$OPT$ selects $v'$. 
	Here, 
	we construct the input $\sigma'_{1}$ by removing $u'$ from $\sigma_{1}$, and 
	obtain the input $\sigma_{3}$ by adding $\sigma_{2}$ to $\sigma'_{1}$ in terms of $v'$. 
	Formally, 
	define $\sigma'_{1} = f_{1}(\sigma_{1}, u')$, 
	$\sigma_{2} = f_{2}(\sigma, u)$, 
	and 
	$\sigma_{3} = f_{3}(\sigma'_{1}, v', \sigma_{2})$
	(see Fig.~\ref{fig:fig_p2}). 
	Then, 
	$\sigma_{3}$ satisfies the above property (1). 
	To show that the property (2) is satisfied, 
	we first evaluate the cost of $OPT$ for $\sigma_{3}$. 
	Let $OFF$ be an offline algorithm for $\sigma_{3}$ which selects the same vertices as those selected by $OPT$ for $\sigma$, 
	that is, $D_{OFF}(\sigma_{3}) = D_{OPT}(\sigma)$. 
	Then, 
	$D_{OPT}(\sigma)$ is a dominating set of the graph in $\sigma_{3}$ by the definition of $\sigma_{3}$. 
	Thus, 
	\begin{equation} \label{LMA:rand_up.p2:eq.2}
		C_{OFF}(\sigma_{3}) = C_{OPT}(\sigma). 
	\end{equation}
	We next consider the expected cost of $RA$ using Lemma~\ref{LMA:rand_up.vtx_cost}. 
	The degree of each vertex except for $v$ in $\sigma_{3}$ is equal to that in $\sigma$. 
	The above input transformations do not affect the expected costs for vertices except for $v, v'$ and their adjacent vertices. 
	The expected cost for $v$ does not change 
	because ${deg}(v) \geq 3$ before and after the transformations. 
	Also, 
	${deg}_{u}(v) = {deg}(v) \geq 3$ in $\sigma$ and 
	${deg}_{u}(v') = {deg}(v') = 3$ in $\sigma_{3}$. 
	Thus, 
	the expected cost for $u$ does not change, either. 
	The expected cost for $u'$ is zero 
	because ${deg}_{u'}(v') = 3$ in $\sigma$. 
	Moreover, 
	the transformations do not affect the expected costs of all the vertices adjacent to either $v$ or $v'$ except for $u$ and $u'$. 
	Hence, 
	\[
		{\mathbb E}[C_{RA}(\sigma_{3})] = {\mathbb E}[C_{RA}(\sigma)]. 
	\]
	By this equality and Eq.~(\ref{LMA:rand_up.p2:eq.2}), 
	\[
		\frac{ {\mathbb E}[C_{RA}(\sigma)] }{ C_{OPT}(\sigma) } 
			= \frac{ {\mathbb E}[C_{RA}(\sigma_{3})] }{ C_{OFF}(\sigma_{3}) }
			\leq \frac{ {\mathbb E}[C_{RA}(\sigma_{3})] }{ C_{OPT}(\sigma_{3}) }, 
	\]
	which implies that 
	$\sigma_{3}$ satisfies the property (2). 
	\noindent
	{\bf (Case~2-2):}
	Let $v'' \in U'$ be a vertex selected by $OPT$ of degree at most two. 
	Then, 
	define $\sigma'_{3} = f_{3}(\sigma_{1}, v'', \sigma_{2})$, 
	which satisfies the property (1). 
	Let us consider the cost of $OPT$ for $\sigma'_{3}$. 
	In the same way as Case~2-1, 
	let $OFF$ be an offline algorithm which selects all the vertices in $D_{OPT}(\sigma)$. 
	$D_{OPT}(\sigma)$ is a dominating set of the graph in $\sigma_{3}$ by the definition of $\sigma'_{3}$. 
	Thus, 
	\begin{equation} \label{LMA:rand_up.p2:eq.3}
		C_{OFF}(\sigma'_{3}) = C_{OPT}(\sigma). 
	\end{equation}
	We consider the expected cost of $RA$. 
	The degrees of all the vertices except for $v$ and $v''$ do not change. 
	The expected costs for all the vertices except for $v''$ do not change by the same argument in Case~2-1. 
	Since the degree of $v''$ is at least two, 
	the expected cost for $v''$ does not decrease. 
	Hence, 
	\[
		{\mathbb E}[C_{RA}(\sigma'_{3})] \geq {\mathbb E}[C_{RA}(\sigma_{3})]. 
	\]
	By this inequality and Eq.~(\ref{LMA:rand_up.p2:eq.3}), 
	\[
		\frac{ {\mathbb E}[C_{RA}(\sigma)] }{ C_{OPT}(\sigma) } 
			\leq \frac{ {\mathbb E}[C_{RA}(\sigma'_{3})] }{ C_{OFF}(\sigma'_{3}) }
			\leq \frac{ {\mathbb E}[C_{RA}(\sigma'_{3})] }{ C_{OPT}(\sigma'_{3}) }. 
	\]
	Thus,  
	$\sigma'_{3}$ satisfies the property (2). 
	\noindent
	{\bf (Case~3):}
	$(v, u)$ is free 
	because $v$ ($u$) is dominated by a vertex which is not $u$ ($v$). 
	${deg}(v) \geq 4$, and thus $(v, u)$ is not fixed, 
	which contradicts that $\sigma$ satisfies (P1) by the condition (ii). 
	\fi
\end{proof}
\ifnum \count12 > 0
\begin{figure*}[ht]
	 \begin{center}
	  \includegraphics[width=130mm]{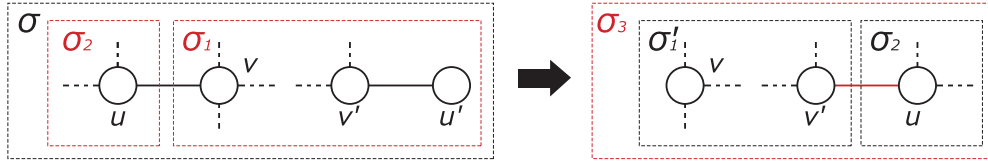}
	 \end{center}
	 \caption{
\ifnum \count10 > 0
%
%
$\sigma$
\fi
\ifnum \count11 > 0
%
%
Input transformation from $\sigma$ to $\sigma_{3}$. 
\fi
			}
	\label{fig:fig_p2}
\end{figure*}
\fi
\ifnum \count10 > 0
%
%

%

%
\fi
\ifnum \count11 > 0
%
%

%
\fi
%
\begin{LMA} \label{LMA:rand_up.p1r}
	\ifnum \count10 > 0
	%
	%
	%
	(i) $\sigma$
	(ii) (P2)
	%
	
	%
	%
	(a) 
	(P1)
	(b)
	$\frac{ {\mathbb E}[C_{RA}(\sigma)] }{ C_{OPT}(\sigma) } 
	\leq \frac{ {\mathbb E}[C_{RA}(\sigma')] }{ C_{OPT}(\sigma') }$
	%
	
	%
	\fi
	\ifnum \count11 > 0
	%
	%
	Suppose that an input $\sigma$ satisfies the following conditions: 
	(i)
	the graph in $\sigma$ contains at least one free edge which is not fixed, and 
	(ii)
	(P2) holds. 
	Then, 
	there exists an input $\sigma'$ such that 
	(a) 
	(P1) and (P2) hold, 
	and 
	(b)
	$\frac{ {\mathbb E}[C_{RA}(\sigma)] }{ C_{OPT}(\sigma) } 
	\leq \frac{ {\mathbb E}[C_{RA}(\sigma')] }{ C_{OPT}(\sigma') }$. 
	\fi
\end{LMA}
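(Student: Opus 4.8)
The plan is to reuse the construction in the proof of Lemma~\ref{LMA:rand_up.p1} almost verbatim, supplementing it with the single observation that the transformations $f_{1}$ and $f_{2}$ never increase any vertex degree and hence preserve (P2).

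First I would pick a free edge $(v, u)$ of the graph in $\sigma$ that is not fixed; such an edge exists by condition (i). Set $\sigma_{1} = f_{1}(\sigma, u)$ and $\sigma_{2} = f_{2}(\sigma, u)$. Exactly as in the proof of Lemma~\ref{LMA:rand_up.p1}, each of $\sigma_{1}$ and $\sigma_{2}$ contains strictly fewer non-fixed free edges than $\sigma$ (the edge $(v, u)$ is destroyed and no new non-fixed free edge is created), and at least one of them, say $\sigma''$, satisfies $\frac{\mathbb{E}[C_{RA}(\sigma)]}{C_{OPT}(\sigma)} \leq \frac{\mathbb{E}[C_{RA}(\sigma'')]}{C_{OPT}(\sigma'')}$.

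The new ingredient is to verify that $\sigma''$ still satisfies (P2). Because the graph in $\sigma$ is a tree, among $u$ and its descendants only $u$ itself is adjacent to $v$; therefore passing to $\sigma_{1} = f_{1}(\sigma, u)$ deletes precisely the edge $(v, u)$ together with a whole subtree, lowering the degree of $v$ by one and leaving every remaining degree unchanged. Likewise, passing to $\sigma_{2} = f_{2}(\sigma, u)$ only lowers the degree of $u$ by one (its parent $v$ is removed, and $u$ becomes the first revealed vertex of $\sigma_{2}$). In neither transformation does a degree increase, so since (P2) holds for $\sigma$ by condition (ii), it also holds for both $\sigma_{1}$ and $\sigma_{2}$, in particular for $\sigma''$.

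Iterating this step strictly decreases the number of non-fixed free edges while preserving (P2) and not decreasing the ratio, so after finitely many steps I reach an input $\sigma'$ with no non-fixed free edge; this $\sigma'$ satisfies (P1) and (P2), and chaining the ratio inequalities gives $\frac{\mathbb{E}[C_{RA}(\sigma)]}{C_{OPT}(\sigma)} \leq \frac{\mathbb{E}[C_{RA}(\sigma')]}{C_{OPT}(\sigma')}$, which is (a) and (b). The only step requiring any attention is the exact degree bookkeeping for induced subgraphs --- that excising $u$'s subtree removes just the one edge incident to $v$ --- and this is where the tree structure is used; everything else is inherited from the proof of Lemma~\ref{LMA:rand_up.p1}.
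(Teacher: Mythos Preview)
Your proposal is correct and follows essentially the same approach as the paper: reuse the recursive splitting argument from Lemma~\ref{LMA:rand_up.p1} to obtain (P1) and (b), and add the observation that $f_{1}$ and $f_{2}$ never increase any vertex degree, so (P2) is preserved throughout the iteration. Your degree bookkeeping is slightly more explicit than the paper's one-line remark, but the argument is the same.
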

\begin{proof}
	\ifnum \count10 > 0
	%
	%
	2
	(a)
	$f_{1}$
	(ii)
	%
	
	%
	\fi
	\ifnum \count11 > 0
	%
	%
	In the same way as the proof of Lemma~\ref{LMA:rand_up.p1}, 
	we can obtain an input satisfying the property (b) and (P1) in the property (a) in the statement 
	by constructing the two inputs recursively from an input satisfying the conditions (i) and (ii).   
	The degree of any vertex in a newly constructed input does not increase 
	by the definitions of $f_{1}$ and $f_{2}$. 
	Thus, 
	the degree is at most three by (P2) in the condition (ii), 
	which means that this input also satisfies (P2) in (a). 
	\fi
\end{proof}
\ifnum \count10 > 0
%
%

%

%
\fi
\ifnum \count11 > 0
%
%

%
\fi
%
\begin{LMA} \label{LMA:rand_up.p3}
	\ifnum \count10 > 0
	%
	%
	(i) 
	$OPT$
	(ii) 
	(P1)
	%
	
	%
	%
	(a) 
	$OPT$
	%
	(b)
	(P1)
	(c) 
	$\frac{ {\mathbb E}[C_{RA}(\sigma)] }{ C_{OPT}(\sigma) } 
	\leq \frac{ {\mathbb E}[C_{RA}(\sigma')] }{ C_{OPT}(\sigma') }$
	%
	
	%
	\fi
	\ifnum \count11 > 0
	%
	%
	Suppose that an input $\sigma$ satisfies the following conditions: 
	(i) 
	$OPT$ selects at least one vertex of degree at most two, and 
	(ii) 
	(P1) and (P2) hold. 
	Then, 
	there exists an input $\sigma'$ such that 
	(a) 
	the degree of any vertex selected by $OPT$ is three, 
	that is, 
	(P3) holds, 
	(b)
	(P1) and (P2) hold, and 
	(c) 
	$\frac{ {\mathbb E}[C_{RA}(\sigma)] }{ C_{OPT}(\sigma) } 
	\leq \frac{ {\mathbb E}[C_{RA}(\sigma')] }{ C_{OPT}(\sigma') }$. 
	\fi
\end{LMA}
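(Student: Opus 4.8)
The plan is to use the same recursive scheme as in the proofs of Lemmas~\ref{LMA:rand_up.p1}, \ref{LMA:rand_up.p2} and \ref{LMA:rand_up.p1r}. Starting from an input $\sigma$ satisfying conditions (i) and (ii), I will construct an input $\sigma''$ in which the number of vertices of degree at most two that are selected by $OPT$ is strictly smaller than in $\sigma$, while (P1) and (P2) still hold and $\frac{ {\mathbb E}[C_{RA}(\sigma)] }{ C_{OPT}(\sigma) } \le \frac{ {\mathbb E}[C_{RA}(\sigma'')] }{ C_{OPT}(\sigma'') }$. Iterating this transformation then yields an input $\sigma'$ with the three claimed properties.

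To build $\sigma''$, pick a vertex $w$ with ${deg}(w) \le 2$ that is selected by $OPT$; since the tree has at least two vertices, ${deg}(w) \ge 1$. Applying Lemma~\ref{LMA:rand_up.input_cnct_opt} $3 - {deg}(w)$ times with the vertex $w$ (each application is valid because $OPT$ still selects $w$ in the intermediate input), I attach $3 - {deg}(w)$ new pendant leaves to $w$ and obtain $\sigma''$, in which ${deg}(w) = 3$. By that lemma there is an $OPT$ for $\sigma''$ selecting exactly the vertices selected by $OPT$ for $\sigma$; in particular $C_{OPT}(\sigma'') = C_{OPT}(\sigma)$ and none of the new leaves is selected by $OPT$. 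Hence (P2) is preserved (no degree exceeds three), and every $OPT$-selected vertex of degree at most two in $\sigma''$ was already such a vertex in $\sigma$ while $w$ is no longer one of them, so the count drops by at least one.

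For the cost of $RA$ I will invoke Lemma~\ref{LMA:rand_up.vtx_cost}. Only the final degree of $w$ changes, from at most two to three, and the new leaves are appended at the end of the revelation order. Each new leaf contributes a non-negative expected cost; $w$ contributes one (case (2)) instead of its former value, which was one if $w = v_1$ and at most $1/2$ otherwise (cases (3) and (4-1)--(4-3)); a neighbour $x \ne v_1$ of $w$ of degree at least two keeps its cost, which is determined by its own degree; and a neighbour $x$ of degree one moves from case (4-3) to case (4-1) --- here ${deg}_{x}(w) \le 2$ because the leaves are revealed after $x$ --- so it still costs $1/2$. Therefore ${\mathbb E}[C_{RA}(\sigma'')] \ge {\mathbb E}[C_{RA}(\sigma)]$, and with $C_{OPT}(\sigma'') = C_{OPT}(\sigma)$ this gives the required inequality on the ratio.

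The step I expect to be the most delicate is checking that (P1) is preserved. Whether an edge is free depends only on the selections of $OPT$ and on the ancestor/descendant structure, not on vertex degrees; since $OPT$'s selection is unchanged and the new leaves, not being selected by $OPT$ and being adjacent only to $w$, leave the domination status of every other vertex (and of $w$, already dominated by $w$) unchanged, no pre-existing edge changes its free/non-free status and no free edge incident to a new leaf is created. Whether a free edge $(v,u)$ is fixed depends only on ${deg}(u)$, ${deg}(v)$, ${deg}(v')$ and ${deg}_{v}(v')$, and since the leaves are appended at the end, the only quantity among these that can change is the final degree of $w$, rising from at most two to three. When ${deg}(w) \le 2$ the vertex $w$ cannot be the ``$u$''-endpoint of a fixed free edge (condition (ii) fails), so $w$ is not the ``$u$''-endpoint of any free edge; as a ``$v$''-endpoint or as $v'$, the change only causes the first clause of condition (iii) (or ${deg}(v') \ge 3$) to hold, so it can turn non-fixed free edges into fixed ones but never the reverse. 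Hence every free edge of $\sigma''$ is fixed, (P1) holds, and the recursion goes through.
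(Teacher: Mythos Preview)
Your proposal is correct and follows essentially the same construction as the paper: attach $3-{deg}(w)$ pendant leaves to a selected vertex $w$ of degree at most two via Lemma~\ref{LMA:rand_up.input_cnct_opt}, observe that $D_{OPT}$ is unchanged so $C_{OPT}$ is preserved, and note that the expected cost of $RA$ can only go up. The paper's argument is somewhat terser---for (P1) it only observes that the new edges $(w,u)$ are not free (since the leaf is dominated only by $w$, which lies outside the leaf's descendant set), and for the $RA$ side it simply invokes monotonicity of the cost under revealing additional vertices---whereas you carry out a more explicit case analysis with Lemma~\ref{LMA:rand_up.vtx_cost} and a careful check that raising ${deg}(w)$ cannot break the ``fixed'' condition for any existing free edge; both routes are sound.
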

\begin{proof}
	\ifnum \count10 > 0
	%
	%
	%
	%
	(i)
	%
	%
	%
	(1) 
	$\sigma_{3}$
	$\sigma$
	(2)
	(P1),(P2)
	(3) 
	$	\frac{ {\mathbb E}[C_{RA}(\sigma)] }{ C_{OPT}(\sigma) } 
			\leq \frac{ {\mathbb E}[C_{RA}(\sigma_{3})] }{ C_{OPT}(\sigma_{3}) }$
	%
	%
	%
	
	%
	$\sigma$
	${deg}(v) \leq 2$
	$v$
	%
	%
	$v$
	%
	%
	${deg}(v) = 2$
	$\sigma_{3} = f_{3}(\sigma, v, \hat{\sigma})$
	$\hat{\sigma} = (\{ u \}, \varnothing, u)$
	%
	%
	${deg}(v) = 1$
	$\sigma_{3} = f_{3}( f_{3}(\sigma, v, \hat{\sigma}), v, \hat{\sigma}')$
	$\hat{\sigma}' = (\{ u' \}, \varnothing, u')$
	%
	
	%
	$\sigma$
	$v$
	$v$
	$\sigma_{3}$
	%
	%
	\begin{equation} \label{LMA:rand_up.p3:eq.2}
		D_{OPT}(\sigma_{3}) = D_{OPT}(\sigma)
	\end{equation}
	%
	$\sigma_{3}$
	%
	%
	$C_{OPT}(\sigma_{3}) = C_{OPT}(\sigma)$
	%
	$(v, u)$
	%
	$\sigma_{3}$
	%
	%
	$\sigma$
	$RA$
	\[
		{\mathbb E}[C_{RA}(\sigma_{3})] \geq {\mathbb E}[C_{RA}(\sigma)]
	\]
	%
	$\sigma_{3}$
	\fi
	\ifnum \count11 > 0
	%
	%
	We prove this lemma in a similar way to the proofs of the last three lemmas. 
	Suppose that an input $\sigma$ satisfies the conditions (i) and (ii) in the statement of this lemma. 
	To prove the lemma, 
	we will prove that 
	we can construct an input $\sigma_{3}$ satisfying the following three properties from $\sigma$: 
	(1) 
	the number of vertices selected by $OPT$ of degree at most two of the graph in $\sigma_{3}$ is less than that in $\sigma$, 
	(2)
	(P1) and (P2) hold, 
	and 
	(3) 
	$	\frac{ {\mathbb E}[C_{RA}(\sigma)] }{ C_{OPT}(\sigma) } 
			\leq \frac{ {\mathbb E}[C_{RA}(\sigma_{3})] }{ C_{OPT}(\sigma_{3}) }$.
	We can construct an input satisfying the properties (a), (b) and (c) in the statement recursively by proving the existence of $\sigma_{3}$. 
	Let $v$ be a vertex of the graph given in $\sigma$ such that 
	${deg}(v) \leq 2$ and $OPT$ selects $v$.  
	Roughly speaking, 
	we obtain $\sigma_{3}$ by connecting $v$ with a couple of vertices of degree one 
	so that the degree of $v$ is three. 
	Formally, 
	if ${deg}(v) = 2$, 
	we define $\sigma_{3} = f_{3}(\sigma, v, \hat{\sigma})$, 
	in which 
	$\hat{\sigma} = (\{ u \}, \varnothing, u)$. 
	If ${deg}(v) = 1$, 
	we define $\sigma_{3} = f_{3}( f_{3}(\sigma, v, \hat{\sigma}), v, \hat{\sigma}')$, 
	in which 
	$\hat{\sigma}' = (\{ u' \}, \varnothing, u')$. 
	(P2) holds in $\sigma$ 
	by the condition (ii) in the statement. 
	Also, 
	the input transformation by $f_{3}$ does not change 
	the degrees of the vertices except for $v$, 
	and the degree of $v$ is three in $\sigma_{3}$. 
	Hence, 
	(P2) also holds in $\sigma_{3}$. 
	By Lemma~\ref{LMA:rand_up.input_cnct_opt}, 
	\begin{equation} \label{LMA:rand_up.p3:eq.2}
		D_{OPT}(\sigma_{3}) = D_{OPT}(\sigma). 
	\end{equation}
	The above facts shows that 
	$\sigma_{3}$ satisfies the property (1) described above. 
	Also, 
	$C_{OPT}(\sigma_{3}) = C_{OPT}(\sigma)$ by Eq.~(\ref{LMA:rand_up.p3:eq.2}). 
	Since $OPT$ selects neither $u$ nor $u'$, 
	neither $(v, u)$ nor $(v, u')$ is free. 
	That is, 
	(P1) also holds in $\sigma_{3}$ by the condition (ii). 
	Hence, 
	we have shown that $\sigma_{3}$ satisfies the property (2). 
	Moreover, 
	even if a new vertex arrives, 
	the cost of $RA$ does not decrease, and hence 
	\[
		{\mathbb E}[C_{RA}(\sigma_{3})] \geq {\mathbb E}[C_{RA}(\sigma)]. 
	\]
	$\sigma_{3}$ satisfies the property (3) 
	by this inequality and Eq.~(\ref{LMA:rand_up.p3:eq.2}). 
	\fi
\end{proof}
\ifnum \count10 > 0
%
%

%

%
\fi
\ifnum \count11 > 0
%
%

%
\fi
%

\begin{LMA} \label{LMA:rand_up.p4}
	\ifnum \count10 > 0
	%
	%
	(i) 
	$OPT$
	(ii) 
	(P1), (P2), (P3)
	%
	
	%
	%
	(a) 
	$OPT$
	%
	(b)
	(P1),(P2),(P3)
	(c) 
	$\frac{ {\mathbb E}[C_{RA}(\sigma)] }{ C_{OPT}(\sigma) } 
	\leq \frac{ {\mathbb E}[C_{RA}(\sigma')] }{ C_{OPT}(\sigma') }$
	%
	
	%
	\fi
	\ifnum \count11 > 0
	%
	%
	Suppose that an input $\sigma$ satisfies the following conditions: 
	(i) 
	there exists at least one free edge $(v, u)$ such that $OPT$ selects $v$, 
	and 
	(ii) 
	(P1), (P2) and (P3) hold. 
	Then, 
	there exists an input $\sigma'$ such that 
	(a) 
	for any free edge $(v, u)$, 
	$OPT$ does not select $v$, 
	that is, 
	(P4) holds, 
	(b)
	(P1), (P2) and (P3) hold, 
	and 
	(c) 
	$\frac{ {\mathbb E}[C_{RA}(\sigma)] }{ C_{OPT}(\sigma) } 
	\leq \frac{ {\mathbb E}[C_{RA}(\sigma')] }{ C_{OPT}(\sigma') }$. 
	\fi
\end{LMA}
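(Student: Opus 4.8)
The plan is to follow the recursive-reduction template already used for Lemmas~\ref{LMA:rand_up.p1}, \ref{LMA:rand_up.p2} and \ref{LMA:rand_up.p3}. Namely, it suffices to show that from any input $\sigma$ meeting conditions (i) and (ii) one can construct an input $\sigma''$ with: (1) strictly fewer free edges $(v,u)$ for which $OPT$ selects $v$; (2) properties (P1), (P2) and (P3) still holding; and (3) $\frac{\mathbb{E}[C_{RA}(\sigma)]}{C_{OPT}(\sigma)} \le \frac{\mathbb{E}[C_{RA}(\sigma'')]}{C_{OPT}(\sigma'')}$. Applying this repeatedly produces an input $\sigma'$ in which no free edge $(v,u)$ has $v$ selected by $OPT$, i.e.\ (P4) holds, while (P1)--(P3) are retained and the ratio has not dropped.

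So fix a free edge $(v,u)$ of $\sigma$ with $OPT$ selecting $v$. By (P1) this edge is fixed, hence $v \ne v_{1}$ and ${deg}(u) \ge 3$; by (P2), ${deg}(u) = 3$, and moreover ${deg}(v) \in \{2,3\}$, with the side condition on the parent $v'$ of $v$ when ${deg}(v) = 2$. Since $(v,u)$ is free, $u$ is dominated from inside $U$ (the set of $u$ and its descendants), so $OPT$ either selects $u$ or selects a child of $u$; note that since $OPT$ also selects $v$, and $v$ is adjacent to $u$, the vertex $u$ is in effect dominated twice. The transformation I would use is to detach the subtree at $u$, i.e.\ take $\sigma_{1} = f_{1}(\sigma, u)$ and $\sigma_{2} = f_{2}(\sigma, u)$ (so by Lemma~\ref{LMA:rand_up.input_div_opt}, $C_{OPT}(\sigma_{1}) + C_{OPT}(\sigma_{2}) = C_{OPT}(\sigma)$), and then repair $\sigma_{1}$ so that $OPT$ no longer needs $v$: concretely, when ${deg}(v) = 3$ drops to $2$ I would reattach a re-sequenced copy of $\sigma_{2}$ at the dominator of $u$, in the spirit of the transformation in Lemma~\ref{LMA:rand_up.p2}, so that $v$ is again pinned down without $OPT$ selecting it; and when ${deg}(v) = 2$ drops to $1$ I would let $OPT$ switch from $v$ to $v'$ (legal since ${deg}(v') \ge 3$) and, if needed, append a degree-one vertex to $v$ to keep degrees unchanged. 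The input $\sigma''$ is the resulting single input, or — where the bare split suffices — the one of $\sigma_1,\sigma_2$ with larger ratio, via the mediant inequality as in Lemma~\ref{LMA:rand_up.p1}.

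I would then verify the three bullets. Part (2) is routine: $f_{1}, f_{2}, f_{3}$ never raise a degree, any appended leaf is a degree-one vertex, and Lemmas~\ref{LMA:rand_up.input_div_opt} and \ref{LMA:rand_up.input_cnct_opt} let me transport $D_{OPT}$ across the transformation, so (P1)--(P3) survive (no new free edge is introduced, and the new optimal solution selects only degree-three vertices). Part (1) holds because the designated free edge is destroyed and the only edges whose freeness can change are the handful incident to $v$, $u$, $v'$, the other child of $v$, or the reattachment vertex, which I would dispatch by a short case check. Part (3) follows by combining $C_{OPT}(\sigma'') = C_{OPT}(\sigma)$ (or $C_{OPT}(\sigma_1)+C_{OPT}(\sigma_2)=C_{OPT}(\sigma)$) with $\mathbb{E}[C_{RA}(\sigma'')] \ge \mathbb{E}[C_{RA}(\sigma)]$, the latter obtained from the per-vertex cost table of Lemma~\ref{LMA:rand_up.vtx_cost}.

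The main obstacle is step (3) together with the optimality of the modified $OPT$ needed in step (2). Detaching the subtree at $u$ by itself already decreases the $RA$-cost at $v$ by $1/2$ (from $1$ to $1/2$ when ${deg}(v)=3$, or from $1/2$ to $0$ when ${deg}(v)=2$, by Lemma~\ref{LMA:rand_up.ppab}), and it simultaneously perturbs the $RA$-costs of $u$'s children (whose parent's degree drops from three to two in $\sigma_2$), of $v'$, and of the other child of $v$. The reattachment point and the compensating leaf therefore have to be chosen so that every one of these per-vertex changes is non-decreasing for $RA$ while $C_{OPT}$ is held fixed, and so that one can exhibit an optimal offline solution for $\sigma''$ that genuinely avoids $v$. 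Getting all of these constraints to hold simultaneously, across both degree cases for $v$ and the cases for which vertex of $U$ dominates $u$, is the crux of the argument.
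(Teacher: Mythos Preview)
You are overcomplicating the argument in two ways, one of which leaves a genuine gap.

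First, you never use (P3) on $v$ itself. Since $OPT$ selects $v$ and (P3) holds, $\deg(v)=3$; the case $\deg(v)=2$ is impossible. This alone removes half of your case analysis (and the delicate ``switch $OPT$ from $v$ to $v'$'' step, which would in any case break (P3) unless you then pad $v'$).

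Second, your goal of repairing $\sigma_1$ so that $OPT$ \emph{no longer selects} $v$ is stronger than what is needed, and it is exactly what produces the unresolved ``crux'' in your last paragraph. The paper does not attempt this. Instead it simply splits at $u$, pads with one leaf each, and takes \emph{two} inputs rather than one: set $\sigma'_1=f_3(\sigma_1,v,\hat\sigma_1)$ with $\hat\sigma_1$ a single fresh leaf $u_1$, and if $OPT$ selected $u$ set $\sigma'_2=f_3(\sigma_2,u,\hat\sigma_2)$ with a fresh leaf $u_2$, else $\sigma'_2=\sigma_2$. In $\sigma'_1$, $OPT$ \emph{still} selects $v$; what has been achieved is only that the offending free edge $(v,u)$ has disappeared and the replacement edges $(v,u_1)$, $(u,u_2)$ are not free. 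Lemmas~\ref{LMA:rand_up.input_div_opt} and \ref{LMA:rand_up.input_cnct_opt} give $C_{OPT}(\sigma'_1)+C_{OPT}(\sigma'_2)=C_{OPT}(\sigma)$, so (P1)--(P3) transfer to each piece with no new optimality argument needed. On the $RA$ side everything is now trivial: $\deg(v)$ is still $3$ in $\sigma'_1$, so its expected cost is unchanged; $u$ is the first vertex of $\sigma'_2$, so its expected cost is $1$ and cannot have dropped; and the neighbour analysis from Lemma~\ref{LMA:rand_up.p1} handles the remaining vertices. Hence ${\mathbb E}[C_{RA}(\sigma)]\le{\mathbb E}[C_{RA}(\sigma'_1)]+{\mathbb E}[C_{RA}(\sigma'_2)]$, and the mediant inequality finishes as in Lemma~\ref{LMA:rand_up.p1}.

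In short: the correct reduction cuts the free edge and pads each side with a leaf, rather than trying to reroute the subtree so that $v$ leaves $D_{OPT}$. Once you use (P3) to pin $\deg(v)=3$, the cost bookkeeping that you flagged as the main obstacle evaporates.
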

\begin{proof}
	\ifnum \count10 > 0
	%
	%
	%
	(i)
	$\sigma$
	$OPT$
	%
	%
	%
	(1) 
	$\sigma''$
	$(v, u)$
	$\sigma$
	(2)
	(P1),(P2),(P3)
	(3) 
	$	\frac{ {\mathbb E}[C_{RA}(\sigma)] }{ C_{OPT}(\sigma) } 
			\leq \frac{ {\mathbb E}[C_{RA}(\sigma'')] }{ C_{OPT}(\sigma'') }
	$
	%
	%
	%
	
	%
	$\sigma_{1} = f_{1}(\sigma, u)$%
	$\sigma_{2} = f_{2}(\sigma, u)$
	%
	%
	%
	$\sigma$
	(P3)
	$\sigma$
	%
	$\sigma'_{1} = f_{3}(\sigma_{1}, v, \hat{\sigma}_{1})$
	$\hat{\sigma}_{1} = (\{ u_{1} \}, \varnothing, u_{1})$
	%
	%
	$u$
	$\sigma'_{2} = f_{3}(\sigma_{2}, u, \hat{\sigma}_{2})$
	$\hat{\sigma}_{2} = (\{ u_{2} \}, \varnothing, u_{2})$
	%
	$\sigma'_{2} = \sigma_{2}$
	%
	%
	$v$
	$v$
	%
	$\sigma'_{1}$($\sigma'_{2}$)
	%
	%
	$D_{OPT}(\sigma_{1}) \cup D_{OPT}(\sigma_{2}) = D_{OPT}(\sigma)$
	%
	$D_{OPT}(\sigma'_{1}) = D_{OPT}(\sigma_{1})$
	$D_{OPT}(\sigma'_{2}) = D_{OPT}(\sigma_{2})$
	%
	%
	\begin{equation} \label{LMA:rand_up.p4:eq.2}
		D_{OPT}(\sigma'_{1}) + D_{OPT}(\sigma'_{2}) = D_{OPT}(\sigma)
	\end{equation}
	%
	%
	$\sigma$
	$\sigma'_{1}$
	%
	%
	$(v, u_{1})$
	%
	$u_{2}$
	$(u, u_{2})$
	%
	%
	$\sigma'_{1}$($\sigma'_{2}$)
	%
	%
	$\sigma'_{1}$($\sigma'_{2}$)
	%
	%
	$\sigma'_{1}$
	$\sigma$
	$\sigma'_{2}$
	%
	%
	%
	$\sigma'_{1}$($\sigma'_{2}$)
	%
	%
	$\sigma'_{1}$($\sigma'_{2}$)
	%
	
	%
	\begin{equation} \label{LMA:rand_up.p4:eq.3}
		{\mathbb E}[C_{RA}(\sigma)] 
			\leq {\mathbb E}[C_{RA}(\sigma'_{1})] + {\mathbb E}[C_{RA}(\sigma'_{2})]
	\end{equation}
	$C_{OPT}(\sigma'_{1}) + C_{OPT}(\sigma'_{2}) = C_{OPT}(\sigma)$
	(3)
	%
	%
	$RA$
	%
	$v$
	$v$
	$u$
	%
	%
	$v$
	%
	%
	%
	$u$
	%
	%
	%
	
	%
	\fi
	\ifnum \count11 > 0
	%
	%
	We prove this lemma similarly to the proofs of the above lemmas. 
	Suppose that an input $\sigma$ satisfies the conditions (i) and (ii) in the statement. 
	Let $(v, u)$ be a free edge of the graph in $\sigma$. 
	Suppose that $OPT$ selects $v$. 
	To prove the lemma, 
	we will prove that 
	we can construct an input $\sigma''$ satisfying the following three properties from $\sigma$: 
	(1) 
	the number of such free edges $(v, u)$ in $\sigma''$ is less than that in $\sigma$, 
	(2)
	(P1), (P2) and (P3) hold, 
	and 
	(3) 
	$
		\frac{ {\mathbb E}[C_{RA}(\sigma)] }{ C_{OPT}(\sigma) } 
			\leq \frac{ {\mathbb E}[C_{RA}(\sigma'')] }{ C_{OPT}(\sigma'') }$. 
	We can construct an input satisfying the properties (a), (b) and (c) in the statement recursively by proving the existence of $\sigma''$. 
	Define 
	$\sigma_{1} = f_{1}(\sigma, u)$ and 
	$\sigma_{2} = f_{2}(\sigma, u)$. 
	Furthermore, 
	we add a few vertices to these inputs in the same way as the proof of Lemma~\ref{LMA:rand_up.p3}. 
	Since $\sigma$ satisfies (P3) by the condition (ii), 
	the degree of $v$ is three in $\sigma$. 
	Then, 
	define $\sigma'_{1} = f_{3}(\sigma_{1}, v, \hat{\sigma}_{1})$, 
	in which 
	$\hat{\sigma}_{1} = (\{ u_{1} \}, \varnothing, u_{1})$. 
	If $OPT$ selects $u$, 
	define $\sigma'_{2} = f_{3}(\sigma_{2}, u, \hat{\sigma}_{2})$, 
	in which 
	$\hat{\sigma}_{2} = (\{ u_{2} \}, \varnothing, u_{2})$. 
	Otherwise, 
	define $\sigma'_{2} = \sigma_{2}$. 
	After the input transformation, 
	the degrees of $v$ and $u$ are at most three and 
	the degrees of the other vertices do not change. 
	Also, 
	$\sigma$ satisfies (P2) by the condition (ii). 
	Thus, 
	$\sigma'_{1}$ ($\sigma'_{2}$) also satisfies (P2). 
	$D_{OPT}(\sigma_{1}) \cup D_{OPT}(\sigma_{2}) = D_{OPT}(\sigma)$ 
	by Lemma~\ref{LMA:rand_up.input_div_opt}. 
	Both 
	$D_{OPT}(\sigma'_{1}) = D_{OPT}(\sigma_{1})$ 
	and 
	$D_{OPT}(\sigma'_{2}) = D_{OPT}(\sigma_{2})$ 
	by Lemma~\ref{LMA:rand_up.input_cnct_opt}. 
	By the above equalities, 
	\begin{equation} \label{LMA:rand_up.p4:eq.2}
		D_{OPT}(\sigma'_{1}) + D_{OPT}(\sigma'_{2}) = D_{OPT}(\sigma). 
	\end{equation}
	By this equality, 
	each of the free edges except $(v, u)$ is contained in either $\sigma'_{1}$  or $\sigma'_{2}$. 
	Whether these edges are fixed does not change. 
	Also, 
	neither $(v, u_{1})$ nor $(u, u_{2})$ (if any) is free. 
	Thus, 
	$\sigma'_{1}$ ($\sigma'_{2}$) satisfies the above property (1). 
	These facts indicate that 
	$\sigma'_{1}$ ($\sigma'_{2}$) satisfies (P1)
	because $\sigma$ satisfies (P1) by the condition (ii). 
	The degree of $v$ in $\sigma'_{1}$ is three. 
	The degree of $u$ in $\sigma'_{2}$ is also three
	if $OPT$ selects $u$ in $\sigma$. 
	The degrees of the other vertices selected by $OPT$ do not change. 
	Hence, 
	$\sigma'_{1}$ ($\sigma'_{2}$) satisfies (P3) 
	because 
	$\sigma$ satisfies (P3) by the condition (ii). 
	By the above argument, 
	$\sigma'_{1}$ ($\sigma'_{2}$) satisfies the property (2). 
	In what follows, 
	we show 
	\begin{equation} \label{LMA:rand_up.p4:eq.3}
		{\mathbb E}[C_{RA}(\sigma)] 
			\leq {\mathbb E}[C_{RA}(\sigma'_{1})] + {\mathbb E}[C_{RA}(\sigma'_{2})]. 
	\end{equation}
	This inequality together with $C_{OPT}(\sigma'_{1}) + C_{OPT}(\sigma'_{2}) = C_{OPT}(\sigma)$ from Eq.~(\ref{LMA:rand_up.p4:eq.2}) shows that 
	the property (3) is satisfied. 
	We evaluate the cost of $RA$ using Lemma~\ref{LMA:rand_up.vtx_cost}. 
	The input transformations by $f_{1}$ and $f_{2}$ do not affect the vertices except for $v$, $u$ and adjacent vertices to either $v$ or $u$. 
	The degree of $v$ does not change. 
	$u$ becomes $v_{1}$ in $\sigma'_{2}$, 
	for which the expected cost is one. 
	Thus, 
	the expected cost for $u$ does not decrease. 
	The expected cost for an adjacent vertex to $v$ does not decrease 
	by the same argument as $v'$ in the proof of Lemma~\ref{LMA:rand_up.p1}. 
	%
	%
	%
	The expected cost for an adjacent vertex to  $u$ does not decrease, either. 
	Now 
	we have shown Eq.~(\ref{LMA:rand_up.p4:eq.3}). 
	\fi
\end{proof}
\ifnum \count10 > 0
%
%

%

%
\fi
\ifnum \count11 > 0
%
%
\begin{LMA} \label{LMA:rand_up.p5}
	\ifnum \count10 > 0
	%
	%
	(i) 
	%
	(ii) 
	(P1), (P2), (P3), (P4)
	%
	
	%
	%
	(a) 
	%
	(b)
	(P1),(P2),(P3), (P4)
	(c) 
	$\frac{ {\mathbb E}[C_{RA}(\sigma)] }{ C_{OPT}(\sigma) } 
	\leq \frac{ {\mathbb E}[C_{RA}(\sigma')] }{ C_{OPT}(\sigma') }$
	%
	
	%
	\fi
	\ifnum \count11 > 0
	%
	%
	Suppose that an input $\sigma$ satisfies the following conditions: 
	(i) 
	the graph in $\sigma$ contains at least one good vertex triplet, 
	and 
	(ii) 
	the properties from (P1) to (P4) inclusive hold. 
	Then, 
	there exists an input $\sigma'$ such that 
	(a) 
	the graph in $\sigma'$ contains no good vertex triplets, 
	that is, 
	(P5) holds, 
	(b)
	the properties from (P1) to (P4) inclusive hold, 
	and 
	(c) 
	$\frac{ {\mathbb E}[C_{RA}(\sigma)] }{ C_{OPT}(\sigma) } 
	\leq \frac{ {\mathbb E}[C_{RA}(\sigma')] }{ C_{OPT}(\sigma') }$. 
	\fi
\end{LMA}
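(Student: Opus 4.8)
The plan is to follow the template of the proofs of Lemmas~\ref{LMA:rand_up.p1}--\ref{LMA:rand_up.p4}. Assuming $\sigma$ contains at least one good vertex triplet and satisfies (P1)--(P4), I will build an input $\sigma''$ that again satisfies (P1)--(P4), contains strictly fewer good vertex triplets, and has $\frac{\mathbb{E}[C_{RA}(\sigma)]}{C_{OPT}(\sigma)}\le\frac{\mathbb{E}[C_{RA}(\sigma'')]}{C_{OPT}(\sigma'')}$; iterating this finitely often yields the $\sigma'$ claimed in the statement. When $\sigma''$ is obtained by ``dividing'' $\sigma$ into two inputs $\sigma_1,\sigma_2$, I will as before combine $\mathbb{E}[C_{RA}(\sigma)]\le\mathbb{E}[C_{RA}(\sigma_1)]+\mathbb{E}[C_{RA}(\sigma_2)]$ with $C_{OPT}(\sigma_1)+C_{OPT}(\sigma_2)=C_{OPT}(\sigma)$ (Lemma~\ref{LMA:rand_up.input_div_opt}) and the mediant inequality to pass to whichever part has the larger ratio.

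Fix a good vertex triplet $(u_1,u_2,u_3)$: the neighbours $u_1,u_3$ of $u_2$ are selected by $OPT$, all three vertices have degree three, and $u_2$ is dominated twice by $OPT$. Since $u_1$ and $u_3$ are non-adjacent and the revealed subtree is connected at every step, the unique $u_1$--$u_3$ path runs through $u_2$, so $u_2$ is revealed before at least one of $u_1,u_3$; using the symmetry of the definition in $u_1$ and $u_3$ I relabel so that $u_2$ precedes $u_3$, and then (again by connectedness) $u_3$, together with the set $U$ of all its descendants, arrives at $u_2$. Let $u_4$ be the third neighbour of $u_2$. Since $OPT$ selects $u_3$, which lies in $U\cup\{u_3\}$, while $u_2$ is dominated by $u_1\notin U\cup\{u_3\}$, the edge $(u_2,u_3)$ is free, hence fixed by (P1); in particular $u_2\ne v_1$. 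The transformation I intend to use detaches the subtree rooted at $u_3$ from the edge $(u_2,u_3)$ (via $f_1(\cdot,u_3)$ and $f_2(\cdot,u_3)$) and reconnects it --- the natural candidate being to re-attach it at $u_4$, raising $\deg(u_4)$ by one --- so that $u_2$ is no longer doubly dominated and drops to degree two, whence $(u_1,u_2,u_3)$ is no longer a good triplet. When re-attaching at $u_4$ would violate (P2) or merely relocate the double domination to a fresh triplet centred at $u_4$, I fall back to dividing $\sigma$ at $u_3$ or at a descendant of $u_3$ (and, if necessary, to grafting a pendant vertex onto a vertex selected by $OPT$, as in Lemma~\ref{LMA:rand_up.p3}); the degenerate case in which $OPT$ also selects $u_2$ is handled separately, since there $OPT$'s solution can be improved outright.

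The verification runs on the three familiar fronts. Cost of $OPT$: Lemmas~\ref{LMA:rand_up.input_div_opt} and~\ref{LMA:rand_up.input_cnct_opt} keep the total optimal cost unchanged under cutting at free edges and under appending or reconnecting pendant vertices at $OPT$-selected vertices, and re-attaching $u_3$'s subtree at $u_4$ does not increase $C_{OPT}$ because $OPT$'s old selection still dominates the new graph. Cost of $RA$: by Lemma~\ref{LMA:rand_up.vtx_cost} the expected cost can change only at $u_2,u_3,u_4$, any grafted pendant, and their neighbours, so the bookkeeping is finite; the task is to choose the variant of the surgery so that the $-1/2$ lost at $u_2$ (its degree falling to two) is matched by a $+1/2$ gained where $u_4$'s degree rises or where an interior vertex becomes the first vertex of a divided input. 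Preservation of (P1)--(P4): $f_1,f_2,f_3$ never raise a degree and never add a vertex of degree at most two to $OPT$'s selection, so (P2) and (P3) persist; edges incident to grafted pendant vertices are non-free; and by Lemma~\ref{LMA:rand_up.input_div_opt} every surviving free edge keeps its fixed-or-not status, so (P1) and (P4) persist. The main obstacle is exactly this balancing act: the re-attachment must be shown to strictly reduce the count of good triplets rather than create a new one while simultaneously not decreasing $RA$'s expected cost, which forces a short case analysis on $\deg(u_4)$ and on whether $u_4$ already has a degree-three neighbour selected by $OPT$.
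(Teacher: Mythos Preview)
Your surgery-based plan is genuinely different from the paper's argument, and as stated it has a gap in exactly the case you flag as ``the main obstacle''.

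Consider $\deg(u_4)=3$ (this is allowed by (P2) and in fact forced by (P7), which will eventually hold). Re-attaching the $u_3$-subtree at $u_4$ would push $\deg(u_4)$ to $4$, so you fall back to dividing $\sigma$ at the free edge $(u_2,u_3)$. But then in $\sigma_1=f_1(\sigma,u_3)$ the vertex $u_2$ drops from degree~$3$ to degree~$2$, so by Lemma~\ref{LMA:rand_up.vtx_cost} its expected $RA$-cost falls from $1$ to $1/2$; meanwhile $u_3$ already had degree~$3$ and hence cost~$1$, so promoting it to the first vertex of $\sigma_2$ gains nothing. The neighbours of $u_2$ in $\sigma_1$ are $u_1$ and $u_4$, both of degree~$3$, so their costs do not move either. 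The net change is $-1/2$, and the mediant inequality no longer applies. Your proposed patches do not close this: grafting a pendant via Lemma~\ref{LMA:rand_up.input_cnct_opt} requires the base vertex to be $OPT$-selected, but $u_2$ is not; and dividing at some descendant of $u_3$ does not touch the triplet $(u_1,u_2,u_3)$ at all. (Incidentally, the claim that ``$f_1,f_2,f_3$ never raise a degree'' is wrong for $f_3$.)

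The paper sidesteps all of this with a three-way split. Using (P4) one shows $u_2$ is revealed before both $u_1$ and $u_3$; then with $u$ the third neighbour of $u_2$, the edge between $u$ and $u_2$ is free, so one can split off $\sigma_1=f_1(\sigma,u_2)$. The remaining piece $\sigma_2=f_2(\sigma,u_2)$ has $u_2$ as its first vertex; from it one forms $\sigma'_2=f_1(\sigma_2,u_3)$ and $\sigma''_2=f_1(\sigma_2,u_1)$, which \emph{both} contain $u_2$. Since $OPT$ does not select $u_2$, this double-counting costs nothing on the $OPT$ side, and one gets $C_{OPT}(\sigma_1)+C_{OPT}(\sigma'_2)+C_{OPT}(\sigma''_2)=C_{OPT}(\sigma)$. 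On the $RA$ side, $u_2$ as the first vertex of each of $\sigma'_2,\sigma''_2$ contributes $+1+1$, while $\sigma_1$ loses at most $2$ (the cost of $u_2$ and the possible change at $u$), so $\mathbb{E}[C_{RA}(\sigma_1)]+\mathbb{E}[C_{RA}(\sigma'_2)]+\mathbb{E}[C_{RA}(\sigma''_2)]\ge \mathbb{E}[C_{RA}(\sigma)]$ and the mediant argument goes through with three terms instead of two. The duplication of $u_2$ is the missing idea in your proposal.
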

\begin{proof}
	\ifnum \count10 > 0
	%
	%
	%
	(i)
	%
	%
	%
	(1) 
	$\sigma''$
	$\sigma$
	(2)
	(P1),(P2),(P3),(P4)
	(3) 
	$	\frac{ {\mathbb E}[C_{RA}(\sigma)] }{ C_{OPT}(\sigma) } 
			\leq \frac{ {\mathbb E}[C_{RA}(\sigma'')] }{ C_{OPT}(\sigma'') }
	$
	%
	%
	%
	
	%
	$\sigma$
	%
	$u_{1}$
	$OPT$
	%
	%
	$u_{1}$
	$u_{2}$
	$OPT$
	%
	$u_{2}$
	%
	%
	$\sigma_{1} = f_{1}(\sigma, u_{2})$%
	$\sigma_{2} = f_{2}(\sigma, u_{2})$%
	$\sigma'_{2} = f_{1}(\sigma_{2}, u_{3})$%
	$\sigma''_{2} = f_{1}(\sigma_{2}, u_{1})$
	%
	%
	$u (\ne u_{1}, u_{3})$
	$u$
	$(u_{2}, u)$
	%
	\begin{equation} \label{LMA:rand_up.p5:eq.1}
		D_{OPT}(\sigma_{1}) \cup D_{OPT}(\sigma_{2}) = D_{OPT}(\sigma)
	\end{equation}
	%
	%
	$OPT$
	$(u_{2}, u_{1})$
	%
	$D_{OPT}(\sigma'_{2}) \cup D_{OPT}(f_{2}(\sigma_{2}, u_{3})) = D_{OPT}(\sigma_{2})$
	$D_{OPT}(\sigma''_{2}) \cup D_{OPT}(f_{2}(\sigma_{2}, u_{1})) = D_{OPT}(\sigma_{2})$
	%
	%
	$\sigma$
	$OPT$
	%
	$D_{OPT}(\sigma'_{2}) \cap D_{OPT}(\sigma''_{2}) = \varnothing$
	$D_{OPT}(f_{2}(\sigma_{2}, u_{3})) = D_{OPT}(\sigma''_{2})$
	%
	\[
		D_{OPT}(\sigma'_{2}) \cup D_{OPT}(\sigma''_{2}) = D_{OPT}(\sigma_{2}) 
	\]
	%
	%
	\begin{equation} \label{LMA:rand_up.p5:eq.0}
		D_{OPT}(\sigma_{1}) \cup D_{OPT}(\sigma'_{2}) \cup D_{OPT}(\sigma''_{2}) 
			= D_{OPT}(\sigma)
	\end{equation}
	%
	%
	$(u_{2}, u_{1})$
	$\sigma_{1}, \sigma'_{2}, \sigma''_{2}$
	$\sigma$
	%
	%
	%
	$\sigma$
	$\sigma_{1}, \sigma'_{2}, \sigma''_{2}$
	%
	%
	
	%
	$\sigma_{1}, \sigma'_{2}, \sigma''_{2}$
	%
	\begin{equation} \label{LMA:rand_up.p5:eq.2}
		C_{OPT}(\sigma_{1}) + C_{OPT}(\sigma'_{2}) + C_{OPT}(\sigma''_{2}) 
			= C_{OPT}(\sigma)
	\end{equation}
	%
	%
	$\sigma_{1}, \sigma'_{2}, \sigma''_{2}$
	$RA$
	$U_{1}$
	$U_{3}$
	$U_{2}$
	%
	%
	$\sigma$
	%
	\begin{equation} \label{LMA:rand_up.p5:eq.3}
		{\mathbb E}[C_{RA}(\sigma)] = c_{1} + c_{2} + c_{3}
	\end{equation}
	%
	%
	$\sigma_{1}$
	$U_{2} \setminus \{ u_{2} \}$
	$u$
	$\sigma$
	%
	\begin{equation} \label{LMA:rand_up.p5:eq.4}
		{\mathbb E}[C_{RA}(\sigma_{1})] \geq c_{2} - 2
	\end{equation}
	%
	%
	$u_{2}$
	$\sigma'_{2}$
	$\sigma$
	%
	%
	$\sigma'_{2}$
	$u_{2}$
	%
	%
	$U_{1}$
	\begin{equation} \label{LMA:rand_up.p5:eq.5}
		{\mathbb E}[C_{RA}(\sigma'_{2})] \geq c_{1} + 1
	\end{equation}
	%
	\begin{equation} \label{LMA:rand_up.p5:eq.6}
		{\mathbb E}[C_{RA}(\sigma''_{2})] \geq c_{3} + 1
	\end{equation}
	%
	%
	\[
		{\mathbb E}[C_{RA}(\sigma_{1})] + {\mathbb E}[C_{RA}(\sigma'_{2})] + {\mathbb E}[C_{RA}(\sigma''_{2})] 
			\geq c_{1} + c_{2} + c_{3} 
			= {\mathbb E}[C_{RA}(\sigma)]
	\]
	%
	(3)
	\fi
	\ifnum \count11 > 0
	%
	%
	We prove this lemma similarly to the proofs of the above lemmas. 
	Suppose that an input $\sigma$ satisfies the conditions (i) and (ii) in the statement. 
	To prove the lemma, 
	we will prove that 
	we can construct an input $\sigma''$ satisfying the following three properties from $\sigma$: 
	(1) 
	the number of good vertex triplets in the graph given in $\sigma''$ is less than that in $\sigma$. 
	(2)
	(P1), (P2), (P3) and (P4) hold, 
	and 
	(3) 
	$	\frac{ {\mathbb E}[C_{RA}(\sigma)] }{ C_{OPT}(\sigma) } 
			\leq \frac{ {\mathbb E}[C_{RA}(\sigma'')] }{ C_{OPT}(\sigma'') }$. 
	We can construct an input satisfying the properties (a), (b) and (c) in the statement recursively by proving the existence of $\sigma''$. 
	The graph in $\sigma$ contains a good vertex triplet of $(u_{1}, u_{2}, u_{3})$. 
	That is, 
	both $u_{1}$ and $u_{3}$ are adjacent to $u_{2}$, 
	and 
	$OPT$ selects $u_{1}$ and $u_{3}$. 
	Without loss of generality, 
	suppose that $u_{1}$ is revealed before $u_{3}$. 
	If $u_{2}$ is revealed after $u_{1}$, 
	then the edge $(u_{1}, u_{2})$ is free, 
	which contradicts that $\sigma$ satisfies (P4) by the condition (ii) 
	because $OPT$ selects $u_{1}$. 
	Thus, 
	$u_{2}$ is revealed before $u_{1}$. 
	Now, 
	define 
	$\sigma_{1} = f_{1}(\sigma, u_{2})$, 
	$\sigma_{2} = f_{2}(\sigma, u_{2})$, 
	$\sigma'_{2} = f_{1}(\sigma_{2}, u_{3})$, and 
	$\sigma''_{2} = f_{1}(\sigma_{2}, u_{1})$ 
	(see Fig.~\ref{fig:fig_p5}). 
	Let $u (\ne u_{1}, u_{3})$ be the vertex adjacent to $u_{2}$. 
	Since $u$ is dominated by a vertex except for $u_{1}$ and $u_{3}$, 
	either $(u_{2}, u)$ or $(u, u_{2})$ is free. 
	Thus, 
	\begin{equation} \label{LMA:rand_up.p5:eq.1}
		D_{OPT}(\sigma_{1}) \cup D_{OPT}(\sigma_{2}) = D_{OPT}(\sigma) 
	\end{equation}
	by Lemma~\ref{LMA:rand_up.input_div_opt}. 
	Furthermore, 
	$(u_{2}, u_{1})$ and $(u_{2}, u_{3})$ are free 
	because $OPT$ selects both $u_{1}$ and $u_{3}$. 
	Hence, 
	$D_{OPT}(\sigma'_{2}) \cup D_{OPT}(f_{2}(\sigma_{2}, u_{3})) = D_{OPT}(\sigma_{2})$
	and 
	$D_{OPT}(\sigma''_{2}) \cup D_{OPT}(f_{2}(\sigma_{2}, u_{1})) = D_{OPT}(\sigma_{2})$
	by Lemma~\ref{LMA:rand_up.input_div_opt}. 
	Since $OPT$ does not select $u_{2}$ in $\sigma$ by definition, 
	$D_{OPT}(\sigma'_{2}) \cap D_{OPT}(\sigma''_{2}) = \varnothing$. 
	Thus, 
	$D_{OPT}(f_{2}(\sigma_{2}, u_{3})) = D_{OPT}(\sigma''_{2})$. 
	By these equalities, 
	\[
		D_{OPT}(\sigma'_{2}) \cup D_{OPT}(\sigma''_{2}) = D_{OPT}(\sigma_{2}). 
	\]
	By this equality and Eq.~(\ref{LMA:rand_up.p5:eq.1}), 
	\begin{equation} \label{LMA:rand_up.p5:eq.0}
		D_{OPT}(\sigma_{1}) \cup D_{OPT}(\sigma'_{2}) \cup D_{OPT}(\sigma''_{2}) 
			= D_{OPT}(\sigma), 
	\end{equation}
	which indicates that 
	the above input transformations do not affect free edges except for $(u_{2}, u_{1})$, $(u_{2}, u_{3})$ and $(u_{2}, u)$ (or $(u, u_{2})$). 
	Also, 
	the number of good vertex triplets in $\sigma_{1}$ ($\sigma'_{2}, \sigma''_{2}$) is less than that in $\sigma$, 
	and hence these inputs satisfy the property (1).  
	Since $\sigma$ satisfies the properties from (P1) through (P4) by the condition (ii), 
	$\sigma_{1}$, $\sigma'_{2}$ and $\sigma''_{2}$ also satisfy these properties,  
	that is, 
	the property (2). 
	Finally, we show that at least one input of $\sigma_{1}$, $\sigma'_{2}$ and $\sigma''_{2}$ satisfies the property (3). 
	By Eq.~(\ref{LMA:rand_up.p5:eq.0}), 
	\begin{equation} \label{LMA:rand_up.p5:eq.2}
		C_{OPT}(\sigma_{1}) + C_{OPT}(\sigma'_{2}) + C_{OPT}(\sigma''_{2}) 
			= C_{OPT}(\sigma). 
	\end{equation}
	We evaluate the expected costs for $\sigma_{1}, \sigma'_{2}$ and $\sigma''_{2}$ 
	using Lemma~\ref{LMA:rand_up.vtx_cost}. 
	For $i = 1$ and $3$, 
	let $U_{i}$ be the set of vertices consisting of $u_{i}$ and all the descendants of $u_{i}$. 
	Let $U_{2}$ be the set of all the vertices except for $U_{1} \cup U_{3}$ in $\sigma$. 
	For each $j \in \{ 1, 2, 3 \}$, 
	let $c_{j}$ be the expected cost of $RA$ for $U_{j}$ in $\sigma$. 
	By definition,  
	\begin{equation} \label{LMA:rand_up.p5:eq.3}
		{\mathbb E}[C_{RA}(\sigma)] = c_{1} + c_{2} + c_{3}. 
	\end{equation}
	Note that 
	the set of all the vertices of the graph given in $\sigma_{1}$ is $U_{2} \setminus \{ u_{2} \}$. 
	The degrees of these vertices except for $u$ do not change before and after the input transformation. 
	Also, 
	the expected cost for $u_{2}$ ($u$) in $\sigma$ is one (at most one). 
	Thus, 
	\begin{equation} \label{LMA:rand_up.p5:eq.4}
		{\mathbb E}[C_{RA}(\sigma_{1})] \geq c_{2} - 2. 
	\end{equation}
	The degrees of all the vertices except for $u_{2}$ in $\sigma'_{2}$ are equal to those in $\sigma$, and 
	hence 
	their expected costs do not change. 
	Since $u_{2}$ is the first revealed vertex in $\sigma_{2}$,  
	its expected cost does not decrease. 
	Since $U_{1}$ does not contain $u_{2}$, 
	\begin{equation} \label{LMA:rand_up.p5:eq.5}
		{\mathbb E}[C_{RA}(\sigma'_{2})] \geq c_{1} + 1. 
	\end{equation}
	In the same way, 
	\begin{equation} \label{LMA:rand_up.p5:eq.6}
		{\mathbb E}[C_{RA}(\sigma''_{2})] \geq c_{3} + 1. 
	\end{equation}
	By Eqs.~(\ref{LMA:rand_up.p5:eq.3}), (\ref{LMA:rand_up.p5:eq.4}), (\ref{LMA:rand_up.p5:eq.5}) and (\ref{LMA:rand_up.p5:eq.6}), 
	\[
		{\mathbb E}[C_{RA}(\sigma_{1})] + {\mathbb E}[C_{RA}(\sigma'_{2})] + {\mathbb E}[C_{RA}(\sigma''_{2})] 
			\geq c_{1} + c_{2} + c_{3} 
			= {\mathbb E}[C_{RA}(\sigma)]. 
	\]
	The property (3) holds 
	by this inequality and Eq.~(\ref{LMA:rand_up.p5:eq.2}). 
	\fi
\end{proof}
\ifnum \count10 > 0
%
%

%

%
\fi
\ifnum \count11 > 0
%
%

%
\fi
\ifnum \count12 > 0
\begin{figure*}[ht]
	 \begin{center}
	  \includegraphics[width=130mm]{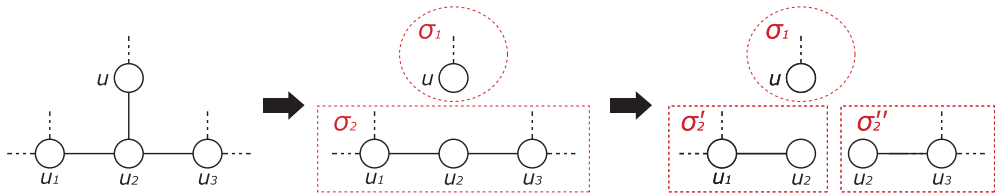}
	 \end{center}
	 \caption{
\ifnum \count10 > 0
%
%
$\sigma$
$\sigma'_{2}$
\fi
\ifnum \count11 > 0
%
%
Construction of $\sigma_{1}, \sigma'_{2}$ and $\sigma''_{2}$ from $\sigma$. 
Note that both $\sigma'_{2}$ and $\sigma''_{2}$ contain $u_{2}$. 
\fi
			}
	\label{fig:fig_p5}
\end{figure*}
\fi
\begin{LMA} \label{LMA:rand_up.degnot2}
	\ifnum \count10 > 0
	%
	%
	(P1)
	%
	
	%
	\fi
	\ifnum \count11 > 0
	%
	%
	Consider the graph in an input satisfying the properties from (P1) to (P6) inclusive. 
	Then, 
	the degree of any vertex is one or three. 
	That is, 
	(P7) holds. 
	\fi
\end{LMA}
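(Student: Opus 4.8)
The plan is to rule out vertices of degree two; since the tree is connected with at least two vertices every degree is at least one, and (P2) bounds degrees above by three, so excluding two gives (P7). The main tool is the observation $(\star)$: \emph{every free edge $(a,b)$} --- recall this denotes the edge arising when $b$ arrives at $a$ --- \emph{satisfies $a \ne v_1$ and ${deg}(a) = {deg}(b) = 3$.} Indeed, by (P1) the edge is fixed; part~(i) of the definition of ``fixed'' gives $a \ne v_1$, part~(ii) gives ${deg}(b) \ge 3$, which with (P2) forces ${deg}(b) = 3$, and part~(iii) gives ${deg}(a) = 3$ or ${deg}(a) = 2$, the latter being excluded by (P6), so ${deg}(a) = 3$. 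I will also use (P3) in contrapositive form: a vertex whose degree is not three is not selected by $OPT$; in particular a degree-two vertex lies outside $D_{OPT}$ and hence is dominated (under $OPT$) by one of its at most two neighbours.

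Suppose for contradiction that some vertex $v$ has ${deg}(v) = 2$, and first assume $v \ne v_1$. Then $v$ has a parent $w$ (the vertex it arrived at) and a unique child $c$ (the vertex that arrived at $v$), and $v$ is dominated by $w$ or by $c$. If $w \in D_{OPT}$, consider the edge $(v, c)$ and let $U$ consist of $c$ together with all descendants of $c$. Since $v \notin D_{OPT}$, any vertex selected by $OPT$ that dominates $c$ equals $c$ or is a child of $c$, hence lies in $U$; and $v$ is dominated by $w$, which lies outside $U$. So $(v, c)$ is free, and $(\star)$ yields ${deg}(v) = 3$, a contradiction. If instead $w \notin D_{OPT}$, then $v$ must be dominated by $c$, so $c \in D_{OPT}$; consider the edge $(w, v)$ and let $U$ consist of $v$ together with all descendants of $v$. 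Here $v$ is dominated by $c \in U$, while $w$, not being selected by $OPT$, is dominated by a neighbour other than $v$ --- necessarily an ancestor of $v$ or a vertex of a sibling subtree, hence outside $U$. So $(w, v)$ is free, and $(\star)$ yields ${deg}(v) = 3$, again a contradiction.

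It remains to treat $v = v_1$ with ${deg}(v_1) = 2$. Both neighbours of $v_1$ arrived at $v_1$; call them $c_1$ and $c_2$, and note that the descendant subtrees of $c_1$ and $c_2$ are disjoint and together contain all vertices other than $v_1$. Since $v_1 \notin D_{OPT}$, one of the two, say $c_1$, lies in $D_{OPT}$ and dominates $v_1$. Consider the edge $(v_1, c_2)$ and let $U$ consist of $c_2$ together with all descendants of $c_2$. As $v_1 \notin D_{OPT}$, any vertex selected by $OPT$ that dominates $c_2$ equals $c_2$ or is a child of $c_2$, hence lies in $U$; and $v_1$ is dominated by $c_1$, which lies outside $U$. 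Thus $(v_1, c_2)$ is free, but $(\star)$ asserts that the endpoint at which $c_2$ arrived --- namely $v_1$ --- is not $v_1$, a contradiction. Hence no vertex has degree two, which is (P7).

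Each step is short; the delicate point is checking, in each of the three cases, that the exhibited edge really is free in the precise sense of the definition, particularly the ``$u$ is dominated by a vertex in $U$'' clause. This rests on the facts that $D_{OPT}$ dominates every vertex while neither the degree-two vertex nor (when $w \notin D_{OPT}$) $w$ itself can lie in $D_{OPT}$, together with the tree fact that ancestors and sibling subtrees of a vertex are disjoint from its descendant set. Degenerate configurations --- for instance a child not selected by $OPT$ being a leaf and hence undominated --- are ruled out automatically by this same domination bookkeeping, so no separate treatment is needed for them.
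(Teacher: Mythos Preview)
Your proof is correct and follows essentially the same route as the paper's: both argue that a degree-two vertex $v$ cannot be selected by $OPT$ (via (P3)), then exhibit a free edge incident with $v$ whose properties under (P1) and (P6) force ${deg}(v)=3$ (or $v\ne v_1$), a contradiction. Your packaging of (P1), (P2), (P6) into the single observation $(\star)$ and your explicit split on whether the parent lies in $D_{OPT}$ differ only cosmetically from the paper, which instead uses (P6) up front to rule out the $w\in D_{OPT}$ branch before turning to the edge $(w,v)$.
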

\begin{proof}
	\ifnum \count10 > 0
	%
	%
	(P1)
	(P2)
	$v$
	$v$
	%
	%
	%
	
	%
	$u$
	(P6)
	$(v, u)$
	(P3)
	$OPT$
	$OPT$
	$v' (\ne u)$
	$OPT$
	$(v, u)$
	$OPT$
	%
	%
	%
	$v = v_{1}$
	$(v, v')$
	%
	%
	$v \ne v_{1}$
	$(v', v)$
	$v$
	%
	(P1)
	%
	%
	$v$
	\fi
	\ifnum \count11 > 0
	%
	%
	Consider an input satisfying the properties from (P1) to (P6) inclusive. 
	It suffices to prove that 
	the degree of a vertex $v$ in the input is not two 
	because the degree of $v$ is at most three by (P2). 
	Then, 
	we prove it by contradiction, and 
	assume that the degree of $v$ is two. 
	Let $u$ be a vertex arriving at $v$. 
	The edge $(v, u)$ is not free by (P6). 
	$OPT$ does not select $v$ 
	because the degree of any vertex selected by $OPT$ is three by (P3). 
	Let $v' (\ne u)$ be the other vertex adjacent to $v$. 
	If $OPT$ selects $v'$, 
	then $(v, u)$ is free. 
	Thus, $OPT$ does not select $v'$, 
	which means that 
	$u$ dominates $v$. 
	If $v = v_{1}$, 
	then the edge $(v, v')$ is free. 
	This edge is not fixed, 
	which contradicts that the input satisfies (P1). 
	If $v \ne v_{1}$, 
	then the edge $(v', v)$ is free. 
	This edge is not fixed 
	since ${deg}(v) = 2$ by definition, 
	which contradicts (P1). 
	By the above argument, 
	we have shown that the degree of $v$ is not two. 
	\fi
\end{proof}
\ifnum \count10 > 0
%
%

%

%
\fi
\ifnum \count11 > 0
%
%

%
\fi
%
\begin{LMA} \label{LMA:rand_up.p6}
	\ifnum \count10 > 0
	%
	%
	(i) 
	${deg}(v) = 2$
	(ii) 
	(P1), (P2), (P3), (P4), (P5)
	%
	
	%
	%
	(a) 
	${deg}(v) = 3$
	(b)
	(P1),(P2),(P3), (P4), (P5)
	(c) 
	$\frac{ {\mathbb E}[C_{RA}(\sigma)] }{ C_{OPT}(\sigma) } 
	\leq \frac{ {\mathbb E}[C_{RA}(\sigma')] }{ C_{OPT}(\sigma') }$
	%
	
	%
	\fi
	\ifnum \count11 > 0
	%
	%
	Suppose that an input $\sigma$ satisfies the following conditions: 
	(i) 
	there exists at least one free edge $(v, u)$ such that ${deg}(v) = 2$, 
	and 
	(ii) 
	the properties from (P1) to (P5) inclusive hold. 
	Then, 
	there exists an input $\sigma'$ such that 
	(a) 
	for any free edge $(v, u)$, 
	${deg}(v) = 3$, 
	(b)
	the properties from (P1) to (P5) inclusive hold, 
	and 
	(c) 
	$\frac{ {\mathbb E}[C_{RA}(\sigma)] }{ C_{OPT}(\sigma) } 
	\leq \frac{ {\mathbb E}[C_{RA}(\sigma')] }{ C_{OPT}(\sigma') }$. 
	\fi
\end{LMA}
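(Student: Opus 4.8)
The plan is to reuse the recursive template of Lemmas~\ref{LMA:rand_up.p1}--\ref{LMA:rand_up.p5}: I will construct from an input $\sigma$ satisfying (i) and (ii) an input $\sigma''$ such that (1) the number of free edges $(v,u)$ with ${deg}(v)=2$ is strictly smaller than in $\sigma$, (2) $\sigma''$ still satisfies (P1)--(P5), and (3) $\frac{{\mathbb E}[C_{RA}(\sigma)]}{C_{OPT}(\sigma)}\le\frac{{\mathbb E}[C_{RA}(\sigma'')]}{C_{OPT}(\sigma'')}$; iterating this produces the desired $\sigma'$. So fix a free edge $(v,u)$ with ${deg}(v)=2$, write $v'$ for the vertex at which $v$ arrives, and let $U$ be the set of $u$ and all of its descendants. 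Since $(v,u)$ is fixed by (P1) and ${deg}(v)=2$, we get $v\ne v_{1}$, ${deg}(u)\ge 3$, ${deg}(v')\ge 3$ and ${deg}_{v}(v')\ge 3$, hence ${deg}(u)={deg}(v')=3$ by (P2). As $v$ is dominated from outside $U$ and, by (P3), $OPT$ does not select the degree-two vertex $v$, it follows that $OPT$ selects $v'$ and dominates $v$ by it. I then split on how $OPT$ dominates $u$: either (A) $OPT$ selects $u$, or (B) $OPT$ does not select $u$, in which case $u$ (whose neighbours in $U$ are only its children) must be dominated by a child $u''$ of $u$.

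In case (A) the edge $(v',v)$ is itself free: $v$ is dominated by $u\in U\cup\{v\}$, while $v'$ is dominated by $v'\notin U\cup\{v\}$. So I take $\sigma_{2}=f_{2}(\sigma,v)$, the subtree rooted at $v$, in which $v$ becomes the first vertex, and $\sigma_{1}=f_{3}(f_{1}(\sigma,v),v',\hat\sigma)$, attaching one fresh leaf to $v'$ so that ${deg}(v')$ stays three. Lemma~\ref{LMA:rand_up.input_div_opt} applied to the free edge $(v',v)$, together with Lemma~\ref{LMA:rand_up.input_cnct_opt} applied to $v'$, gives $C_{OPT}(\sigma_{1})+C_{OPT}(\sigma_{2})=C_{OPT}(\sigma)$. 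By Lemma~\ref{LMA:rand_up.vtx_cost} the only vertex whose expected $RA$-cost changes is $v$, whose cost rises from $1/2$ (degree two in $\sigma$) to $1$ (it is the first vertex of $\sigma_{2}$), while the new leaf of $v'$ contributes $0$ because ${deg}$ of $v'$ is already three when it arrives; hence ${\mathbb E}[C_{RA}(\sigma_{1})]+{\mathbb E}[C_{RA}(\sigma_{2})]={\mathbb E}[C_{RA}(\sigma)]+\tfrac12$. Therefore one of $\sigma_{1},\sigma_{2}$ has ratio at least $\frac{{\mathbb E}[C_{RA}(\sigma)]+1/2}{C_{OPT}(\sigma)}>\frac{{\mathbb E}[C_{RA}(\sigma)]}{C_{OPT}(\sigma)}$, and in both inputs the free edge $(v,u)$ has disappeared; a routine check — no degree rises above three, so no free edge becomes unfixed and no good vertex triplet is created, and $OPT$ still selects only degree-three vertices — shows (P1)--(P5) survive.

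Case (B) is where I expect the real difficulty to lie: there $(v',v)$ is not free, so splitting at $v$ as above would force $OPT$ to buy one extra vertex to dominate $v$ inside the detached subtree, which ruins the ratio comparison (since ${\mathbb E}[C_{RA}(\sigma)]\ge C_{OPT}(\sigma)$, a simultaneous increase of $C_{OPT}$ by one is fatal). The natural route is to first perform an input surgery inside $U$ that reduces (B) to (A) — roughly, rearranging $U$ so that $OPT$ is forced to select $u$ — while keeping ${deg}(v')={deg}(u)=3$ and not increasing $C_{OPT}$. The delicate point, and the main obstacle, is the cost bookkeeping under such a rearrangement: (P4) pins every child of $u''$ to $u''$ (so $u''$ cannot simply be released), (P5) forbids a good triplet at $u$ (using that $OPT$ cannot select the other child $u'''$ of $u$), and the surgery must avoid creating a new degree-two free edge; making all of these constraints compatible with $C_{OPT}$ not rising is the step I expect to be hard. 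Once the reduction to case (A) is achieved, the cost arithmetic and the preservation of (P1)--(P5) go through exactly as in case (A).
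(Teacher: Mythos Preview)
Your Case (A) is vacuous. You correctly deduce that $OPT$ must select $v'$ (since $v$ is dominated from outside $U$, $\deg(v)=2$, and (P3) forbids $OPT$ from selecting $v$). But if in addition $OPT$ selected $u$, then the edge $(v',v)$ would itself be free: $v$ is dominated by $u$, which lies among the descendants of $v$, and $v'$ is dominated by itself. This free edge $(v',v)$ has $\deg(v)=2<3$, so it fails condition (ii) of ``fixed'', contradicting (P1); it also has $OPT$ selecting the parent $v'$, contradicting (P4). Hence under hypotheses (i)--(ii) Case (A) cannot occur, and the only case is your Case (B), for which you have no argument beyond a hoped-for ``surgery inside $U$''.

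The paper's proof does not split on whether $OPT$ selects $u$; its mechanism is entirely different and is worth knowing. It first chooses the edge $(v,u_{1})$ to be \emph{deepest}, i.e.\ no descendant of $v$ has the same defect; this guarantees (via the argument of Lemma~\ref{LMA:rand_up.degnot2}) that inside the subtree $U$ all degrees are $1$ or $3$, whence together with (P4),(P5) the optimal selection inside $U$ is unique. Then it runs a dichotomy on the ratio, not on $OPT$: set $\sigma_{1}=f_{1}(\sigma,u_{1})$. If $\sigma_{1}$ already has ratio at least that of $\sigma$, take $\sigma_{1}$. Otherwise the key move is to \emph{duplicate} $U$: attach a fresh copy $\sigma_{2}'$ of the subtree $\sigma_{2}=f_{2}(\sigma,u_{1})$ to $v$, obtaining $\sigma_{3}=f_{3}(\sigma,v,\sigma_{2}')$ with $\deg(v)=3$. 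Writing $a,a'$ for the $RA/OPT$ costs on the complement of $U$ (excluding $v$) and $b,b'$ for those on $U$, one has
\[
\frac{{\mathbb E}[C_{RA}(\sigma)]}{C_{OPT}(\sigma)}=\frac{a+\tfrac12+b}{a'+b'},\qquad
\frac{{\mathbb E}[C_{RA}(\sigma_{3})]}{C_{OPT}(\sigma_{3})}=\frac{a+1+2b}{a'+2b'},
\]
and the failure of $\sigma_{1}$ gives $\frac{a}{a'}<\frac{a+\tfrac12+b}{a'+b'}$, i.e.\ $\frac{a}{a'}<\frac{\tfrac12+b}{b'}$, which is exactly what is needed to conclude $\frac{a+\tfrac12+b}{a'+b'}<\frac{a+1+2b}{a'+2b'}$. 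The uniqueness of $OPT$ on $U$ is what makes $C_{OPT}(\sigma_{3})=a'+2b'$ and preserves (P1)--(P5). None of this is present in your outline; the duplication trick, and the choice of a deepest bad edge to force rigidity of $OPT$ inside $U$, are the missing ideas.
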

\begin{proof}
	\ifnum \count10 > 0
	%
	%
	%
	(i)
	$\sigma$
	$v$
	%
	$v$
	$v$
	$(v, u_{1})$
	%
	%
	%
	(1) 
	$\sigma''$
	$(v, u_{1})$
	$\sigma$
	(2)
	(P1),(P2),(P3),(P4),(P5)
	(3) 
	$	\frac{ {\mathbb E}[C_{RA}(\sigma)] }{ C_{OPT}(\sigma) } 
			\leq \frac{ {\mathbb E}[C_{RA}(\sigma'')] }{ C_{OPT}(\sigma'') }
	$
	%
	%
	%
	
	%
	$\sigma_{1} = f_{1}(\sigma, u_{1})$
	$\frac{ {\mathbb E}[C_{RA}(\sigma_{1})] }{ C_{OPT}(\sigma_{1}) } 
			\geq \frac{ {\mathbb E}[C_{RA}(\sigma)] }{ C_{OPT}(\sigma) }$
	%
	$\sigma_{1}$
	%
	%
	$D_{OPT}(\sigma_{1})$
	$\sigma_{1}$
	$(v, u_{1})$
	$\sigma$
	(1)
	%
	%
	$\sigma_{1}$
	(2)
	\begin{equation} \label{LMA:rand_up.p6:eq.2}
		\frac{ {\mathbb E}[C_{RA}(\sigma_{1})] }{ C_{OPT}(\sigma_{1}) } 
			< \frac{ {\mathbb E}[C_{RA}(\sigma)] }{ C_{OPT}(\sigma) } 
	\end{equation}
	%
	%
	$\sigma_{2} = f_{2}(\sigma, u_{1})$
	$\sigma'_{2}$
	$\sigma'_{2} = (U', F', R')$
	$\sigma_{2} = (U, F, R)$
	$U = \{ u_{i} \mid i \in [1, k] \}$
	$R = (u_{1}, \ldots, u_{k})$
	%
	$U' = \{ u'_{i} \mid i \in [1, k] \}$
	$R' = (u'_{1}, \ldots, u'_{k})$
	$F' = \{ \{ u'_{i}, u'_{j} \} \mid \{ u_{i}, u_{j} \} \in F \}$
	%
	%
	$\sigma_{3} = f_{3}(\sigma, v, \sigma'_{2})$
	%
	%
	$\sigma_{3}$
	$\sigma_{3}$
	$\sigma$
	$u$
	$u'$
	$u$
	%
	$\sigma$
	%
	$v$
	$U$
	%
	$\sigma$
	%
	%
	$U$
	$\sigma$
	%
	%
	$\sigma$
	%
	(P3)
	%
	%
	$U$
	%
	%
	$\sigma_{3}$
	$U$
	$\sigma$
	$U'$
	$OPT$
	$OPT$
	%
	$\sigma_{3}$
	$v$
	$U''$
	$U$
	%
	%
	%
	$\sigma_{3}$
	$U \cup U''$
	$\sigma$
	%
	$\sigma_{3}$
	$U''$
	$\sigma$
	%
	%
	$\sigma_{3}$
	(1)
	%
	(2)
	%
	
	%
	%
	$\sigma$
	$U'' \setminus \{ v \}$
	$U'' \setminus \{ v \}$
	%
	%
	$\sigma$
	$\sigma_{3}$
	%
	$\sigma$
	(P3)
	%
	$\sigma_{3}$
	%
	%
	\begin{equation} \label{LMA:rand_up.p6:eq.3}
		\frac{ {\mathbb E}[C_{RA}(\sigma)] }{ C_{OPT}(\sigma) } 
			= \frac{ a + 1/2 + b }{ a' + b' }
	\end{equation}
	\begin{equation} \label{LMA:rand_up.p6:eq.4}
		\frac{ {\mathbb E}[C_{RA}(\sigma_{3})] }{ C_{OPT}(\sigma_{3}) } 
			= \frac{ a + 1 + 2 b }{ a' + 2 b' }
	\end{equation}
	%
	%
	$\sigma_{1}$
	$U'' \setminus \{ v \}$
	$\sigma$
	${\mathbb E}[C_{RA}(\sigma_{1})] \geq a$
	%
	%
	$(v, u_{1})$
	$C_{OPT}(\sigma_{1}) = a'$
	%
	%
	$\frac{ a }{ a' } < \frac{ a + 1/2 + b }{ a' + b' }$
	$\frac{ a }{ a' } < \frac{1/2 + b }{ b' }$
	%
	\[
		\frac{ {\mathbb E}[C_{RA}(\sigma)] }{ C_{OPT}(\sigma) } 
			< \frac{ {\mathbb E}[C_{RA}(\sigma_{3})] }{ C_{OPT}(\sigma_{3}) } 
	\]
	$\sigma_{3}$
	\fi
	\ifnum \count11 > 0
	%
	%
	We prove this lemma similarly to the proofs of the above lemmas. 
	Suppose that an input $\sigma$ satisfies the conditions (i) and (ii) in the statement. 
	Let $(v, u_{1})$ be a free edge of the graph in $\sigma$ such that the degree of $v$ is two and 
	there does not exist a descendant $v'$ of $v$ such that 
	$(v', v'')$ is free, in which a vertex $v''$ arrives at $v'$ and 
	the degree of $v'$ is two. 
	To prove the lemma, 
	we will prove that 
	we can construct an input $\sigma''$ satisfying the following three properties from $\sigma$: 
	(1) 
	the number of such free edges $(v, u_{1})$ of the graph given in $\sigma''$ is less than that in $\sigma$. 
	(2)
	the properties from (P1) to (P5) inclusive hold, 
	and 
	(3) 
	$	\frac{ {\mathbb E}[C_{RA}(\sigma)] }{ C_{OPT}(\sigma) } 
			\leq \frac{ {\mathbb E}[C_{RA}(\sigma'')] }{ C_{OPT}(\sigma'') }$. 
	We can construct an input satisfying the properties (a), (b) and (c) in the statement recursively by proving the existence of $\sigma''$. 
	Define $\sigma_{1} = f_{1}(\sigma, u_{1})$. 
	We fist consider the case in which 
	$\frac{ {\mathbb E}[C_{RA}(\sigma_{1})] }{ C_{OPT}(\sigma_{1}) } 
			\geq \frac{ {\mathbb E}[C_{RA}(\sigma)] }{ C_{OPT}(\sigma) }$. 
	Clearly $\sigma_{1}$ satisfies the above property (3). 
	$D_{OPT}(\sigma_{1})$ can be a subset of $D_{OPT}(\sigma)$ by Lemma~\ref{LMA:rand_up.input_div_opt}, and thus 
	the number of such edges $(v, u_{1})$ in $\sigma_{1}$ is less than that in $\sigma$ by one, 
	which indicates that the property (1) is satisfied. 
	Furthermore, 
	$\sigma_{1}$ satisfies the properties from (P1) through (P5) by the condition (ii), 
	and hence satisfies the property (2). 
	Next we consider the case in which
	\begin{equation} \label{LMA:rand_up.p6:eq.2}
		\frac{ {\mathbb E}[C_{RA}(\sigma_{1})] }{ C_{OPT}(\sigma_{1}) } 
			< \frac{ {\mathbb E}[C_{RA}(\sigma)] }{ C_{OPT}(\sigma) }. 
	\end{equation}
	Define $\sigma_{2} = f_{2}(\sigma, u_{1})$, and 
	let $\sigma'_{2}$ be a copy of $\sigma_{2}$. 
	Specifically,  
	define $\sigma'_{2} = (U', F', R')$, 
	in which 
	$\sigma_{2} = (U, F, R)$, 
	$k$ is the number of vertices in $U$, 
	$U = \{ u_{i} \mid i \in [1, k] \}$, 
	$R = (u_{1}, \ldots, u_{k})$, 
	$U' = \{ u'_{i} \mid i \in [1, k] \}$, 
	$R' = (u'_{1}, \ldots, u'_{k})$ and 
	$F' = \{ \{ u'_{i}, u'_{j} \} \mid \{ u_{i}, u_{j} \} \in F \}$. 
	Then, 
	define $\sigma_{3} = f_{3}(\sigma, v, \sigma'_{2})$. 
	First, 
	we discuss vertices selected by $OPT$ of the graph in $\sigma_{3}$ 
	to prove that $\sigma_{3}$ satisfies the properties (1) and (2). 
	Suppose that for vertices $u, u', u'' \in U$, 
	$u$ is adjacent to $u'$, 
	$u'$ is adjacent to $u''$, and 
	$OPT$ selects $u$ in $\sigma$. 
	Then, 
	$OPT$ does not select $u'$ in $\sigma$ 
	because $\sigma$ satisfies (P4) by the condition (ii). 
	Also, 	
	the degree of any vertex in $U$ is one or three 
	by the definition of $v$ together with Lemma~\ref{LMA:rand_up.degnot2}. 
	Hence, 
	$OPT$ does not select $u''$ in $\sigma$ 
	because $\sigma$ satisfies (P5) by the condition (ii). 
	The length of a path between two different vertices selected by $OPT$ in $U$ is at least three 
	when dealing with $\sigma$ by these facts. 
	Thus, 
	any vertex in $U$ is dominated by only one vertex in $\sigma$. 
	Moreover, 
	a vertex selected by $OPT$ dominates four vertices by (P3). 
	Thus, 
	the way for $OPT$ to select vertices in $U$ is unique. 
	That is, 
	the vertices selected by $OPT$ in $U$ in $\sigma_{3}$ are the same as those in $\sigma$ (called Fact~(a)). 
	There exists $OPT$ such that 
	$OPT$ selects $u_{i}$ if and only if $OPT$ selects $u'_{i}$ by the definition of $U'$, and we use such $OPT$ (Fact~(b)). 
	Thus, 
	$v$ is not dominated by only $u'_{1}$ in $\sigma_{3}$ 
	(Fact~(c)). 
	Let $U''$ be the set of vertices except for $U$ of the graph in $\sigma$, 
	that is, 
	all the vertices in $f_{1}(\sigma, u_{1})$. 
	The set of vertices selected by $OPT$ in $U \cup U''$ for $\sigma_{3}$ is a dominating set of the graph in $\sigma$ by Fact~(c), 
	which together with Fact~(a) and the optimality of $OPT$ for $\sigma$ shows that the vertices selected by $OPT$ in $U''$ for $\sigma_{3}$ can be the same as those for $\sigma$ (Fact~(d)). 
	Hence, 
	$\sigma_{3}$ satisfies the property (1) 
	because the degree of $v$ in $\sigma_{3}$ is three. 
	Also, 
	$\sigma_{3}$ satisfies the properties from (P1) through (P5) 
	because $\sigma$ satisfies the properties by the condition (ii) 
	and $\sigma_{3}$ satisfies the property (2). 
	Finally, we discuss the property (3). 
	Let $a$ and $b$ be the expected costs of $RA$ for $U'' \setminus \{ v \}$ and $U$, respectively, in $\sigma$. 
	Let $a'$ and $b'$ be the costs of $OPT$ for $U'' \setminus \{ v \}$ and $U$, respectively, in $\sigma$. 
	Since the degree of $v$ in $\sigma$ ($\sigma_{3}$) is two (three), 
	the expected cost for $v$ in $\sigma$ ($\sigma_{3}$) is $1/2$ (one) by Lemma~\ref{LMA:rand_up.vtx_cost}. 
	$OPT$ does not select $v$ in $\sigma$ by (P3)
	because the degree of $v$ is two, 
	and hence by Fact~(d), 
	$OPT$ does not select $v$ in $\sigma_{3}$, either. 
	By these costs, 
	\begin{equation} \label{LMA:rand_up.p6:eq.3}
		\frac{ {\mathbb E}[C_{RA}(\sigma)] }{ C_{OPT}(\sigma) } 
			= \frac{ a + 1/2 + b }{ a' + b' },  
	\end{equation}
	and 
	\begin{equation} \label{LMA:rand_up.p6:eq.4}
		\frac{ {\mathbb E}[C_{RA}(\sigma_{3})] }{ C_{OPT}(\sigma_{3}) } 
			= \frac{ a + 1 + 2 b }{ a' + 2 b' }
	\end{equation}
	by Fact~(b). 
	The expected cost of $RA$ for $U'' \setminus \{ v \}$ in $\sigma_{1}$ is equal to that in $\sigma$ by Lemma~\ref{LMA:rand_up.vtx_cost}, 
	and thus 
	${\mathbb E}[C_{RA}(\sigma_{1})] \geq a$. 
	In addition, 
	$C_{OPT}(\sigma_{1}) = a'$ 
	by Lemma~\ref{LMA:rand_up.input_div_opt} 
	since $(v, u_{1})$ is free. 
	These costs for $\sigma_{1}$ and Eqs.~(\ref{LMA:rand_up.p6:eq.2}) and (\ref{LMA:rand_up.p6:eq.3}) show that 
	$\frac{ a }{ a' } < \frac{ a + 1/2 + b }{ a' + b' }$, 
	which is rearranged to 
	$\frac{ a }{ a' } < \frac{1/2 + b }{ b' }$. 
	This inequality together with Eqs.~(\ref{LMA:rand_up.p6:eq.3}) and  (\ref{LMA:rand_up.p6:eq.4}) yields 
	\[
		\frac{ {\mathbb E}[C_{RA}(\sigma)] }{ C_{OPT}(\sigma) } 
			< \frac{ {\mathbb E}[C_{RA}(\sigma_{3})] }{ C_{OPT}(\sigma_{3}) },  
	\]
	which means that $\sigma_{3}$ satisfies the property (3). 
	\fi
\end{proof}
\ifnum \count10 > 0
%
%
%
%
(P1)
\fi
\ifnum \count11 > 0
%
%
Now we can show Lemma~\ref{LMA:rand_up.basic_p1_7}
using Lemmas~\ref{LMA:rand_up.degnot2} and~\ref{LMA:rand_up.p6}. 
In the next section, 
we analyze only inputs satisfying the properties from (P1) through (P7). 
\fi
%
\subsection{Analysis of $RA$}\label{sec:analysis}
\ifnum \count10 > 0
%
%
$RA$
%
%
%
%
%
$N(v)$
$N(v) = \{ u \mid \{ v, u\} \in E \}$
%

%
%
\fi
\ifnum \count11 > 0
%
%
We assign a positive integer to each vertex of a given tree according to the below routine. 
We call the set of all the vertices with the same assigned value a {\em block}. 
%
All the vertices in a block are on a path of at most length three. 
%
We obtain the competitive ratio of $RA$ by evaluating the costs of $RA$ and $OPT$ for each block. 
For a vertex $v$, 
$N(v)$ denotes the set of vertices adjacent to $v$. 
That is, 
$N(v) = \{ u \mid \{ v, u\} \in E \}$, 
in which 
$E$ is the set of all the edges of a given graph. 
\fi
\ifnum \count10 > 0
%
%
\noindent\vspace{-1mm}\rule{\textwidth}{0.5mm} 
\vspace{-3mm}
{{\sc BlockRoutine}}\\
\rule{\textwidth}{0.1mm}
%
	%
	%
	{\bf\boldmath Step~1:} 
		%
		$\ell := 0$, 
		$U := \{ v_{i} \mid i = 1, \ldots, n \}$, 
		%
%
	{\bf\boldmath Step~2:}
		%
		$\ell := \ell + 1$. 
		$U = \varnothing$
		$i_{1} := \min \{ i \mid v_{i} \in U \}$
		$U := U \setminus \{ v_{i_{1}} \}$
%
%
	{\bf\boldmath Step~3:}
		%
%
		$U \cap N(v_{i_{1}})= \varnothing$
		$i_{2} := \min \{ i \mid v_{i} \in U \cap N(v_{i_{1}}) \}$
		$U := U \setminus \{ v_{i_{2}} \}$
		%
%
		{\bf\boldmath Step~4:}
		%
%
		$U' := U \cap \{ N(v_{i_{1}}) \cup N(v_{i_{2}}) \}$
		$U' = \varnothing$
		$i_{3} := \min \{ i \mid v_{i} \in U' \}$
		$U := U \setminus \{ v_{i_{3}} \}$
		Step~2
\noindent\vspace{-1mm}\rule{\textwidth}{0.5mm} 
\fi
\ifnum \count11 > 0
%
%
\noindent\vspace{-1mm}\rule{\textwidth}{0.5mm} 
\vspace{-3mm}
{{\sc BlockRoutine}}\\
\rule{\textwidth}{0.1mm}
%
	%
	%
	{\bf\boldmath Step~1:} 
		%
		$\ell := 0$ and 
		$U := \{ v_{i} \mid i = 1, \ldots, n \}$, 
		in which $n$ is the number of all the vertices of a given graph. \\
	{\bf\boldmath Step~2:}
		%
		$\ell := \ell + 1$. 
		If $U = \varnothing$, then finish. 
		Otherwise, 
		$i_{1} := \min \{ i \mid v_{i} \in U \}$, 
		$U := U \setminus \{ v_{i_{1}} \}$ and 
		assign $\ell$ to the vertex $v_{i_{1}}$. \\
	{\bf\boldmath Step~3:}
		%
%
		If $U \cap N(v_{i_{1}})= \varnothing$, then go to Step~2. 
		Otherwise, 
		$i_{2} := \min \{ i \mid v_{i} \in U \cap N(v_{i_{1}}) \}$, 
		$U := U \setminus \{ v_{i_{2}} \}$ and 
		assign $\ell$ to the vertex $v_{i_{2}}$. \\
		{\bf\boldmath Step~4:}
		%
%
		$U' := U \cap \{ N(v_{i_{1}}) \cup N(v_{i_{2}}) \}$. 
		If $U' = \varnothing$, then go to Step~2. 
		Otherwise, 
		$i_{3} := \min \{ i \mid v_{i} \in U' \}$, 
		$U := U \setminus \{ v_{i_{3}} \}$, 
		assign $\ell$ to the vertex $v_{i_{3}}$ and 
		go to Step~2. \\
\noindent\vspace{-1mm}\rule{\textwidth}{0.5mm} 
\fi
\begin{LMA} \label{LMA:rand_up.n4}
	\ifnum \count10 > 0
	%
	%
	%
	
	%
	\fi
	\ifnum \count11 > 0
	%
	%
	The number of all the vertices of a given tree is at least four. 
	\fi
\end{LMA}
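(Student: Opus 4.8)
The plan is to work with the input $\sigma$ furnished by Lemma~\ref{LMA:rand_up.basic_p1_7}, which maximizes $R(\tau):=\frac{\mathbb{E}[C_{RA}(\tau)]}{C_{OPT}(\tau)}$ and satisfies the properties (P1)--(P7), and to rule out, for the tree of $\sigma$, every number of vertices less than four. Write $n$ for the number of vertices of that tree. By the standing assumption of Section~\ref{sec:model} we already have $n\ge 2$, so it suffices to show $n\neq 2$ and $n\neq 3$.

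The case $n=3$ is purely structural: the only tree on three vertices is a path, whose middle vertex has degree two, which contradicts property (P7) (Lemma~\ref{LMA:rand_up.degnot2}); hence $n\neq 3$. For $n=2$ I would argue by comparing competitive ratios. If $n=2$, the tree is a single edge $\{x,y\}$, so $C_{OPT}(\sigma)=1$; tracing $A$ and $B$ on the (essentially unique) revelation order, when $v_2$ arrives at $v_1$ we are in Case~2.2 with an odd path length, so one of the two algorithms re-selects $v_1$ (cost $1$) and the other also selects $v_2$ (cost $2$), giving $\mathbb{E}[C_{RA}(\sigma)]=3/2$ and $R(\sigma)=3/2$. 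On the other hand, the star $K_{1,3}$ with centre $w$ and leaves $a,b,c$, revealed in the order $a,w,b,c$, has $C_{OPT}=1$, while a direct simulation (via Lemma~\ref{LMA:rand_up.vtx_cost}, or from the case definitions of $A$ and $B$) gives $D_A=\{a,w\}$ and $D_B=\{a,b,w\}$, so this input has ratio $\frac{(2+3)/2}{1}=5/2>3/2$. Consequently $R(\sigma)\ge 5/2>3/2$, which is incompatible with $R(\sigma)=3/2$; hence $n\neq 2$, and therefore $n\ge 4$.

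I do not expect a genuine obstacle here: the only care needed is to confirm that the trees on two and three vertices have been enumerated correctly and to carry out the two short simulations of $A$ and $B$ (the single edge forcing expected cost $3/2$ and the revealed star forcing expected cost $5/2$), both of which are immediate from the case definitions of the algorithms and from Lemma~\ref{LMA:rand_up.vtx_cost}.
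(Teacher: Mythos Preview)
Your argument is correct, but it takes a noticeably longer route than the paper. The paper dispatches the lemma in two lines using only (P3): $OPT$ must select at least one vertex, and by (P3) any vertex selected by $OPT$ has degree three, so the tree contains a vertex with three neighbours and hence at least four vertices. This works uniformly for every input satisfying (P1)--(P7), with no need to split into the cases $n=2$ and $n=3$ or to invoke the maximality of $\sigma$. Your route---ruling out $n=3$ via (P7) and ruling out $n=2$ by exhibiting the star $K_{1,3}$ with ratio $5/2>3/2$---is valid, but the ratio-comparison step for $n=2$ is more machinery than needed and, as written, applies only to the maximising $\sigma$ rather than to an arbitrary input satisfying the seven properties. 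The same (P3) observation would have handled your $n=2$ case directly: on a single edge both vertices have degree one, so neither can be selected by $OPT$, contradicting the fact that $OPT$ must select at least one vertex.
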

\begin{proof}
	\ifnum \count10 > 0
	%
	%
	$OPT$
	(P3)
	%
	
	%
	\fi
	\ifnum \count11 > 0
	%
	%
	Clearly $OPT$ selects at least one vertex. 
	The degree of a vertex selected by $OPT$ is three by (P3), 
	and thus the number of all the vertices of a given tree is at least four. 
	\fi
\end{proof}
\ifnum \count10 > 0
%
%
{\sc BlockRoutine}
1
%
%
%
%
{\em \boldmath $B_{1}$-
${deg}(u_{1}) = 1$
%
%
3
$u_{1}$
%
%
{\em \boldmath $B_{2}$-
%
%
{\em \boldmath $B_{3}$-
%
%
{\em \boldmath $B_{4}$-
%
%

%
%

%
\fi
\ifnum \count11 > 0
%
%
By the definition of {\sc BlockRoutine}, 
this lemma leads to the fact that 
at least one vertex in a block is adjacent to a vertex in another block. 
Also, by (P7) in the previous section, 
the degree of a vertex is one or three, 
and hence blocks which a given graph can contain are classified into the following four categories 
(Fig.~\ref{fig:block}): 
%
%
A {\em \boldmath $B_{1}$-block} is a set consisting of one vertex $u_{1}$ 
such that ${deg}(u_{1}) = 1$. 
The following three blocks are sets consisting of three vertices $u_{1},u_{2}$ and $u_{3}$. 
Suppose that both $u_{1}$ and $u_{3}$ are adjacent to $u_{2}$. 
%
%
A {\em \boldmath $B_{2}$-block} is a set 
such that ${deg}(u_{1}) = 3$, ${deg}(u_{2}) = 3$ and ${deg}(u_{3}) = 1$, 
%
%
a {\em \boldmath $B_{3}$-block} is a set 
such that ${deg}(u_{1}) = 1$, ${deg}(u_{2}) = 3$ and ${deg}(u_{3}) = 1$, and 
%
%
a {\em \boldmath $B_{4}$-block} is a set 
such that ${deg}(u_{1}) = 3$, ${deg}(u_{2}) = 3$ and ${deg}(u_{3}) = 3$. 
%
%

%
%
\fi
\ifnum \count12 > 0
\begin{figure*}[ht]
	 \begin{center}
	  \includegraphics[width=150mm]{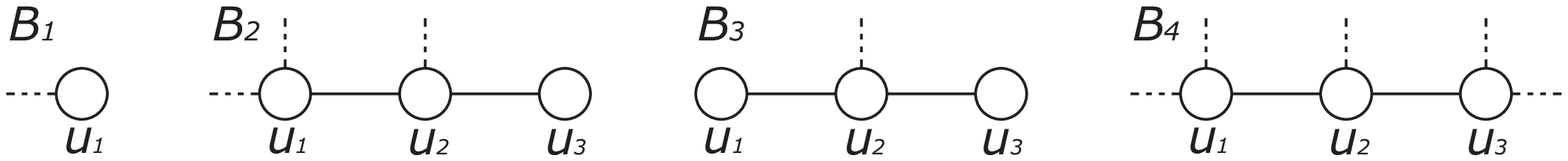}
	 \end{center}
	 \caption{Blocks}
	\label{fig:block}
\end{figure*}
\fi
\ifnum \count10 > 0
%
%
$OPT$
%
$B_{1},B_{2},B_{3},B_{4}$
$OPT$
%
%
%
%

%
$B_{1}$-
{\bf \boldmath $B_{1}^{0}$-
{\bf \boldmath $B_{1}^{1}$-
$B_{2}$-
{\bf \boldmath $B_{2}^{010}$-
{\bf \boldmath $B_{2}^{110}$-
$B_{3}$-
$OPT$
$B_{4}$-
{\bf \boldmath $B_{4}^{000}$-
{\bf \boldmath $B_{4}^{100}$-
{\bf \boldmath $B_{4}^{010}$-
{\bf \boldmath $B_{4}^{110}$-
{\bf \boldmath $B_{4}^{101}$-
{\bf \boldmath $B_{4}^{111}$-
\fi
\ifnum \count11 > 0
%
%
For each block, we discuss vertices selected by $OPT$ and 
classify $B_{1},B_{2},B_{3}$ and $B_{4}$ into the following eleven categories
(Fig.~\ref{fig:block_opt}). 
Then the next lemma shows that we only have to consider six categories. 
%
%
$u_{1},u_{2}$ and $u_{3}$ to classify $B_{i}$ are used 
in the same definitions as those of $u_{1},u_{2}$ and $u_{3}$ to define $B_{i}$. 
%

%
$B_{1}$-blocks are classified into two categories: 
A {\em \boldmath $B_{1}^{0}$-block} in which $OPT$ does not select any vertex, and 
a {\em \boldmath $B_{1}^{1}$-block} in which $OPT$ selects only $u_{1}$. 
$B_{2}$-blocks are classified into two categories: 
A {\em \boldmath $B_{2}^{010}$-block} in which $OPT$ selects only $u_{2}$, and 
a {\em \boldmath $B_{2}^{110}$-block} in which $OPT$ selects only $u_{1}$ and $u_{2}$. 
$B_{3}$-blocks are not classified. 
$OPT$ selects only $u_{2}$ in a $B_{3}$-block. 
$B_{4}$-blocks are classified into six categories: 
A {\em \boldmath $B_{4}^{000}$-block} in which $OPT$ selects no vertices, 
a {\em \boldmath $B_{4}^{100}$-block } in which $OPT$ selects only $u_{1}$, 
a {\em \boldmath $B_{4}^{010}$-block} in which $OPT$ selects only $u_{2}$, 
a {\em \boldmath $B_{4}^{110}$-block} in which $OPT$ selects only $u_{1}$ and $u_{2}$, 
a {\em \boldmath $B_{4}^{101}$-block} in which $OPT$ selects only $u_{1}$ and $u_{3}$, and 
a {\em \boldmath $B_{4}^{111}$-block} in which $OPT$ selects all the vertices. 
\fi
\ifnum \count12 > 0
\begin{figure*}[ht]
	 \begin{center}
	  \includegraphics[width=150mm]{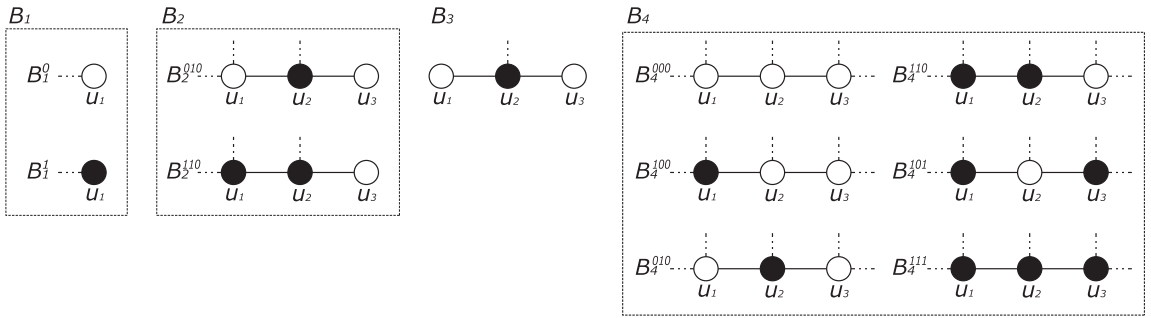}
	 \end{center}
	 \caption{
				\ifnum \count10 > 0
				%
				%
				%
				\fi
				\ifnum \count11 > 0
				%
				%
				Classified blocks. 
				$OPT$ selects black vertices. 
				\fi
			}
	\label{fig:block_opt}
\end{figure*}
\fi
\ifnum \count10 > 0
%
%

%

%
\fi
\ifnum \count11 > 0
%
%

%
\fi
%

%
\begin{LMA} \label{LMA:rand_up.spblock_opt}
	\ifnum \count10 > 0
	%
	%
	$B_{1}^{0},B_{2},B_{3},B_{4}^{000},B_{4}^{100},B_{4}^{010}$
	%
	
	%
	\fi
	\ifnum \count11 > 0
	%
	%
	An input can contain at most six kinds of blocks: 
	$B_{1}^{0}, B_{2}, B_{3}, B_{4}^{000}, B_{4}^{100}$ and $B_{4}^{010}$. 
	\fi
\end{LMA}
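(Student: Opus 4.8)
The plan is to show that, in any input satisfying properties (P1)--(P7), a block of each of the five subtypes \emph{not} listed in the statement is impossible, so that the eleven-fold classification collapses to the six claimed kinds. Concretely, I will discard $B_{1}^{1}$ using (P3), discard $B_{2}^{110}$, $B_{4}^{110}$ and $B_{4}^{111}$ using (P4), and discard $B_{4}^{101}$ (and again $B_{4}^{111}$) using (P5); what survives is exactly $B_{1}^{0}$, $B_{2}^{010}$ (which the statement writes simply as $B_{2}$, since it is the only surviving $B_{2}$-subtype), $B_{3}$, $B_{4}^{000}$, $B_{4}^{100}$ and $B_{4}^{010}$. Since $5+6=11$, this accounts for every subtype.

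The first step is an auxiliary observation that I will use repeatedly: whenever (P4) holds, no two adjacent vertices are both selected by $OPT$. Indeed, suppose $x$ and $y$ are adjacent and both lie in $D_{OPT}$. In the revelation order one of them, say $y$, arrives at the other, $x$; let $U$ be the set consisting of $y$ and all descendants of $y$. Then $y \in D_{OPT} \cap U$, so $y$ is dominated by a vertex of $U$, while $x \in D_{OPT}$ and $x \notin U$ (as $x$ is the earlier-revealed endpoint and hence not a descendant of $y$), so $x$ is dominated by a vertex not in $U$. Thus the edge $(x,y)$ is free, and (P4) forbids $OPT$ from selecting $x$ --- contradicting $x \in D_{OPT}$.

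Given this, the remaining cases are immediate. A $B_{1}^{1}$-block consists of a single vertex of degree one that $OPT$ selects, which contradicts (P3). In a $B_{2}^{110}$-, $B_{4}^{110}$- or $B_{4}^{111}$-block the vertices $u_{1}$ and $u_{2}$ are adjacent and both selected by $OPT$, contradicting the auxiliary observation. In a $B_{4}^{101}$- or $B_{4}^{111}$-block the vertices $u_{1},u_{2},u_{3}$ all have degree three, both $u_{1}$ and $u_{3}$ are adjacent to $u_{2}$, and $OPT$ selects $u_{1}$ and $u_{3}$; hence $(u_{1},u_{2},u_{3})$ is a good vertex triplet, contradicting (P5). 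Combining the cases yields the statement.

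I expect the argument to be short; the one place needing care is the auxiliary observation, where the orientation convention must be handled correctly --- the notation $(v,u)$ records that $u$ arrives at $v$, and (P4) forbids $OPT$ from selecting the earlier-revealed endpoint $v$, so one must apply it with $v$ equal to the parent endpoint $x$. Beyond that, the only bookkeeping is to check that the degree data defining each discarded subtype matches the hypothesis of the property invoked: a degree-one selected vertex for (P3) in $B_{1}^{1}$, an adjacent pair of selected vertices for (P4), and a path of three degree-three vertices with its two endpoints selected for the good-triplet condition in (P5).
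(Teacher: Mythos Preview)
Your proof is correct and follows essentially the same approach as the paper: rule out $B_{1}^{1}$ by (P3), rule out $B_{2}^{110}$, $B_{4}^{110}$, $B_{4}^{111}$ by (P4), and rule out $B_{4}^{101}$ by (P5). The paper's proof is terser---it simply asserts that the relevant blocks ``do not satisfy (P4)''---whereas you spell out the underlying auxiliary observation (two adjacent $OPT$-selected vertices force a free edge whose parent endpoint is selected), which is a helpful addition but not a different idea.
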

\begin{proof}
	\ifnum \count10 > 0
	%
	%
	$B_{1}^{1}$
	$B_{2}^{110}, B_{4}^{110}, B_{4}^{111}$
	$B_{4}^{101}$
	(P5)
	\fi
	\ifnum \count11 > 0
	%
	%
	A $B_{1}^{1}$-block does not satisfy (P3). 
	A $B_{2}^{110}$-block, a $B_{4}^{110}$-block and a $B_{4}^{111}$-block do not satisfy (P4). 
	A $B_{4}^{101}$-block composes a good vertex triplet, 
	which does not satisfy (P5). 
	\fi
\end{proof}
\ifnum \count10 > 0
%
%
$B_{1}$-
%
%
$B_{1,0}$
$B_{1,1}$
\fi
\ifnum \count11 > 0
%
%
A $B_{1}$-block consists of one vertex $v$ of degree one, and 
(4) in Lemma~\ref{LMA:rand_up.vtx_cost} shows that 
the expected cost of $RA$ for $v$ depends on the adjacent vertex $u$. 
Then, 
we classify $B_{1}$-blocks into the following two categories in terms of $RA$: 
A $B_{1,0}$-block of $v$ such that ${deg}_{v}(u) = 3$ and 
a $B_{1,1}$-block of $v$ such that ${deg}_{v}(u) \leq 2$. 
\fi
%

%
\begin{LMA} \label{LMA:rand_up.block_cost}
	\ifnum \count10 > 0
	%
	%
	%
	(i) $B_{1,0}$-
	(ii) $B_{1,1}$-
	(iii) $B_{2}$-
	(iv) $B_{3}$-
	(v) $B_{4}$-
	\fi
	\ifnum \count11 > 0
	%
	%
	Consider a block without $v_{1}$ and then 
	the expected costs of $RA$ are as follows: 
	(i) zero for a $B_{1,0}$-block, 
	(ii) $1/2$ for a $B_{1,1}$-block, 
	(iii) at most $5/2$ for a $B_{2}$-block, 
	(iv) at most $3/2$ for a $B_{3}$-block and 
	(v) at most three for a $B_{4}$-block. 
	\fi
\end{LMA}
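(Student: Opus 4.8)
The plan is to reduce the lemma to Lemma~\ref{LMA:rand_up.vtx_cost} by summing the per-vertex expected costs over the vertices of a block, using (P7) to force every degree to be $1$ or $3$ and using the hypothesis that the block avoids $v_{1}$, so that parts (2)--(4) of that lemma apply to each vertex of the block. First I would handle the four routine cases by direct substitution. A $B_{1,0}$-block is a single degree-one vertex $v$ whose neighbour $u$ satisfies ${deg}_{v}(u)=3$; since degrees do not decrease this gives ${deg}(u)=3$, so Lemma~\ref{LMA:rand_up.vtx_cost}(4-2) gives cost zero, establishing~(i). A $B_{1,1}$-block is a single degree-one vertex $v$ with ${deg}_{v}(u)\le 2$; by (P7) either ${deg}(u)=1$, where Lemma~\ref{LMA:rand_up.vtx_cost}(4-3) applies, or ${deg}(u)=3$ with ${deg}_{v}(u)\le 2$, where (4-1) applies, and in both cases the cost is $1/2$, establishing~(ii). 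In a $B_{2}$-block $\{u_{1},u_{2},u_{3}\}$ the two degree-three vertices $u_{1}$ and $u_{2}$ each contribute exactly $1$ by part~(2), while the degree-one vertex $u_{3}$, adjacent to the degree-three vertex $u_{2}$, contributes at most $1/2$ by (4-1) or (4-2); the total is at most $5/2$, establishing~(iii). A $B_{4}$-block consists of three degree-three vertices, each contributing exactly $1$ by part~(2), so its cost is $3$, establishing~(v).

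The only case needing an argument beyond bookkeeping is the $B_{3}$-block $\{u_{1},u_{2},u_{3}\}$, where $u_{2}$ has degree three and $u_{1},u_{3}$ have degree one: the crude bound $1/2+1/2+1=2$ is too weak, and we must reach $3/2$. Here I would argue about the revelation order. Since the block contains no $v_{1}$, each of $u_{1}$ and $u_{3}$ --- each having $u_{2}$ as its unique neighbour --- must \emph{arrive} at $u_{2}$, so both are revealed after $u_{2}$; and since $u_{2}\ne v_{1}$, the vertex $u_{2}$ must itself arrive at its third neighbour $w$, which lies outside the block (the block consists only of $u_{1},u_{2},u_{3}$), so $w$ precedes $u_{2}$ and hence precedes both $u_{1}$ and $u_{3}$. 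Therefore, when whichever of $u_{1},u_{3}$ is revealed first arrives at $u_{2}$, the degree of $u_{2}$ at that instant is exactly $2$ (its revealed neighbours being $w$ and that vertex), so Lemma~\ref{LMA:rand_up.vtx_cost}(4-1) charges it $1/2$; when the other one arrives, ${deg}(u_{2})$ has already become $3$, so part~(4-2) charges it $0$. Together with the cost $1$ for the degree-three vertex $u_{2}$ coming from part~(2), the block costs $1/2+0+1=3/2$, establishing~(iv).

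The main obstacle, modest as it is, is precisely this order analysis inside a $B_{3}$-block --- establishing that one of its two leaves necessarily arrives while $u_{2}$ still has degree two; all the remaining work is a mechanical application of Lemma~\ref{LMA:rand_up.vtx_cost} combined with (P7).
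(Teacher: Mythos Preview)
Your proof is correct and follows essentially the same approach as the paper: apply Lemma~\ref{LMA:rand_up.vtx_cost} vertex by vertex, with the $B_{3}$-block handled by an order analysis showing that the first-revealed leaf finds $u_{2}$ at degree~$2$ (cost $1/2$) and the second finds it at degree~$3$ (cost $0$). Your treatment of the $B_{3}$ case is in fact slightly more explicit than the paper's --- you spell out that the external neighbour $w$ of $u_{2}$ must precede $u_{2}$ (since $u_{2}\ne v_{1}$ must arrive at some already-revealed neighbour, and $u_{1},u_{3}$ come later), which is exactly what is needed to conclude ${deg}_{u_{1}}(u_{2})=2$ rather than merely $\le 2$.
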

\begin{proof}
	\ifnum \count10 > 0
	%
	%
	{\bf\boldmath $B_{1}$-block:}
	$B_{1}$-
	%
	$B_{1,0}$-
	${deg}_{u_{1}}(u) = 3$
	$B_{1,1}$-
	${deg}_{u_{1}}(u) \leq 2$
	%
	%
	
	%
	\noindent
	{\bf\boldmath $B_{2}$-
	$B_{2}$-
	${deg}(u_{1}) = {deg}(u_{2}) = 3$
	$u_{2}$
	$u_{1}$
	%
	$u_{3}$
	%
	$B_{2}$-
	\noindent
	{\bf\boldmath $B_{3}$-
	$B_{3}$-
	${deg}(u_{1}) = {deg}(u_{3}) = 1$
	%
	%
	$u_{2}$
	%
	${deg}_{u_{1}}(u_{2}) = 2$
	${deg}_{u_{3}}(u_{2}) = 3$
	%
	$u_{1}$
	%
	%
	%
	$B_{3}$-
	\noindent
	{\bf\boldmath $B_{4}$-
	%
	$B_{4}$-
	%
	$B_{4}$-
	\fi
	\ifnum \count11 > 0
	%
	%
	Suppose that a $B_{1}$-block consists of a vertex $u_{1}$, 
	and $u_{1}$ is adjacent to a vertex $u$. 
	By definition, 
	for a $B_{1,0}$-block, 
	${deg}_{u_{1}}(u) = 3$ holds 
	while for a $B_{1,1}$-block, 
	${deg}_{u_{1}}(u) \leq 2$ holds. 
	By (4) in Lemma~\ref{LMA:rand_up.vtx_cost}, 
	the expected costs for them are zero and $1/2$, respectively
	(Fig.~\ref{fig:block_cost}). 
	\noindent
	{\bf\boldmath $B_{2}$-block:}
	Suppose that a $B_{2}$-block consists of vertices $u_{1},u_{2}$ and $u_{3}$, 
	${deg}(u_{1}) = {deg}(u_{2}) = 3$, ${deg}(u_{3}) = 1$, and 
	$u_{2}$ is adjacent to both $u_{1}$ and $u_{3}$. 
	The expected cost for $u_{1}$ ($u_{2}$) is clearly at most one. 
	Also, 
	by (4) in Lemma~\ref{LMA:rand_up.vtx_cost}, 
	the expected cost for $u_{3}$ is at most $1/2$. 
	Thus, 
	the expected cost for a $B_{2}$-block is at most $5/2$. 
	\noindent
	{\bf\boldmath $B_{3}$-block:}
	Suppose that a $B_{3}$-block consists of vertices $u_{1},u_{2}$ and $u_{3}$, 
	${deg}(u_{1}) = {deg}(u_{3}) = 1$ and ${deg}(u_{2}) = 3$. 
	Without loss of generality, 
	suppose that $u_{1}$ is revealed before $u_{3}$. 
	Since this block does not contain $v_{1}$ by the assumption of the lemma, 
	$u_{2}$ is revealed before $u_{1}$. 
	Hence, 
	${deg}_{u_{1}}(u_{2}) = 2$
	and 
	${deg}_{u_{3}}(u_{2}) = 3$. 
	By (4-1) and (4-2) in Lemma~\ref{LMA:rand_up.vtx_cost}, 
	the expected costs for $u_{1}$ and $u_{3}$ are $1/2$ and zero, respectively. 
	Also, 
	the expected cost for $u_{2}$ is clearly at most one. 
	By the above argument, 
	the expected cost for a $B_{3}$-block is at most $3/2$. 
	\noindent
	{\bf\boldmath $B_{4}$-block:}
	A $B_{4}$-block contains three vertices. 
	The expected cost for each vertex is at most one and 
	thus the expected cost for a $B_{4}$-block is at most three. 
	\fi
\end{proof}

\ifnum \count12 > 0
\begin{figure*}[ht]
	 \begin{center}
	  \includegraphics[width=150mm]{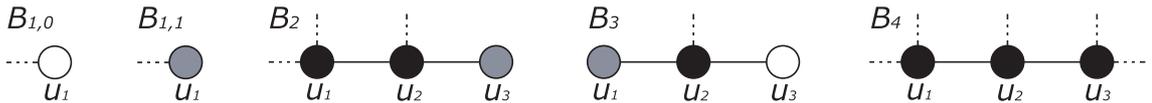}
	 \end{center}
	 \caption{
\ifnum \count10 > 0
%
%
%
\fi
\ifnum \count11 > 0
%
%
Costs for blocks without the first revealed vertex. 
Neither $A$ nor $B$ selects white vertices, 
both $A$ and $B$ select black vertices, and 
only the one of $A$ and $B$ selects gray vertices. 
%
\fi
			}
	\label{fig:block_cost}
\end{figure*}
\fi
\ifnum \count10 > 0
%
%
$B_{1}$-
\fi
\ifnum \count11 > 0
%
%
Next, 
we evaluate the expected cost of $RA$ for each block with $v_{1}$. 
Since the number of all the vertices of a given graph is at least four, 
no $B_{1}$-block contains $v_{1}$ by the definition of {\sc BlockRoutine}. 
\fi
%

%
\begin{LMA} \label{LMA:rand_up.block_cost_init}
	\ifnum \count10 > 0
	%
	%
	%
	(i) $B_{2}$-
	(ii) $B_{3}$-
	(iii) $B_{4}$-
	\fi
	\ifnum \count11 > 0
	%
	%
	Consider a block with $v_{1}$ 
	and then the expected costs of $RA$ are as follows: 
	(i) at most three for a $B_{2}$-block, 
	(ii) at most $5/2$ for a $B_{3}$-block and 
	(iii) at most three for a $B_{4}$-block. 
	\fi
\end{LMA}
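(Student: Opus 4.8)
The plan is to bound the expected cost of $RA$ for each vertex of the block individually and add the bounds, just as in the proof of Lemma~\ref{LMA:rand_up.block_cost}, but now using the fact that $v_{1} \in D_{A} \cap D_{B}$ by Lemma~\ref{LMA:rand_up.ppab}~(1), so that the expected cost of $RA$ for $v_{1}$ equals one (rather than possibly being smaller). For every vertex $v \ne v_{1}$ of the block, Lemma~\ref{LMA:rand_up.vtx_cost} gives an expected cost of at most one: it is one when ${deg}(v) \geq 3$, it is $1/2$ when ${deg}(v) = 2$, and it is at most $1/2$ when ${deg}(v) = 1$. So the whole lemma is a direct case analysis on the type of the block and on which of its vertices is $v_{1}$.

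For a $B_{2}$-block $\{ u_{1}, u_{2}, u_{3} \}$ with ${deg}(u_{1}) = {deg}(u_{2}) = 3$ and ${deg}(u_{3}) = 1$: if $v_{1} = u_{3}$, then each of the three vertices contributes at most one, giving at most three; if $v_{1} \in \{ u_{1}, u_{2} \}$, then $v_{1}$ and the remaining degree-three vertex each contribute at most one while $u_{3}$, a leaf adjacent to the degree-three vertex $u_{2}$, contributes at most $1/2$ by Lemma~\ref{LMA:rand_up.vtx_cost}~(4), giving at most $5/2 \leq 3$. For a $B_{4}$-block all three vertices have degree three, so each contributes at most one regardless of which one is $v_{1}$, giving at most three. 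These two cases are immediate.

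The only case needing slightly more care is a $B_{3}$-block $\{ u_{1}, u_{2}, u_{3} \}$ with ${deg}(u_{1}) = {deg}(u_{3}) = 1$ and ${deg}(u_{2}) = 3$. If $v_{1} = u_{2}$, then each leaf $u_{1}, u_{3}$ is adjacent to the degree-three vertex $u_{2}$ and hence contributes at most $1/2$ by Lemma~\ref{LMA:rand_up.vtx_cost}~(4), while $u_{2}$ contributes one, for a total of at most $2 \leq 5/2$. If $v_{1}$ is a leaf of the block, say $v_{1} = u_{1}$ (the labels $u_{1}, u_{3}$ are symmetric), then $v_{1}$ contributes one, $u_{2}$ of degree three contributes at most one, and the other leaf $u_{3}$, being adjacent to the degree-three vertex $u_{2}$, contributes at most $1/2$ by Lemma~\ref{LMA:rand_up.vtx_cost}~(4-1) and (4-2), for a total of at most $5/2$. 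Combining the subcases gives the bound $5/2$ for a $B_{3}$-block containing $v_{1}$.

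The main --- and essentially only --- subtlety is bookkeeping: one must run through each possible identity of $v_{1}$ within each block type, keeping in mind that designating a vertex as $v_{1}$ only raises its per-vertex cost to exactly one, which in every block type is absorbed by the slack available relative to the claimed bound. No new input transformation or structural result is needed; this lemma follows directly from Lemmas~\ref{LMA:rand_up.ppab} and~\ref{LMA:rand_up.vtx_cost}.
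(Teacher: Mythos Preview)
Your proof is correct and follows essentially the same approach as the paper: bound each vertex of the block individually via Lemma~\ref{LMA:rand_up.vtx_cost} and sum. The only minor difference is that for the $B_{3}$-block the paper invokes the definition of {\sc BlockRoutine} to determine ${deg}_{u_{3}}(u_{2})$ exactly and then applies case~(4-1), whereas you simply observe that every subcase of Lemma~\ref{LMA:rand_up.vtx_cost}~(4) gives at most $1/2$; your version is slightly more economical but not a different argument.
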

\begin{proof}
	\ifnum \count10 > 0
	%
	%
	\noindent
	{\bf\boldmath $B_{2}$-
	$B_{2}$
	1
	%
	%
	%
	
	%
	\noindent
	{\bf\boldmath $B_{3}$-
	$B_{3}$-
	${deg}(u_{1}) = {deg}(u_{3}) = 1$
	%
	%
	%
	$u_{1}$
	%
	%
	${deg}_{u_{3}}(u_{2}) = 2$
	%
	$u_{3}$
	%
	$B_{3}$-
	$u_{2} = v_{1}$
	$u_{2}$
	%
	$u_{2}$
	$u_{3}$
	%
	${deg}_{u_{1}}(u_{2}) = 1$
	${deg}_{u_{3}}(u_{2}) = 2$
	%
	$u_{1}$
	%
	$B_{3}$-
	%
	%
	$B_{3}$-
	\fi
	\ifnum \count11 > 0
	%
	%
	\noindent
	{\bf\boldmath $B_{2}$-block, $B_{4}$-block:}
	$B_{2}$-blocks and $B_{4}$-blocks consist of three vertices each. 
	The expected cost for any vertex is at most one. 
	Hence, 
	the expected costs for these blocks are at most three each
	(Fig.~\ref{fig:block_cost_init}). 
	\noindent
	{\bf\boldmath $B_{3}$-block:}
	Suppose that a $B_{3}$-block consists of three vertices $u_{1},u_{2}$ and $u_{3}$, 
	${deg}(u_{1}) = {deg}(u_{3}) = 1$ and ${deg}(u_{2}) = 3$. 
	Without loss of generality, 
	suppose that $u_{1}$ is revealed before $u_{3}$. 
	First, we consider the case in which $u_{1} = v_{1}$. 
	Clearly, 
	the expected costs for $u_{1}$ and $u_{2}$ are at most one each. 
	Since ${deg}_{u_{3}}(u_{2}) = 2$ 
	by the definition of {\sc BlockRoutine}, 
	the expected cost for $u_{3}$ is $1/2$ 
	by (4-1) in Lemma~\ref{LMA:rand_up.vtx_cost}. 
	By the above argument, 
	the expected cost for a $B_{3}$-block is at most $5/2$ in this case. 
	Next, we consider the case in which $u_{2} = v_{1}$. 
	The expected cost for $u_{2}$ is clearly at most one. 
	The vertex adjacent to $u_{2}$ which is neither $u_{1}$ nor $u_{3}$ is revealed after $u_{3}$ 
	by the definition of {\sc BlockRoutine}. 
	Thus, 
	${deg}_{u_{1}}(u_{2}) = 1$
	and 
	${deg}_{u_{3}}(u_{2}) = 2$. 
	Hence, 
	the expected costs for $u_{1}$ and $u_{3}$ are $1/2$ each 
	by (4-1) in Lemma~\ref{LMA:rand_up.vtx_cost}. 
	By summing up the above costs, 
	the expected cost for a $B_{3}$-block is two in this case. 
	Therefore, 
	the expected cost for a $B_{3}$-block is at most $5/2$. 
	\fi
\end{proof}

\ifnum \count12 > 0
\begin{figure*}[ht]
	 \begin{center}
	  \includegraphics[width=100mm]{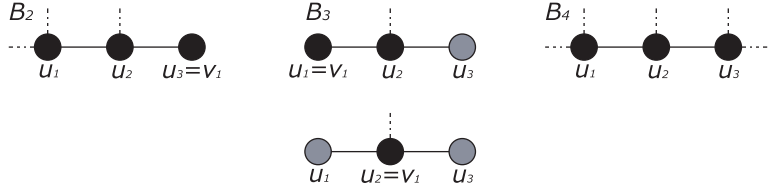}
	 \end{center}
	 \caption{
				\ifnum \count10 > 0
				%
				%
				%
				\fi
				\ifnum \count11 > 0
				%
				%
				Costs for blocks with $v_{1}$. 
				Both $A$ and $B$ select black vertices and 
				only the one of $A$ and $B$ selects gray vertices. 
				%
				\fi
			}
	\label{fig:block_cost_init}
\end{figure*}
\fi
\ifnum \count10 > 0
%
%
$B_{1,0},B_{1,1},B_{2},B_{3},B_{4}^{000},B_{4}^{100},B_{4}^{010}$
$b_{1,0}, b_{1,1}, b_{2}, b_{3}, b_{4}^{000}, b_{4}^{100}, b_{4}^{010}$
%
$b_{4} = b_{4}^{000} + b_{4}^{100} + b_{4}^{010}$
%

%
\fi
\ifnum \count11 > 0
%
%
Let 
$b_{1,0}, b_{1,1}, b_{2}, b_{3},b_{4}^{000}, b_{4}^{100}$ and $b_{4}^{010}$
denote the numbers of 
$B_{1,0}$-blocks, $B_{1,1}$-blocks, $B_{2}$-blocks, $B_{3}$-blocks, $B_{4}^{000}$-blocks, $B_{4}^{100}$-blocks and  $B_{4}^{010}$-blocks, respectively. 
We define 
$b_{4} = b_{4}^{000} + b_{4}^{100} + b_{4}^{010}$. 
\fi
%
\begin{LMA} \label{LMA:rand_up.b1cnd}
	\ifnum \count10 > 0
	%
	%
	\begin{equation} \label{LMA:rand_up.b1cnd.b10}
		b_{1, 0} \leq b_{2} + b_{4}^{100} + b_{4}^{010}
	\end{equation}
	\begin{equation} \label{LMA:rand_up.b1cnd.b11}
		b_{1, 1} \leq b_{4}^{100}
	\end{equation}
	%
	
	%
	\fi
	\ifnum \count11 > 0
	%
	%
	If the number of all the vertices of a given graph is at least five, 
	\begin{equation} \label{LMA:rand_up.b1cnd.b10}
		b_{1, 0} \leq b_{2} + b_{4}^{100} + b_{4}^{010}
	\end{equation}
	and 
	\begin{equation} \label{LMA:rand_up.b1cnd.b11}
		b_{1, 1} \leq b_{4}^{100}. 
	\end{equation}
	\fi
\end{LMA}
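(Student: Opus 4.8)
The plan is to construct a charging map $\varphi$ that assigns to each $B_{1,0}$-block and each $B_{1,1}$-block a block of type $B_{2}$, $B_{4}^{100}$ or $B_{4}^{010}$, and then to count, for a fixed target block, how many source blocks of each of the two types can be sent to it.

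First I would record the basic structure. A $B_{1,0}$- or $B_{1,1}$-block is a single leaf $\ell$; since the tree has at least five vertices (in particular more than two), $\ell\ne v_{1}$, so $\ell$ arrives at its unique neighbour $p$. By (P7), ${deg}(p)\in\{1,3\}$, and ${deg}(p)=1$ is impossible (the tree would be a single edge), so ${deg}(p)=3$ and $p$ lies in a three-vertex block $\beta_{p}$. Since $\ell$ is a leaf, (P3) forbids $OPT$ from selecting $\ell$, so $OPT$ must select $p$ to dominate $\ell$; hence $p$ is a degree-three, $OPT$-selected vertex of $\beta_{p}$. Running through the six surviving block types of Lemma~\ref{LMA:rand_up.spblock_opt}: $B_{1}^{0}$ and $B_{4}^{000}$ contain no $OPT$-selected vertex, and in a $B_{3}$-block the only $OPT$-selected vertex is the centre, whose neighbours are the two in-block leaves and (as the tree has more than four vertices) one further degree-three vertex — none of which can be the out-of-block leaf $\ell$. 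Therefore $\beta_{p}\in\{B_{2},B_{4}^{100},B_{4}^{010}\}$, $p$ is the unique $OPT$-selected vertex of $\beta_{p}$, and $\ell$ is an out-of-block neighbour of $p$. Set $\varphi(\ell)=\beta_{p}$.

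The key refinement for $B_{1,1}$-blocks is: if $p$ is the \emph{centre} of $\beta_{p}$, then every out-of-block neighbour of $p$ that itself forms a $B_{1}$-block is the last-revealed neighbour of $p$, hence a $B_{1,0}$-block. This is a short {\sc BlockRoutine} argument. Recalling that the index of a vertex equals its position in the revelation order, and that a vertex forming its own $B_{1}$-block is still unassigned at the time $\beta_{p}$ is built: if $p$ is the centre of $\beta_{p}$ then $p$ is either $v_{i_{1}}$ (and the block absorbs the two smallest-index unassigned neighbours of $p$) or $v_{i_{2}}$ (and then $v_{i_{1}},v_{i_{3}}$ are two of $p$'s neighbours, both of smaller index than the remaining one); in both cases the remaining neighbour of $p$, if it is unassigned at that point, has the largest index among $p$'s three neighbours, i.e. is revealed last. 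Since in both a $B_{2}$-block $(=B_{2}^{010})$ and a $B_{4}^{010}$-block the $OPT$-selected vertex is the centre, it follows that a $B_{1,1}$-block is never charged to a $B_{2}$- or a $B_{4}^{010}$-block; that is, $\varphi$ sends $B_{1,1}$-blocks only to $B_{4}^{100}$-blocks, with $p=u_{1}$ (the $OPT$-selected endpoint).

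It remains to bound the preimages. If $\beta$ is a $B_{2}$- or $B_{4}^{010}$-block, its $OPT$-selected vertex is the centre, which has exactly one neighbour outside $\beta$; as every preimage of $\beta$ must be an out-of-block neighbour of this vertex, $\beta$ receives at most one preimage in all, a fortiori at most one $B_{1,0}$-block. If $\beta$ is a $B_{4}^{100}$-block, then $p=u_{1}$ has two neighbours outside $\beta$. Here one shows that $u_{1}$'s first-revealed neighbour is $u_{2}$: when $u_{1}\ne v_{1}$ this holds because $u_{1}$ arrived at one of its neighbours, which cannot be a $B_{1}$-leaf (such a leaf arrives at $u_{1}$, not conversely); when $u_{1}=v_{1}$ it holds because {\sc BlockRoutine} takes $u_{2}$ as $v_{i_{2}}$, the smallest-index neighbour of $v_{1}$. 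Consequently the two out-of-block neighbours of $u_{1}$ are exactly its second- and third-revealed neighbours, so at most one of them has ${deg}_{\cdot}(u_{1})\le 2$ (a $B_{1,1}$-block) and at most one has ${deg}_{\cdot}(u_{1})=3$ (a $B_{1,0}$-block). Hence every $B_{4}^{100}$-block receives at most one $B_{1,1}$-block and at most one $B_{1,0}$-block. Summing over targets: the number of $B_{1,1}$-blocks is at most $b_{4}^{100}$, which is (\ref{LMA:rand_up.b1cnd.b11}), and the number of $B_{1,0}$-blocks is at most $b_{2}+b_{4}^{010}+b_{4}^{100}$, which is (\ref{LMA:rand_up.b1cnd.b10}).

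The main obstacle is reconciling three orderings at once: the greedy, index-driven way {\sc BlockRoutine} forms blocks; the revelation order of vertices, which coincides with the index order and governs the value ${deg}_{\ell}(p)$ and hence the $B_{1,0}$/$B_{1,1}$ dichotomy; and $OPT$'s choices constrained by (P1)--(P7). The two delicate points are (i) proving that the centre of a three-vertex block has no out-of-block $B_{1}$-leaf that arrives while the centre still has degree at most two, and (ii) the $B_{4}^{100}$ count, where one must exclude $u_{1}$ having two $B_{1,1}$-leaves attached — each needing a case split on whether the relevant vertex is $v_{1}$ and on which of $v_{i_{1}},v_{i_{2}},v_{i_{3}}$ it plays in its block.
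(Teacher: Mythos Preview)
Your argument is correct and follows essentially the same charging scheme as the paper: send each $B_1$-leaf to the block of its (necessarily $OPT$-selected, degree-three) neighbour, rule out $B_1^0$, $B_3$ and $B_4^{000}$ as targets, and then bound preimages by analysing the {\sc BlockRoutine} order around the selected vertex. The one sentence to tighten is the $B_4^{100}$ case with $u_1\ne v_1$: the claim that $u_2$ is the first-revealed neighbour of $u_1$ is not true in general (if $u_1$ is revealed before $u_2$, then $u_1$ arrived at an out-of-block neighbour), but it \emph{is} true in the only nontrivial situation, namely when both out-of-block neighbours are $B_1$-leaves---then $u_1$ must have arrived at $u_2$, and your second/third-revealed split goes through; the paper handles this sub-case by the same {\sc BlockRoutine} reasoning.
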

\begin{proof}
	\ifnum \count10 > 0
	%
	%
	%
	%
	$B_{1}$
	%
	$B_{2}^{010}, B_{4}^{000}, B_{4}^{100}, B_{4}^{010}$
	%
	%
	3
	$u_{1}$
	\noindent
	{\bf \boldmath $B_{2}^{010}$-
	${deg}(u_{1}) = {deg}(u_{2}) = 3$
	%
	$u (\ne u_{2})$
	$OPT$
	$u$
	$u$
	(P3)
	$u$
	$u$
	%
	%
	%
	$B_{1}$-
	$u$
	%
	%
	$u' (\ne u_{1}, u_{3})$
	{\sc BlockRoutine}
	$u'$
	$u_{1}$
	%
	$u'$
	${deg}_{u'}(u_{2}) = 3$
	%
	$u'$
	\noindent
	{\bf \boldmath $B_{4}^{000}$-
	%
	$OPT$
	%
	(P3)
	$u_{1}$
	$B_{2}^{010}$-
	$u''$
	%
	$B_{4}^{000}$-
	\noindent
	{\bf \boldmath $B_{4}^{100}$-
	$OPT$
	$u_{3}$
	$B_{4}^{000}$-
	%
	%
	$u_{2}$
	$B_{2}^{010}$-
	%
	%
	$u_{1}$
	$\hat{u}$
	%
	$u_{1}$
	$u_{1} = v_{1}$
	{\sc BlockRoutine}
	${deg}_{\hat{u}}(u_{1}) = 2$
	${deg}_{\hat{u}'}(u_{1}) = 3$
	%
	$B_{1,0}$
	$u_{1}$
	%
	%
	$u_{1} \ne v_{1}$
	$\hat{u}$
	${deg}_{\hat{u}'}(u_{1}) = 3$
	%
	$u_{2}$
	${deg}_{\hat{u}}(u_{1}) = 2$
	${deg}_{\hat{u}'}(u_{1}) = 3$
	%
	%
	$B_{4}^{100}$
	%
	
	%
	\noindent
	{\bf \boldmath $B_{4}^{010}$-
	%
	$OPT$
	%
	$u_{1}$
	%
	$B_{1,0}$-
	%
	
	%
	$B_{2}^{010}, B_{4}^{100}, B_{4}^{010}$
	%
	$B_{4}^{100}$
	$B_{1,1}$-
	%
	
	%
	\fi
	\ifnum \count11 > 0
	%
	%
	We discuss whether vertices in each block are adjacent to the vertex in a $B_{1}$-block. 
	If the vertex in a $B_{3}$-block is adjacent to the vertex in another $B_{1}$-block, 
	the number of vertices of a given tree is at most four, 
	which contradicts the assumption of this lemma. 
	This discussion is also true with a $B_{1}$-block. 
	In what follows, 
	we consider 
	$B_{2}^{010}$-blocks, $B_{4}^{000}$-blocks, $B_{4}^{100}$-blocks and $B_{4}^{010}$-blocks. 
	Since these blocks consist of three vertices each, 
	suppose that they are $u_{1},u_{2}$ and $u_{3}$, and 
	both $u_{1}$ and $u_{3}$ are adjacent to $u_{2}$. 
	\noindent
	{\bf \boldmath $B_{2}^{010}$-block:}
	Suppose that ${deg}(u_{1}) = {deg}(u_{2}) = 3$ and ${deg}(u_{3}) = 1$. 
	$OPT$ selects $u_{2}$ by definition. 
	Let $u (\ne u_{2})$ be a vertex adjacent to $u_{1}$. 
	$OPT$ must select $u$ or a vertex adjacent to $u$ to dominate $u$. 
	If $OPT$ selects $u$, 
	the degree of $u$ is three by (P3). 
	Otherwise, 
	it is at least two because there exists an adjacent vertex to dominate $u$
	(i.e., by the assumption of an input, it is exactly three). 
	Since the degree of the vertex in a $B_{1}$-block is one, 
	$u$ is not contained in a $B_{1}$-block. 
	Next, 
	let $u' (\ne u_{1}, u_{3})$ be the vertex adjacent to $u_{2}$. 
	By the definition of {\sc BlockRoutine}, 
	$u'$ is revealed either before $u_{2}$ or 
	after both $u_{1}$ and $u_{3}$. 
	In the former case, 
	$u'$ is contained in a block of three vertices, 
	which is not a $B_{1}$-block. 
	In the latter case, 
	${deg}_{u'}(u_{2}) = 3$. 
	Therefore, 
	$u'$ can be contained only in a $B_{1,0}$-block. 
	\noindent
	{\bf \boldmath $B_{4}^{000}$-block:}
	$OPT$ does not select any vertex in this block by definition, 
	and hence 
	each vertex is adjacent to a vertex selected by $OPT$ in another block. 
	The degree of the dominating vertex is three by (P3). 
	Both $u_{1}$ and $u_{3}$ can be adjacent to a vertex $u''$ which is not selected by $OPT$. 
	In a similar way to $u$ in the case of a $B_{2}^{010}$-block, 
	the degree of $u''$ is three. 
	Thus, 
	no vertices in a $B_{4}^{000}$-block are adjacent to the vertex in a $B_{1}$-block. 
	\noindent
	{\bf \boldmath $B_{4}^{100}$-block:}
	Suppose that $OPT$ selects $u_{1}$. 
	The degree of a vertex adjacent to $u_{3}$ in this block is three 
	similarly to $u_{3}$ in a $B_{4}^{000}$-block. 
	The degree of a vertex adjacent to $u_{2}$ is also three 
	similarly to $u$ in the case of a $B_{2}^{010}$-block. 
	We consider the two vertices $\hat{u}$ and $\hat{u}'$ which are adjacent to $u_{1}$, in which  
	$\hat{u}$ is revealed before $\hat{u}'$. 
	First, 
	suppose that 
	$u_{1}$ is revealed before $u_{2}$. 
	If $u_{1} = v_{1}$, 
	then ${deg}_{\hat{u}}(u_{1}) = 2$
	and 
	${deg}_{\hat{u}'}(u_{1}) = 3$ 
	by the definition of {\sc BlockRoutine}. 
	Hence, 
	$\hat{u}$ and $\hat{u}'$ can be the vertices in a $B_{1,1}$-block and a $B_{1,0}$-block, respectively, 
	by the definitions of $B_{1,0}$-blocks and $B_{1,1}$-blocks. 
	If $u_{1} \ne v_{1}$, 
	then $\hat{u}$ is revealed before $u_{1}$ 
	and ${deg}_{\hat{u}'}(u_{1}) = 3$. 
	Next, 
	suppose that $u_{2}$ is revealed before $u_{1}$. 
	Then, 
	${deg}_{\hat{u}}(u_{1}) = 2$
	and 
	${deg}_{\hat{u}'}(u_{1}) = 3$. 
	By the above argument, 
	at most one vertex in a $B_{4}^{100}$-block can be adjacent to the vertex in a $B_{1,0}$-block (a $B_{1,1}$-block). 
	\noindent
	{\bf \boldmath $B_{4}^{010}$-block:}
	$OPT$ selects $u_{2}$ by definition. 
	In a similar way to the above argument, 
	the degree of a vertex adjacent to either $u_{1}$ or $u_{3}$ is three. 
	Also, 
	the vertex adjacent to $u_{2}$ in another block can be contained in a $B_{1,0}$-block. 
	Therefore, 
	at most one vertex in a $B_{2}^{010}$-block, a $B_{4}^{100}$-block, or a $B_{4}^{010}$-block can be adjacent to the vertex in a $B_{1,0}$-block. 
	That is, 
	Eq.~(\ref{LMA:rand_up.b1cnd.b10}) holds. 
	At most one vertex in a $B_{4}^{100}$-block can be adjacent to the vertex in a $B_{1,1}$-block. 
	That is, 
	Eq.~(\ref{LMA:rand_up.b1cnd.b11}) holds. 
	\fi
\end{proof}
\ifnum \count10 > 0
%
%

%

%
\fi
\ifnum \count11 > 0
%
%

%
\fi
\begin{LMA} \label{LMA:rand_up.leafcnd}
	\ifnum \count10 > 0
	%
	%
	$b_{1,0} + b_{1,1} + b_{3} = b_{2} + 3 b_{4} + 2$
	%
	
	%
	\fi
	\ifnum \count11 > 0
	%
	%
	$b_{1,0} + b_{1,1} + b_{3} = b_{2} + 3 b_{4} + 2$. 
	\fi
\end{LMA}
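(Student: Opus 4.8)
The plan is a double-counting argument driven by the degree structure forced by (P7). First I would invoke (P7) to conclude that every vertex of the given tree has degree exactly one or exactly three; write $L$ for the number of leaves (degree-one vertices) and $I$ for the number of degree-three vertices, so the total number of vertices is $n = L + I$. Since the input graph is a tree it has $n-1$ edges, and the handshake identity $\sum_{v} \deg(v) = 2(n-1)$ together with $\sum_{v} \deg(v) = L + 3I$ yields $L + 3I = 2(L+I) - 2$, i.e.\ $L = I + 2$. This is exactly where the ``$+2$'' on the right-hand side of the claimed identity originates.

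Next I would re-express $L$ and $I$ through the block counts. Every vertex lies in exactly one block produced by {\sc BlockRoutine}, and each block is a $B_1$-, $B_2$-, $B_3$-, or $B_4$-block. A $B_{1,0}$- or $B_{1,1}$-block contributes one leaf and no degree-three vertex; a $B_2$-block contributes one leaf (namely $u_3$) and two degree-three vertices ($u_1,u_2$); a $B_3$-block contributes two leaves ($u_1,u_3$) and one degree-three vertex ($u_2$); and a $B_4$-block contributes three degree-three vertices and no leaf. Using Lemma~\ref{LMA:rand_up.spblock_opt} to identify $b_4 = b_4^{000}+b_4^{100}+b_4^{010}$ with the total number of $B_4$-blocks, this gives $L = b_{1,0}+b_{1,1}+b_2+2b_3$ and $I = 2b_2+b_3+3b_4$.

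Finally I would substitute these two expressions into $L = I+2$; cancelling the common term $b_2+b_3$ from both sides leaves precisely $b_{1,0}+b_{1,1}+b_3 = b_2+3b_4+2$. I do not expect a genuine obstacle here: the only points that need a word of care are that the blocks really do partition the vertex set and that (P7) rules out vertices of degree $0$ or $\ge 4$ (so a three-vertex block always has exactly one internal vertex, of degree three) — both facts are already established in the material preceding the lemma (and $n \ge 4$ by Lemma~\ref{LMA:rand_up.n4}). It is also worth noting that for this lemma the refinement of $B_1$-blocks into $B_{1,0}$ and $B_{1,1}$ plays no role, since both sub-types contribute a single leaf; only their sum $b_{1,0}+b_{1,1}$ matters.
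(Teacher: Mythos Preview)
Your argument is correct and in fact cleaner than the paper's. You apply the handshake identity directly to the original tree: with $L$ leaves and $I$ degree-three vertices, $L+3I=2(L+I)-2$ gives $L=I+2$, and tallying $L$ and $I$ per block type yields the stated equality after a one-line cancellation.

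The paper instead works at the level of the block graph. It introduces $i_2,o_2,i_4,o_4$ counting edges from $B_2$- and $B_4$-blocks to ``internal'' versus ``leaf'' blocks, observes that each $B_2$-block has three outgoing edges and each $B_4$-block five, notes that each $B_1$- or $B_3$-block contributes exactly one such edge, and then applies the tree identity $(\text{internal vertices})=(\text{internal edges})+1$ to the contracted block tree to eliminate $i_2+i_4$. Combining these four relations produces the same identity.

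Both routes ultimately rest on the tree edge count; yours avoids the auxiliary edge-classification and the block-contraction step entirely. The only small remark is that your citation of Lemma~\ref{LMA:rand_up.spblock_opt} is not really needed here: the identity $b_4=b_4^{000}+b_4^{100}+b_4^{010}$ is taken as the definition of $b_4$ in the text, and your counting argument uses only the degree pattern of a $B_4$-block, not which subtype it is.
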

\begin{proof}
	\ifnum \count10 > 0
	%
	%
	%
	$B_{1}$-
	%
	%
	%
	$B_{1}$-
	%
	%
	\begin{equation} \label{LMA:rand_up.leafcnd.eq1}
		b_{1, 0} + b_{1, 1} + b_{3} = o_{2} + o_{4}
	\end{equation}
	%
	%
	1
	\begin{equation} \label{LMA:rand_up.leafcnd.eq2}
		3 b_{2} = i_{2} + o_{2}
	\end{equation}
	%
	1
	\begin{equation} \label{LMA:rand_up.leafcnd.eq3}
		5 b_{4} = i_{4} + o_{4}
	\end{equation}
	%
	
	%
	%
	$x = y + 1$
	%
	%
	\begin{equation} \label{LMA:rand_up.leafcnd.eq4}
		b_{2} + b_{4} = (i_{2} + i_{4}) / 2 + 1
	\end{equation}
	%
	\begin{eqnarray*}
		3 b_{2} + 5 b_{4} 
			&=& i_{2} + o_{2} + i_{4} + o_{4}\\
			&=& b_{1, 0} + b_{1, 1} + b_{3} + 2 b_{2} + 2 b_{4} - 2 \mbox{\hspace*{10mm} (
	\end{eqnarray*}
	%
	
	%
	\fi
	\ifnum \count11 > 0
	%
	%
	Let $i_{2}$ denote the number of edges 
	each of which is an edge between a vertex in a $B_{2}$-block and a vertex in either another $B_{2}$-block or a $B_{4}$-block. 
	Let $o_{2}$ denote the number of edges 
	each of which is an edge between a vertex in a $B_{2}$-block and a vertex in either a $B_{1}$-block or a $B_{3}$-block. 
	Similarly, 
	let $i_{4}$ denote the number of edges 
	each of which is an edge between a vertex in a $B_{4}$-block and a vertex in either another $B_{4}$-block or a $B_{2}$-block. 
	Let $o_{4}$ denote the number of edges 
	each of which is an edge between a vertex in a $B_{4}$-block and a vertex in either a $B_{1}$-block or a $B_{3}$-block. 
	By definition, 
	\begin{equation} \label{LMA:rand_up.leafcnd.eq1}
		b_{1, 0} + b_{1, 1} + b_{3} = o_{2} + o_{4}. 
	\end{equation}
	Since there exist three edges each of which is between a vertex in a $B_{2}$-block and one in another block, 
	\begin{equation} \label{LMA:rand_up.leafcnd.eq2}
		3 b_{2} = i_{2} + o_{2}. 
	\end{equation}
	Since there exist five edges each of which is between a vertex in a $B_{4}$-block and one in another block, 
	\begin{equation} \label{LMA:rand_up.leafcnd.eq3}
		5 b_{4} = i_{4} + o_{4}. 
	\end{equation}
	Let us call a vertex of degree at least two 
	an {\em internal vertex}. 
	For a tree, 
	$x = y + 1$ holds, 
	in which 
	$x$ is the number of internal vertices and 
	$y$ is the number of edges between internal vertices. 
	If we regard a block as one vertex, 
	then the number of internal vertices is $b_{2} + b_{4}$ and 
	the number of edges between internal vertices is $(i_{2} + i_{4}) / 2$. 
	Thus, 
	\begin{equation} \label{LMA:rand_up.leafcnd.eq4}
		b_{2} + b_{4} = (i_{2} + i_{4}) / 2 + 1. 
	\end{equation}
	By summing  up Eqs.~(\ref{LMA:rand_up.leafcnd.eq2}) and (\ref{LMA:rand_up.leafcnd.eq3}), 
	\begin{eqnarray*}
		3 b_{2} + 5 b_{4} 
			&=& i_{2} + o_{2} + i_{4} + o_{4}\\
			&=& b_{1, 0} + b_{1, 1} + b_{3} + 2 b_{2} + 2 b_{4} - 2.  \mbox{\hspace*{10mm} (by Eqs.~(\ref{LMA:rand_up.leafcnd.eq1}) and (\ref{LMA:rand_up.leafcnd.eq4}))}
	\end{eqnarray*}
	We complete the proof by rearranging this equality. 
	\fi
\end{proof}
\ifnum \count10 > 0
%
%

%

%
\fi
\ifnum \count11 > 0
%
%
%
\fi
\begin{theorem}\label{thm:rand_up}
	\ifnum \count10 > 0
	%
	%
	$RA$
	\fi
	\ifnum \count11 > 0
	%
	%
	The competitive ratio of $RA$ is at most $5/2$. 
	\fi
\end{theorem}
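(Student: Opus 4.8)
The plan is to reduce to a structured input and then run the block decomposition. By Lemma~\ref{LMA:rand_up.basic_p1_7}, the supremum of $\frac{\mathbb{E}[C_{RA}(\sigma)]}{C_{OPT}(\sigma)}$ is attained on some input satisfying (P1)--(P7), so it suffices to bound the ratio for such inputs; in particular every vertex then has degree one or three, and by Lemma~\ref{LMA:rand_up.n4} the tree has at least four vertices. I first dispose of the case of exactly four vertices: by (P7) the tree must then be the star $K_{1,3}$, and a direct check of the two possibilities for $v_1$ (the center, or a leaf) shows $\mathbb{E}[C_{RA}(\sigma)]\le 5/2=\frac{5}{2}\,C_{OPT}(\sigma)$ (indeed equality holds when $v_1$ is a leaf).

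So assume the tree has at least five vertices. Apply {\sc BlockRoutine} to partition its vertices into blocks; by (P7) every block is a $B_1$-, $B_2$-, $B_3$-, or $B_4$-block, and by Lemma~\ref{LMA:rand_up.spblock_opt} the only block types that can occur, classified by the behaviour of $OPT$, are $B_1^0$, $B_2$ ($=B_2^{010}$), $B_3$, $B_4^{000}$, $B_4^{100}$, and $B_4^{010}$; I further split the $B_1^0$-blocks into $B_{1,0}$- and $B_{1,1}$-blocks. Counting the vertices selected by $OPT$ block by block (none in the $B_1^0$- and $B_4^{000}$-blocks, exactly one in each of the remaining types) gives
\[
C_{OPT}(\sigma)=b_2+b_3+b_4^{100}+b_4^{010}.
\]
For the upper bound on $\mathbb{E}[C_{RA}(\sigma)]$, I sum the per-block bounds of Lemma~\ref{LMA:rand_up.block_cost} over all blocks not containing $v_1$ and add the bound of Lemma~\ref{LMA:rand_up.block_cost_init} for the unique block containing $v_1$; since the latter exceeds the corresponding generic bound by at most $1$, this yields
\[
\mathbb{E}[C_{RA}(\sigma)]\le \tfrac12 b_{1,1}+\tfrac52 b_2+\tfrac32 b_3+3b_4+1 ,
\]
where $b_4=b_4^{000}+b_4^{100}+b_4^{010}$.

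It then remains to check that $\mathbb{E}[C_{RA}(\sigma)]\le\frac52\,C_{OPT}(\sigma)$, i.e.\ after cancelling $\frac52 b_2$ and rearranging,
\[
\tfrac12 b_{1,1}+3b_4^{000}+\tfrac12 b_4^{100}+\tfrac12 b_4^{010}+1\le b_3 .
\]
Here I invoke Lemma~\ref{LMA:rand_up.leafcnd}, which gives $b_3=b_2+3b_4+2-b_{1,0}-b_{1,1}$, together with the two inequalities of Lemma~\ref{LMA:rand_up.b1cnd}, namely $b_{1,0}\le b_2+b_4^{100}+b_4^{010}$ and $b_{1,1}\le b_4^{100}$. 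Substituting the first inequality into the expression for $b_3$ gives $b_3\ge 3b_4^{000}+2b_4^{100}+2b_4^{010}+2-b_{1,1}$, after which the displayed inequality reduces to $\frac32 b_{1,1}\le \frac32 b_4^{100}+\frac32 b_4^{010}+1$, which is immediate from $b_{1,1}\le b_4^{100}$.

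The main obstacle is bookkeeping rather than a new idea: one must make sure that the earlier reductions really do pin down $C_{OPT}(\sigma)=b_2+b_3+b_4^{100}+b_4^{010}$ exactly, and must handle the unique block containing $v_1$ with just the right amount of slack (the ``$+1$'' above) so that the final linear arithmetic, fed by Lemmas~\ref{LMA:rand_up.b1cnd} and~\ref{LMA:rand_up.leafcnd}, closes. The small-tree case ($n=4$) must not be forgotten, since Lemma~\ref{LMA:rand_up.b1cnd} is stated only for $n\ge 5$.
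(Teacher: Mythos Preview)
Your proposal is correct and follows essentially the same approach as the paper: reduce via Lemma~\ref{LMA:rand_up.basic_p1_7} to an input with (P1)--(P7), handle the four-vertex star separately, then combine the per-block cost bounds of Lemmas~\ref{LMA:rand_up.block_cost} and~\ref{LMA:rand_up.block_cost_init} with the counting identities of Lemmas~\ref{LMA:rand_up.b1cnd} and~\ref{LMA:rand_up.leafcnd}. The only cosmetic difference is that the paper manipulates the ratio $\mathbb{E}[C_{RA}]/C_{OPT}$ through a chain of substitutions, whereas you rearrange the target inequality first and then verify it; the ingredients and logic are the same.
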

\begin{proof}
	\ifnum \count10 > 0
	%
	%
	%
	$B_{1}$-
	%
	%
	%
	{\sc BlockRoutine}
	$C_{1}$
	$C_{3}$
	%
	$C_{3}$
	$C_{1}$
	$v$
	$B_{3}$-
	${deg}_{v}(u) = 3$
	$C_{1}$
	%
	%
	$C_{1}$
	%
	%
	${\mathbb E}[C_{RA}(\sigma)] = 5/2$
	%
	%
	$OPT$
	%
	$C_{OPT}(\sigma) \geq 1$
	%
	%
	%
	
	%
	%
	%
	$v_{1}$
	$v_{1}$
	$B_{4}$
	%
	$v_{1}$
	$b'_{2}, b'_{3} \in \{ 0, 1 \}$
	$b'_{2} + b'_{3} \leq 1$
	%
	\[
		{\mathbb E}[C_{RA}(\sigma)] 
			\leq b_{1,1}/2  + 5 b_{2} / 2 + 3 b_{3} / 2 + 3 b_{4} + b'_{2} / 2 + b'_{3}
			\leq b_{1,1}/2  + 5 b_{2} / 2 + 3 b_{3} / 2 + 3 b_{4} + 1
	\]
	%
	\[
		C_{OPT}(\sigma)
			=  b_{2} + b_{3} + b_{4}^{100} + b_{4}^{010}
	\]
	%
	\begin{eqnarray*}
		&& \frac{{\mathbb E}[C_{RA}(\sigma)]}{C_{OPT}(\sigma)}
			\leq 
				\frac{ b_{1,1}/2  + 5 b_{2} / 2 + 3 b_{3} / 2 + 3 b_{4} + 1 }{ b_{2} + b_{3} + b_{4}^{100} + b_{4}^{010} }\\
			&=& 
				\frac{ - 3 b_{1,0} / 2 - b_{1,1} + 4 b_{2} + 15 b_{4} / 2 + 4 }{ - b_{1,0} - b_{1,1} + 2 b_{2} + 3 b_{4} + b_{4}^{100} + b_{4}^{010} + 2 } 
					\mbox{\hspace*{7mm} (
			&\leq& 
				\frac{ - 3 b_{1,0} / 2 + 4 b_{2} + 15 b_{4} / 2 - b_{4}^{100} + 4 }{ - b_{1,0} + 2 b_{2} + 3 b_{4} + b_{4}^{010} + 2 } 
					\mbox{\hspace*{7mm} (
			&\leq& 
				\frac{ - 5 b_{1,0} / 2 + 5 b_{2} + 15 b_{4} / 2 + b_{4}^{010} + 4 }{ - b_{1,0} + 2 b_{2} + 3 b_{4} + b_{4}^{010} + 2 } 
					\mbox{\hspace*{7mm} (
			&=& 
				\frac{5}{2} \cdot
				\frac{ - b_{1,0} + 2 b_{2} + 3 b_{4} + 2 b_{4}^{010} / 5 + 8/5 }{ - b_{1,0} + 2 b_{2} + 3 b_{4} + b_{4}^{010} + 2 } 
			< \frac{5}{2}
	\end{eqnarray*}

	\fi
	\ifnum \count11 > 0
	%
	%
	First, 
	we consider an input $\sigma$ of which 
	the number of vertices of a given tree is four. 
	A combination of blocks composing a tree with four vertices consists of one $B_{1}$-block $C_{1}$ and one $B_{3}$-block $C_{3}$ 
	by Lemma~\ref{LMA:rand_up.spblock_opt}. 
	%
	%
	Since $C_{1}$ does not contain $v_{1}$ by the definition of {\sc BlockRoutine}, 
	$C_{3}$ contains $v_{1}$. 
	Thus, 
	the expected of $RA$ for $C_{3}$ is at most $5/2$
	by Lemma~\ref{LMA:rand_up.block_cost_init}. 
	Let $v$ denote the vertex in $C_{1}$, and 
	let $u$ denote the vertex adjacent to $v$ in $C_{3}$. 
	${deg}_{v}(u) = 3$ 
	by the definition of $B_{3}$-blocks, 
	which means that 
	$C_{1}$ is a $B_{1,0}$-block. 
	Thus, 
	the expected cost for $C_{1}$ is zero by Lemma~\ref{LMA:rand_up.block_cost}. 
	By the above argument, 
	${\mathbb E}[C_{RA}(\sigma)] = 5/2$. 
	On the other hand, 
	$OPT$ clearly selects at least one vertex, 
	that is, 
	$C_{OPT}(\sigma) \geq 1$. 
	Therefore, 
	we have shown the statement of the theorem in the case of a tree with four vertices. 
	Next, 
	we consider the case in which of a graph with at least five vertices. 
	The expected costs of $RA$ for a $B_{2}$-block and a $B_{3}$-block with $v_{1}$ are greater than those for a $B_{2}$-block and a $B_{3}$-block without $v_{1}$ by $1/2$ and one, respectively,  
	by Lemmas~\ref{LMA:rand_up.block_cost} and  \ref{LMA:rand_up.block_cost_init}. 
	Also, 
	$v_{1}$ does not affect the expected cost for $B_{4}$-blocks. 
	Thus, 
	let $b'_{2}$ and $b'_{3}$ denote the numbers of $B_{2}$-blocks and $B_{3}$-blocks with $v_{1}$, respectively. 
	By definition, 
	$b'_{2}, b'_{3} \in \{ 0, 1 \}$ and 
	$b'_{2} + b'_{3} \leq 1$. 
	Then, 
	using Lemma~\ref{LMA:rand_up.block_cost}, 
	we have 
	\[
		{\mathbb E}[C_{RA}(\sigma)] 
			\leq b_{1,1}/2  + 5 b_{2} / 2 + 3 b_{3} / 2 + 3 b_{4} + b'_{2} / 2 + b'_{3}
			\leq b_{1,1}/2  + 5 b_{2} / 2 + 3 b_{3} / 2 + 3 b_{4} + 1. 
	\]
	By the definitions of blocks, we have 
	\[
		C_{OPT}(\sigma)
			=  b_{2} + b_{3} + b_{4}^{100} + b_{4}^{010}. 
	\]
	By the inequality and the equality, 
	we have 
	\begin{eqnarray*}
		&& \frac{{\mathbb E}[C_{RA}(\sigma)]}{C_{OPT}(\sigma)}
			\leq
				\frac{ b_{1,1}/2  + 5 b_{2} / 2 + 3 b_{3} / 2 + 3 b_{4} + 1 }{ b_{2} + b_{3} + b_{4}^{100} + b_{4}^{010} }\\
			&=& 
				\frac{ - 3 b_{1,0} / 2 - b_{1,1} + 4 b_{2} + 15 b_{4} / 2 + 4 }{ - b_{1,0} - b_{1,1} + 2 b_{2} + 3 b_{4} + b_{4}^{100} + b_{4}^{010} + 2 } 
					\mbox{\hspace*{7mm} (by the substitution for $b_{3}$ by Lemma~\ref{LMA:rand_up.leafcnd})}\\
			&\leq& 
				\frac{ - 3 b_{1,0} / 2 + 4 b_{2} + 15 b_{4} / 2 - b_{4}^{100} + 4 }{ - b_{1,0} + 2 b_{2} + 3 b_{4} + b_{4}^{010} + 2 } 
					\mbox{\hspace*{7mm} (by the substitution for $b_{1,1}$ by Eq.~(\ref{LMA:rand_up.b1cnd.b11}))}\\
			&\leq& 
				\frac{ - 5 b_{1,0} / 2 + 5 b_{2} + 15 b_{4} / 2 + b_{4}^{010} + 4 }{ - b_{1,0} + 2 b_{2} + 3 b_{4} + b_{4}^{010} + 2 } 
					\mbox{\hspace*{7mm} (by the substitution for $b_{4}^{100}$ by Eq.~(\ref{LMA:rand_up.b1cnd.b10}))}\\
			&=& 
				\frac{5}{2} \cdot
				\frac{ - b_{1,0} + 2 b_{2} + 3 b_{4} + 2 b_{4}^{010} / 5 + 8/5 }{ - b_{1,0} + 2 b_{2} + 3 b_{4} + b_{4}^{010} + 2 } 
			< \frac{5}{2}. 
	\end{eqnarray*}
	\fi
\end{proof}
\ifnum \count10 > 0
%
%
%

%
\fi
\ifnum \count11 > 0
%
%
Our analysis of $RA$ is exact by the following theorem. 
\fi
\begin{theorem}\label{thm:rand_low_ra}
	\ifnum \count10 > 0
	%
	%
	$RA$
	\fi
	\ifnum \count11 > 0
	%
	%
	The competitive ratio of $RA$ is at least $5/2$. 
	\fi
\end{theorem}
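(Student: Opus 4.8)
The plan is to prove this matching lower bound by exhibiting a single fixed input $\sigma$ on which $\mathbb{E}[C_{RA}(\sigma)]/C_{OPT}(\sigma)$ equals exactly $5/2$. Since $RA$ is a fixed randomized algorithm and we only display one concrete tree together with one concrete revelation order, no argument about the adversary's behavior is needed: it suffices to compute the deterministic costs of $A$ and $B$ on $\sigma$ and average them. Concretely, I would take the star with a center $c$ and three leaves $\ell_{1},\ell_{2},\ell_{3}$, revealed in the order $v_{1}=\ell_{1}$, then $c$ (arriving at $\ell_{1}$), then $\ell_{2}$ (arriving at $c$), then $\ell_{3}$ (arriving at $c$). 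For this tree $OPT$ selects only $c$, so $C_{OPT}(\sigma)=1$.

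The next step is to trace $A$ and $B$ through their case analyses in lockstep. Both select $v_{1}=\ell_{1}$ in Case~1. When $c$ arrives at $\ell_{1}$ we have ${deg}_{c}(\ell_{1})=1\le 2$ and $p(c)=1$ is odd, so by Case~2.2.2 algorithm $A$ selects $c$ while $B$ is directed to the already-selected vertex $\ell_{1}$ and does nothing. When $\ell_{2}$ arrives at $c$ we have ${deg}_{\ell_{2}}(c)=2\le 2$ and $p(\ell_{2})=2$ is even, so by Case~2.2.1 algorithm $A$ is directed to the already-selected vertex $c$ (a no-op) while $B$ selects $\ell_{2}$. When $\ell_{3}$ arrives at $c$ we have ${deg}_{\ell_{3}}(c)=3\ge 3$, so by Case~2.1 both algorithms are directed to the already-selected $c$. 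Thus $D_{A}=\{\ell_{1},c\}$ with $|D_{A}|=2$ and $D_{B}=\{\ell_{1},\ell_{2},c\}$ with $|D_{B}|=3$.

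Since $RA$ runs $A$ or $B$ each with probability $1/2$, this yields $\mathbb{E}[C_{RA}(\sigma)]=\tfrac12(2+3)=\tfrac52$, and therefore $\mathbb{E}[C_{RA}(\sigma)]/C_{OPT}(\sigma)=5/2$, which is the claimed bound. (The same example already appears inside the proof of Theorem~\ref{thm:rand_up}, where the four-vertex tree is shown to force expected cost exactly $5/2$ against optimum $1$; here we simply record it as a lower bound on the competitive ratio of $RA$.) The computation is entirely routine, so there is no real obstacle; the only points deserving a line of care are keeping the parities of $p(\cdot)$ and the degree thresholds straight while running $A$ and $B$ simultaneously, and checking that the dominating-set invariant is never violated along the way — in particular that $\{\ell_{1},\ell_{2}\}$ dominates the revealed subtree $\{\ell_{1},c,\ell_{2}\}$ at the moment $\ell_{2}$ is revealed, and that $\{\ell_{1},c\}$ dominates the whole star at the end.
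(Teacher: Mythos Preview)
Your proof is correct and uses exactly the same four-vertex star and revelation order as the paper's own proof. One small wording slip: when $\ell_{3}$ arrives, $c$ is already selected by $A$ but \emph{not} yet by $B$, so Case~2.1 is a genuine first selection of $c$ for $B$ rather than a no-op; your final tally $D_{B}=\{\ell_{1},\ell_{2},c\}$ is nonetheless correct.
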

\begin{proof}
	\ifnum \count10 > 0
	%
	%
	%
	$v_{2}$
	%
	%
	$A$
	$B$
	$OPT$
	%
	$\frac{ {\mathbb E}[C_{RA}(\sigma)] }{ C_{OPT}(\sigma) } = 5/2$
	%
	
	%
	\fi
	\ifnum \count11 > 0
	%
	%
	Consider the following input $\sigma$. 
	A vertex $v_{1}$ is first revealed, and 
	the second vertex $v_{2}$ arrives at $v_{1}$. 
	The third vertex $v_{3}$ and the fourth vertex $v_{4}$ arrive at $v_{2}$. 
	Then, 
	$A$ selects $v_{1}$ (Case~1) and $v_{2}$ (Case~2.2.2). 
	$B$ selects $v_{1}$ (Case~1), $v_{2}$ (Case~2.1) and $v_{3}$ (Case~2.2.1). 
	$OPT$ selects $v_{2}$. 
	Therefore, 
	${\mathbb E}[C_{RA}(\sigma)] / C_{OPT}(\sigma) = 5/2$. 
	\fi
\end{proof}

\section{Randomized Lower Bound} \label{sec:rand_low}
\ifnum \count10 > 0
%
%

%

%
\fi
\ifnum \count11 > 0
%
%

%
\fi
\begin{LMA} \label{LMA:rand_low.1}
	\ifnum \count10 > 0
	%
	%
	$p_u (p_v)$
	%
	$p_u + p_v \geq 1$
	%
	
	%
	\fi
	\ifnum \count11 > 0
	%
	%
	Consider a randomized online algorithm $RON$. 
	Suppose that a vertex $v$ arrives at a vertex $u$. 
	Let $p_u (p_v)$ denote the probability that $u \in D_{RON}(v) (v \in D_{RON}(v))$. 
	Then, 
	$p_u + p_v \geq 1$. 
	\fi
\end{LMA}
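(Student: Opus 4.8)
The plan is to exploit the fact that when $v$ arrives at $u$, the vertex $v$ has degree exactly one in the revealed subtree: by the model, every newly revealed vertex (other than $v_1$) has precisely one edge to a previously revealed vertex, so $u$ is the unique neighbour of $v$ at the moment $v$ is revealed. Since $D_{RON}(v)$ must be a dominating set of the graph revealed up to that point, $v$ must be dominated, i.e. either $v \in D_{RON}(v)$ or some neighbour of $v$ lies in $D_{RON}(v)$; the only candidate neighbour is $u$.

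Concretely, I would fix the randomness of $RON$ and argue pointwise on the sample space: let $\mathcal{E}_u$ be the event that $u \in D_{RON}(v)$ and $\mathcal{E}_v$ the event that $v \in D_{RON}(v)$. For any realization of the random choices, if $v \notin D_{RON}(v)$, then because $D_{RON}(v)$ dominates $v$ and $u$ is $v$'s only neighbour, we must have $u \in D_{RON}(v)$. Hence $\mathcal{E}_u \cup \mathcal{E}_v$ is the whole sample space, so by the union bound
\[
  p_u + p_v = \Pr[\mathcal{E}_u] + \Pr[\mathcal{E}_v] \ge \Pr[\mathcal{E}_u \cup \mathcal{E}_v] = 1 ,
\]
which is the claim.

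There is essentially no obstacle here; the only point to state carefully is why $u$ is the unique neighbour of $v$ at the time $D_{RON}(v)$ is formed, which is immediate from the description of the input model (a revealed subtree is connected and a new vertex brings exactly one edge). I expect the write-up to be just a few lines.

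\begin{proof}
	When $v$ arrives at $u$, the vertex $v$ has exactly one incident edge in the graph revealed so far, namely $\{u,v\}$, since every newly revealed vertex except $v_1$ is adjacent to exactly one previously revealed vertex. Fix an arbitrary realization of the random choices made by $RON$ up to the revelation of $v$, and suppose $v \notin D_{RON}(v)$. Because $D_{RON}(v)$ is a dominating set of the graph revealed up to the revelation of $v$, there must exist a neighbour of $v$ in $D_{RON}(v)$; the only neighbour of $v$ is $u$, so $u \in D_{RON}(v)$. Thus in every realization at least one of the events ``$u \in D_{RON}(v)$'' and ``$v \in D_{RON}(v)$'' occurs, and by the union bound $p_u + p_v \ge 1$.
\end{proof}
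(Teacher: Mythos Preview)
Your proof is correct and uses essentially the same idea as the paper: at the moment $v$ is revealed its only neighbour is $u$, so dominating $v$ forces $u\in D_{RON}(v)$ or $v\in D_{RON}(v)$, and the union bound gives $p_u+p_v\ge 1$. The paper's version additionally conditions on whether $u$ was already selected before $v$ arrived (introducing an auxiliary probability $p'_u$), but this case split is unnecessary and your direct argument is cleaner.
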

\begin{proof}
	\ifnum \count10 > 0
	%
	%
	$u \in D_{RON}(u)$
	%
	$u \in D_{RON}(u)$
	$p'_u$
	(
	%
	%
	$u \notin D_{RON}(u)$
	$v$
	$RON$
	%
	$u \in D_{RON}(v)$%
	%
	%
	$p'_u$
	$u \notin D_{RON}(u)$
	%
	$u \notin D_{RON}(u)$
	$u \in D_{RON}(v)$%
	$1-p'_u$
	%
	%
	$p_u + p_v \geq p'_u + 1-p'_u =1$
	%
	
	%
	\fi
	\ifnum \count11 > 0
	%
	%
	Let $p'_u$ be the probability that $u \in D_{RON}(u)$. 
	Since $RON$'s selection is irrevocable, 
	the probability that $u \in D_{RON}(u)$ and $u \in D_{RON}(v)$ is greater than or equal to $p'_u$ 
	(Fact~(a)). 
	Next, 
	we consider the case in which $u \notin D_{RON}(u)$. 
	$RON$ must select $u$ or $v$ to construct a dominating set immediately after $v$ is revealed. 
	Thus, 
	the probability that 
	either $u \in D_{RON}(v)$ or $v \in D_{RON}(v)$ is one. 
	By the definition of $p'_u$, 
	the probability that $u \notin D_{RON}(u)$ is $1 - p'_u$. 
	Hence, 
	the probability that 
	both $u \notin D_{RON}(u)$ and 
	either $u \in D_{RON}(v)$ or $v \in D_{RON}(v)$ is at least $1-p'_u$. 
	This probability together with Fact~(a) shows that 
	$p_u + p_v \geq p'_u + 1-p'_u =1$. 
	\fi
\end{proof}
\ifnum \count10 > 0
%
%

%

%
\fi
\ifnum \count11 > 0
%
%

%
\fi
\begin{theorem}\label{thm:rand_low}
	\ifnum \count10 > 0
	%
	%
	$4/3$
	\fi
	\ifnum \count11 > 0
	%
	%
	The competitive ratio of any randomized online algorithm is at least $4/3$. 
	\fi
\end{theorem}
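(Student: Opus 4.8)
The plan is to show that against an oblivious adversary every randomized online algorithm $RON$ fails to be $c$-competitive for any $c<4/3$, by presenting one of two very small inputs depending only on $RON$'s a~priori behavior. First I would reveal $v_1$. Since the revealed graph is then the isolated vertex $v_1$, the only dominating set available is $\{v_1\}$, so $RON$ puts $v_1$ into its set with probability $1$, and by irrevocability $v_1$ stays in $D_{RON}$ for the rest of the input. Next I would reveal $v_2$ arriving at $v_1$ and set $q:=\Pr[\,v_2\in D_{RON}(v_2)\,]$; this is a fixed number determined by $RON$.

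The adversary then branches on $q$. If $q\ge 1/3$, the input stops, so the tree is the single edge $\{v_1,v_2\}$ and $C_{OPT}=1$; using that $v_1\in D_{RON}$ always and that selections are irrevocable, ${\mathbb E}[C_{RON}]\ge 1+q\ge 4/3$. If $q<1/3$, I would reveal one more vertex $v_3$ arriving at $v_2$ and stop; the tree is the path $v_1 - v_2 - v_3$, which $\{v_2\}$ dominates, so $C_{OPT}=1$. Applying Lemma~\ref{LMA:rand_low.1} to $v_3$ arriving at $v_2$ gives $\Pr[\,v_2\in D_{RON}(v_3)\,]+\Pr[\,v_3\in D_{RON}(v_3)\,]\ge 1$, and since $v_3$ is the last vertex $D_{RON}(v_3)$ is the final dominating set, which also always contains $v_1$; hence ${\mathbb E}[C_{RON}]\ge 1+\bigl(\Pr[v_2\in D_{RON}(v_3)]+\Pr[v_3\in D_{RON}(v_3)]\bigr)\ge 2\ge 4/3$. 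In both branches the competitive ratio is at least $4/3$, and as $RON$ was arbitrary the theorem follows.

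I do not expect a real obstacle here; the only delicate points are respecting the oblivious-adversary model (the branch is selected from the single real number $q$, which is a property of the algorithm and not of any realized random bit, so the chosen input is a fixed deterministic instance) and invoking irrevocability correctly so that probabilities read off at intermediate times lower-bound the final expected cost, together with the fact that the first revealed vertex is forced. One may also note that the second branch in fact forces ratio $2$ regardless of $q$, so the value $4/3$ is conservative and comes entirely from the inequality $1+q\ge 4/3$ in the regime $q\ge 1/3$.
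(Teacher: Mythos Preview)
Your proof is correct for the strict competitive ratio, and it is substantially simpler than the paper's argument. The paper does not use a two- or three-vertex instance; instead it reveals a path $v_1,\ldots,v_{2m}$, records the selection probabilities $p_j=\Pr[v_j\in D_{RON}(v_{2m})]$ after the whole path is out, and then for each consecutive pair $(v_{2\ell-1},v_{2\ell})$ either does nothing (if both probabilities are at least $2/3$) or attaches one extra leaf to the less-likely endpoint. In either case the pair (or triple) contributes expected cost at least $4/3$ while $OPT$ pays exactly $1$ per pair, so ${\mathbb E}[C_{RON}]\ge 4m/3$ against $C_{OPT}\le m$.

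The difference in what the two routes buy is exactly the one you hint at in your last paragraph. Your instance has $C_{OPT}=1$, so any additive slack in the competitiveness definition kills the bound; your argument proves the strict ratio is at least $4/3$ (indeed at least $2$, as you observe, since the three-vertex path alone already forces cost $2$), but says nothing about the asymptotic ratio. The paper's construction has $C_{OPT}=m$ for arbitrary $m$, which is why the same proof yields the asymptotic lower bound the paper advertises alongside the strict one. Conversely, your argument is much shorter and in fact the branching on $q$ is unnecessary for the stated theorem: the fixed input $v_1\!-\!v_2\!-\!v_3$ already forces $v_1\in D$ and, by Lemma~\ref{LMA:rand_low.1}, $\Pr[v_2\in D]+\Pr[v_3\in D]\ge 1$, hence expected cost at least $2$ against $C_{OPT}=1$.
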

\begin{proof}
	\ifnum \count10 > 0
	%
	%
	%
	%
	$m$
	%
	%
	%
	%
	%
	$RON$
	%
	
	%
	$i$
	%
	%
	$p_j$
	%
	%
	$v_{2m}$
	%
	
	%
	\noindent
	{\bf\boldmath Case~1 ($\min \{ p_{2 \ell - 1}, p_{2 \ell} \} \geq 2/3$): }
	%
	%
	$v_{2 \ell - 1}$
	%
	
	%
	\noindent
	{\bf\boldmath Case~2 ($\min \{ p_{2 \ell - 1}, p_{2 \ell} \} < 2/3$):}
	%
	$\ell_{1} = 2 \ell - 1$
	$\ell_{2} = 2 \ell - 1$
	%
	%
	$v_{\ell_1}$
	%
	%
	$v_{\ell_1} \in D_{RON}(u_{\ell_1})$%
	$u_{\ell_1} \in D_{RON}(u_{\ell_1})$%
	(
	%
	%
	$p_{2 \ell - 1} + p_{2 \ell} = p_{\ell_1}  + p_{\ell_2} \geq 1$
	%
	Case~2
	$p_{\ell_1} < 2/3$
	$p_{\ell_2} \geq 1/3$
	%
	$v_{\ell_1}$
	$1 + 1/3 > 4/3$
	Case~1
	Case~2
	${\mathbb E}[C_{RON}(\sigma)] \geq 4x/3 + 4(m-x)/3 = 4m/3$
	%
	
	%
	%
	%
	$OFF$
	$OFF$
	%
	%
	$OFF$
	%
	%
	$C_{OPT}(\sigma) \leq C_{OFF}(\sigma) = m$
	%
	%
	${\mathbb E}[C_{RON}(\sigma)] / C_{OPT}(\sigma) \geq 4/3$
	%
	
	%
	\fi
	\ifnum \count11 > 0
	%
	%
	Consider a randomized online algorithm $RON$ for the following input $\sigma$. 
	Let $m$ be any positive integer. 
	We sketch an adversary constructing $\sigma$. 
	First, 
	the adversary gives a line of $2m$ vertices to $ON$. 
	Then, 
	for every two consecutive vertices on the line, 
	the adversary determines whether an additional vertex will arrive at one of the two vertices. 
	Specifically, 
	if the probability that $RON$ selects at least one of the two vertices is low, 
	the adversary makes a new vertex arrive at the vertex. 
	For each $i = 1, 2, \ldots, 2m$, 
	the $i$-th vertex $v_{i}$ arrives at $v_{i-1}$. 
	For each $j = 1, 2, \ldots, 2m$, 
	let $p_j$ be the probability that $v_j \in D_{RON}(v_{2m})$. 
	Next, 
	for each $\ell = 1, 2, \ldots, m$, 
	vertices are revealed after the revelation of $v_{2m}$ in the following two cases. 
	\noindent
	{\bf\boldmath Case~1 ($\min \{ p_{2 \ell - 1}, p_{2 \ell} \} \geq 2/3$):}
	A new vertex does not arrive at either $v_{2 \ell - 1}$ or $v_{2 \ell}$. 
	Since $RON$'s selection is irrevocable, 
	the expected cost of $RON$ for $v_{2 \ell - 1}$ and $v_{2 \ell}$ is at least $2 \cdot 2/3 = 4/3$. 
	\noindent
	{\bf\boldmath Case~2 ($\min \{ p_{2 \ell - 1}, p_{2 \ell} \} < 2/3$):}
	If $p_{2 \ell - 1} \leq p_{2 \ell}$, 
	then define $\ell_{1} = 2 \ell - 1$ and $\ell_{2} = 2 \ell$. 
	Otherwise, 
	define $\ell_{2} = 2 \ell - 1$ and $\ell_{1} = 2 \ell$. 
	Then, 
	a vertex $u_{\ell_1}$ arrives at $v_{\ell_1}$. 
	By Lemma~\ref{LMA:rand_low.1}, 
	the probability that 
	$v_{\ell_1} \in D_{RON}(u_{\ell_1})$ or $u_{\ell_1} \in D_{RON}(u_{\ell_1})$ is at least one
	(Fact (a)). 
	Moreover, 
	$p_{2 \ell - 1} + p_{2 \ell} = p_{\ell_1}  + p_{\ell_2} \geq 1$ also holds. 
	Since $p_{\ell_1} < 2/3$ by the condition of Case~2, 
	$p_{\ell_2} \geq 1/3$. 
	Hence, 
	the expected cost for $v_{\ell_1}$, $v_{\ell_2}$ and $u_{\ell_1}$ is at least $1 + 1/3 = 4/3$. 
	Let $x$ be the number of such $\ell$ with applying Case~1. 
	Thus, 
	the number of such $\ell$ with applying Case~2 is $m - x$. 
	${\mathbb E}[C_{RON}(\sigma)] \geq 4x/3 + 4(m-x)/3 = 4m/3$  
	by the above argument. 
	Next, 
	we consider an offline algorithm $OFF$ to give an upper bound on the cost of $OPT$. 
	For such $\ell$ with applying Case~1, 
	$OFF$ selects $v_{2 \ell - 1}$ and 
	for such $\ell$ with applying Case~2, selects $v_{\ell_{1}}$. 
	Thus, 
	$OFF$ selects $m$ vertices, and 
	the set of the $m$ selected vertices is clearly a dominating set. 
	By the optimality of $OPT$, 
	$C_{OPT}(\sigma) \leq C_{OFF}(\sigma) = m$. 
	Therefore, 
	${\mathbb E}[C_{RON}(\sigma)] / C_{OPT}(\sigma) \geq 4/3$. 
	\fi
\end{proof}

\section{Conclusions}
\ifnum \count10 > 0
%
%
%
%
%
%

%
(i)
\cite{ES2002,BEFKL2016}
%
%
(ii)
%
%
(iii)
%

%
\fi
\ifnum \count11 > 0
%
%
In this paper, 
we have conducted research on algorithms for an online variant of the minimum dominating set problem on trees and obtained the following results: 
First, 
we have shown that the competitive ratio of any deterministic algorithm is at least 3, 
which matches the upper bound shown in \cite{ES2002,BEFKL2016}. 
Then, 
we have designed an algorithm whose competitive ratio is exactly $5/2$ using randomization. 
Furthermore, 
we have shown that the competitive ratio of any randomized algorithm is at least $4/3$. 
We conclude this paper by providing open questions: 
(i)
Online algorithms for dominating sets on several graph classes have been discussed in \cite{ES2002,BEFKL2016} 
and optimal online algorithms have not yet known on some classes. 
Then, 
it is interesting to consider online algorithms on other classes in addition to them. 
(ii)
Our algorithm $RA$ is the first randomized algorithm for the online dominating set problem on trees
and can achieve a competitive ratio smaller than that of any deterministic algorithm. 
Can we also obtain a better ratio on other classes using randomization?
(iii)
The gap between the randomized bounds shown in this paper is still large and thus, 
it is an obvious open problem to close the gap. 
%
\fi
%

%



\begin{thebibliography}{99}

\bibitem{AB98}
A. Borodin and R. El-Yaniv,
``Online Computation and Competitive Analysis,''
{\it Cambridge University Press}, Cambridge, 
1998.

\bibitem{BEFKL2016}
J. Boyar, S. J. Eidenbenz, L. M. Favrholdt, M. Kotrb\v{c}ik and K. S. Larsen, 
``Online dominating set,'' 
{\it In Proc. of the 15th Scandinavian Symposium and Workshops on Algorithm Theory},
pp. 21:1--21:15, 2016. 

\bibitem{DW2013}
D.-Z. Du and P.-J. Wan,
``Connected Dominating Set: Theory and Applications,''
{\it Springer}, New York, 
2013.

\bibitem{ES2002}
S. J. Eidenbenz, 
``Online dominating set and variations on restricted graph classes, ''
Technical report, Institute of Theoretical Computer Science, ETH Z\"{u}rich, 
2002. 

\bibitem{FKW2004}
F. V. Fomin, D. Kratsch, and G. J. Woeginger, 
``Exact (Exponential) Algorithms for the Dominating Set Problem,'' 
{\it In Proc. of the 30th international conference on Graph-Theoretic Concepts in Computer Science},
pp. 245--256, 2004. 

\bibitem{FGK2005}
F.V. Fomin, F. Grandoni, and D. Kratsch, 
``Some new techniques in design and analysis of exact (exponential) algorithms,'' 
{\it Bulletin of the EATCS},
pp. 47--77, 2005. 

\bibitem{HY2013}
M. Henning and A. Yao,
``Total Domination in Graphs,''
{\it Springer}, New York, 
2013.

\bibitem{GH2013}
W. Goddard and M. A. Henning, 
``Independent domination in graphs: A survey and recent results, ''
{\it Discrete Mathematics}, 
Vol. 313, No. 7, pp. 839--854, 2013. 

\bibitem{GF2006}
F. Grandoni, 
``Independent domination in graphs: A survey and recent results, ''
{\it Journal of Discrete Algorithms}, 
Vol. 4, No. 2, pp. 209--214, 2006. 

\bibitem{IY2011}
Y. Iwata, 
``A Faster Algorithm for Dominating Set Analyzed by the Potential Method,'' 
{\it In Proc. of the 6th international conference on Parameterized and Exact Computation},
pp. 41--54, 2011. 

\bibitem{KT1997}
G. H. King and W. G. Tzeng, 
``On-line algorithms for the dominating set problem, ''
{\it Information Processing Letters}, 
Vol. 61, pp. 11--14, 1997. 

\bibitem{FP2001}
F. Plastria, 
``Static competitive facility location: An overview of optimisation approaches, ''
{\it European Journal of Operational Research}, 
Vol. 129, No. 3, pp. 461--470, 2001. 

\bibitem{RB2011}
J. M. M. van Rooij and H. L. Bodlaender, 
``Exact algorithms for dominating set, ''
{\it Discrete Applied Mathematics}, 
Vol. 159, No. 17, pp. 2147--2164, 2011. 

\bibitem{SI2008}
I. Schiermeyer, 
``Exact algorithms for dominating set, ''
{\it Discrete Applied Mathematics}, 
Vol. 156, No. 17, pp. 3291--3297, 2008.

\bibitem{DS85}
D. D. Sleator and R. E. Tarjan,
``Amortized efficiency of list update and paging rules,''
{\it Communications of the ACM}, Vol. 28, No. 2,
pp. 202--208, 1985.

\bibitem{YP2005}
J. Y. Yu and P. H. J. Chong, 
``A survey of clustering schemes for mobile ad hoc networks,''
{\it IEEE Communications Surveys and Tutorials}, 
Vol. 7, No. 1, pp. 32--48, 2005. 

\end{thebibliography}
\end{document}